\definecolor{darkred}{rgb}{0.5,0.2,0.2}
\theoremstyle{mytheorem}
\theoremstyle{myremark}
\theoremstyle{definition}
\theoremstyle{remark}
\def\bi{\begin{itemize}}
\def\ei{\end{itemize}}
\numberwithin{equation}{section}
\definecolor{applegreen}{rgb}{0.55, 0.71, 0.0}
\definecolor{ballblue}{rgb}{0.13, 0.67, 0.8}
\definecolor{mauvetaupe}{rgb}{0.57, 0.37, 0.43}
\definecolor{uclagold}{rgb}{1.0, 0.7, 0.0}
\definecolor{amethyst}{rgb}{0.6, 0.4, 0.8}
\definecolor{alizarin}{rgb}{0.82, 0.1, 0.26}
\definecolor{tiffanyblue}{rgb}{0.04, 0.73, 0.71}
\definecolor{royalpurple}{rgb}{0.47, 0.32, 0.66}
\definecolor{mistyrose}{rgb}{1.0, 0.89, 0.88}
\definecolor{nadeshikopink}{rgb}{0.96, 0.68, 0.78}
\definecolor{pink}{rgb}{1.0, 0.75, 0.8}
\definecolor{unmellowyellow}{rgb}{1.0, 1.0, 0.4}
\definecolor{uscgold}{rgb}{1.0, 0.8, 0.0}
\definecolor{tearose(rose)}{rgb}{0.96, 0.76, 0.76}
\definecolor{bittersweet}{rgb}{1.0, 0.44, 0.37}
\definecolor{apricot}{rgb}{0.98, 0.81, 0.69}
\definecolor{bisque}{rgb}{1.0, 0.89, 0.77}
\definecolor{cosmiclatte}{rgb}{1.0, 0.97, 0.91}
\definecolor{blue(ncs)}{rgb}{0.0, 0.53, 0.74}
\definecolor{darkred}{rgb}{0.55, 0.0, 0.0}
\definecolor{coralred}{rgb}{1.0, 0.25, 0.25}
\definecolor{darkgreen}{rgb}{0.0, 0.2, 0.13}
\definecolor{darkred}{RGB}{139,0,0}
\definecolor{darkgreen}{RGB}{0,128,85} 
\newtheoremstyle{colored}
  {3pt} 
  {3pt} 
  {\itshape} 
  {} 
  {\color{darkred}\bfseries} 
  {.} 
  {0.5em} 
  {} 
\theoremstyle{colored}
\renewenvironment{proof}[1][\proofname]{%
  \par
  \pushQED{\qed}%
  \normalfont
  \topsep6\p@\@plus6\p@\relax
  \trivlist
  \item[\hskip\labelsep
        {\color{darkgreen}\bfseries #1\@addpunct{.}}]\ignorespaces
}{%
  \popQED\endtrivlist\@endpefalse
}
\newtheorem{assumption}{Assumption}[section]
\newtheorem{lemma}[assumption]{Lemma}
\newtheorem{theorem}[assumption]{Theorem}
\newtheorem{example}[assumption]{Example}
\newif\ifi
\begin{document}
\title{Transfer Learning (Il)liquidity} 

\author{Andrea Conti\thanks{
Department of Social and Economic Sciences,
Sapienza University of Rome, 
00185 Rome, Italy 
E-mail: \url{andrea.conti@uniroma1.it}.}
\and
Giacomo Morelli\thanks{
Corresponding author:
Department of Statistical Sciences, 
Sapienza University of Rome, 
00185 Rome, Italy.
E-mail: \url{giacomo.morelli@uniroma1.it}}
}

\maketitle

\begin{center}
\textbf{Abstract}
\end{center}

\begin{flushleft}
\newcommand{\leftindent}[1]{\leftskip=#1}
\leftindent{1cm}
\newcommand{\rightindent}[1]{\rightskip=#1}
\rightindent{1cm}

\small



The estimation of the Risk Neutral Density (RND) implicit in option prices is challenging, especially in illiquid markets. We introduce the Deep Log-Sum-Exp Neural Network, an architecture that leverages Deep and Transfer learning to address RND estimation in the presence of irregular and illiquid strikes. We prove key statistical properties of the model and the consistency of the estimator. We illustrate the benefits of transfer learning to improve the estimation of the RND in severe illiquidity conditions through Monte Carlo simulations, and we test it empirically on SPX data, comparing it with popular estimation methods. Overall, our framework shows recovery of the RND in conditions of extreme illiquidity with as few as three option quotes.

\end{flushleft}

\vspace{0.15 cm}
\begin{flushleft}
\newcommand{\leftindent}[1]{\leftskip=#1}
\leftindent{0.5cm}
\newcommand{\rightindent}[1]{\rightskip=#1}
\rightindent{0.5cm}
\textit{Keywords: Risk Neutral Density, Illiquidity, Transfer Learning, Neural Network, Deep Learning.}  \\
\textit{JEL classification: G12, G13, C45, C53, C58, C63.} 

\end{flushleft}


\newpage

\section{Introduction}

The risk-neutral density (RND)  describes the risk-neutral distribution of prices at a specified horizon and combines expectations with the market risk aversion. The correct estimation of the RND is central in asset pricing and financial risk management as it is used for pricing, hedging, construction of optimal portfolios, and market timing (\cite{kostakis2011market}). This importance is amplified when option markets are illiquid and option quotes are scarce, since the resulting market incompleteness leaves infinitely many risk-neutral densities admissible (\cite{almeida2023nonparametric}). In addition, as the expiration approaches the bid-ask spread increases (\cite{hsieh2019volatility}), reducing the liquidity and making the recovery of the RND harder when it matters most.

Classical methods to estimate the RND fall into two broad classes: (i) parametric methods assume a specific RND shape or distributional form and estimate the parameters from option quotes; (ii) nonparametric methods estimate the RND by minimizing a likelihood or loss function. Parametric approaches can be further distinguished into expansion methods (\cite{JARROW1982347}), generalized distribution and moment methods (\cite{garcia2011estimation}), and mixture methods (\cite{giacomini2008mixtures},  \cite{li2024parametric}). Nonparametric methods consist of maximum entropy approaches (\cite{bondarenko2003estimation},  \cite{rompolis2010retrieving}), kernel regression estimations (\cite{ait1998nonparametric}, \cite{ait2003nonparametric},  \cite{FENG20161}), and curve-fitting methods (\cite{lu2021sieve}, \cite{frasso2022direct}).

The dense literature on RND estimation reflects the difficulty of the problem. Nonetheless, most approaches still rely on the assumption of market completeness, and the estimation of the RND in the presence of liquidity frictions remains unexplored, despite being a major challenge 
(\cite{anthonisz2017asset},  \cite{glebkin2023illiquidity}).

In practice, 
quotes are available only at discrete and irregular strikes, even for the most liquid and traded stock options whereas much of the literature assumes frictionless and sufficiently broad option markets, restricting the estimation of the RND to a fine grid of strikes over a compact set. For example, a non-exhaustive list of examples covers approaches such as parametric expansion, generalized and mixture methods (\cite{rompolis2008recovering}, \cite{li2024parametric}); and nonparametric methods based on curve fitting approaches (\cite{monteiro2008recovering}, \cite{lai2014comparison}, \cite{frasso2022direct}). In other cases, the problem of data sparsity is confined to the tails (\cite{bollinger2023principled}) and managed by extrapolation or non-arbitrage constraints. In any case, there is no ad-hoc method to estimate the RND in a situation of severe and structural illiquidity. 

Furthermore, RND extraction is highly sensitive to grid design and numerical conditioning so that irregular strike spacing and the use of global or high-degree polynomial bases make the fit ill-conditioned (\cite{ait2003nonparametric}, \cite{Jackwerth2004}). Consequently, approaches that infer the RND by twice differentiating the pricing function with respect to the strike (\cite{Litzenberger1978}) amplify market noise and may lead to unstable results in settings with sparse and irregular quotes. It is essential to note that market noise, in addition to rendering highly sensitive fitting approaches such as entropy-based methods undesirable (\cite{FENG20161}), can also lead to violation of asset-pricing theory and no-arbitrage principles. For example, curve-fitting approaches first interpolate the implied volatility curve and then derive the RND from the fitted curve. However, when market quotes are sparse and noisy, these methods can yield concave shapes.

We develop a framework to estimate the RND from European option prices in a setup of irregular, discrete, and illiquid quotes, filling the aforementioned gaps. We propose a deep learning model, Deep Log-Sum-Exp Neural Network (Deep-LSE), and prove key statistical properties such as its capacity to approximate any convex function and the consistency of the estimator. Our approach relies on transfer learning to overcome the difficulties of estimating the RND in conditions of illiquidity. We first train the model on a liquid proxy with dense, reliable option quotes, then transfer the learned structure to the illiquid target market, where a light fine-tune on sparse observations leverages the proxy-induced prior to stabilize pricing and for estimating the RND. We test our framework in a simulation environment using the \textcite{bates1996jumps}, \textcite{kou2002jump}, \textcite{andersen2002empirical} and Three-Factor double exponential stochastic volatility jump diffusion models (\cite{andersen2015risk}), and empirically test it on SPX option data. We first estimate the ground truth RND from the full, liquid cross-section. We then emulate illiquidity by censoring most option quotes and fit our model on this reduced sample using transfer learning, hence enforcing the conditions of an illiquid market. Comparing the ground truth RND to the illiquid-fit RND reveals the model robustness under severe illiquidity and provides a clear visual check of model performance. Overall, our framework allows for the estimation of the RND in a situation of extreme market illiquidity, having as few as three option quotes available.

We contribute to the literature in two ways. 
First, from a modelling perspective, our work contributes to a novel nonparametric approach to estimate the RND of illiquid option prices. Specifically, the combination of deep learning and transfer learning theory to address illiquidity is an approach new to the literature. We fit the implied volatility curve through Deep-LSE. Then, we use transfer learning to address the challenge of irregular, discrete, and illiquid quotes. The model learns on a liquid proxy and transfers the knowledge to the illiquid options market. We, thus, enrich the recent literature on nonparametric risk-neutral density estimation (\cite{dalderop2020nonparametric}, \cite{almeida2023nonparametric}, \cite{qu2025estimating}) by introducing a novel deep-learning model that is specifically constructed for illiquid markets, rather than relying on methods developed for liquid conditions. Overall, the Deep Log-Sum-Exp Neural Network is a novel machine learning algorithm for regression tasks. It builds on the individual strengths of the Input Convex Neural Network (\cite{amos2017input}) and the Log-Sum-Exp class of functions (\cite{calafiore2019log}) to obtain a final model that ensures convexity and a deep, multilayer, Neural Network architecture. The latter is a crucial characteristic during the transfer learning phase where a deep architecture helps to embed knowledge and makes the transfer more effective.

Second, on theoretical grounds, we prove that the estimator guarantees convexity in inputs, which is a crucial feature to interpolate implied volatilities. This characteristic becomes crucial in illiquid market conditions, where the interpolation of the implied volatility function would otherwise be inadequately constrained due to the absence of dense market option quotes. We obtain analytical bounds that link the Deep-LSE model with the max-affine class function, which allows us to prove the Universal Approximation theorem for the Deep-LSE. This result establishes that the Deep-LSE architecture is expressive enough to approximate the implied volatility curve while preserving the convex structure required for stable fitting and interpolation even in conditions of extreme illiquidity. In addition, we prove the consistency of the estimator.

The rest of the paper is organized as follows. In Section \ref{section_dnn_tl}, we present the methodology of transfer and deep learning to estimate the RND in illiquid markets. We show in Section \ref{theoretical_framework} the theoretical framework of the Deep Log-Sum-Exp Neural Network, and its theoretical properties. In Section \ref{simulation_section}, we perform a Monte Carlo simulation analysis to evaluate our framework. Specifically, we simulate the Bates model and emulate the condition of an illiquid market. Then, we estimate the Deep-LSE model and perform transfer learning to fit it on illiquid strikes, and compare the result with quadratic splines. In Section \ref{empa_section}, we apply our framework to SPX option data to highlight the benefits of the method in illiquid market conditions.

\newpage

\section{Deep Transfer Learning for illiquid RND Estimation of European options}\label{section_dnn_tl}

\noindent In a complete and arbitrage-free market, \textcite{COX1976} show that the value $V_t$ of a European option with payoff function $\Phi(S_t;K)$ at time $t$, with expiry $T$, term $\tau = T-t$ and strike $K$, is 
\begin{equation*}
  V_t(K,T) = e^{-r\tau} \int_{0}^{\infty} \Phi(S_T;K)\, f_{t,T}(S_T)\, dS_T,
  \label{eq:callprice}
\end{equation*}
where $r$ is the risk-free rate, $S_T$ is the terminal price of the underlying and $f_{t,T}$ is the terminal risk-neutral distribution of the underlying equity. For European options, the payoff is $\Phi(S_T;K)=(S_T-K)^+$ for calls and $\Phi(S_T;K)=(K-S_T)^+$ for puts. Taking the second derivative with respect to strike (\cite{Litzenberger1978}) yields the risk-neutral density
\begin{equation*}
  e^{r\tau} \frac{\partial^{2} V_t(K,T)}{\partial K^{2}} = f_{t,T}(K).
  \label{eq:BL}
\end{equation*}

In theory, one needs a continuum of option prices across strike levels for a given term in order to estimate the RND. However, only a discrete set of strikes is listed and actively traded, and many of these options suffer from low trading activity and wide bid–ask spreads. As a result, the available quotes may be noisy and sparse, particularly for deep in–the–money or deep out–of–the–money strikes. The illiquidity that is common in option market makes it hard to construct a stable and continuous pricing function, and this complicates the estimation of the underlying risk-neutral density.

We leverage transfer learning to overcome the challenges posed by severe market illiquidity. In general, transfer learning involves first training a model, and then using the pre-trained model and fine-tuning it on a different dataset, resulting in modifications to the original model. This allows the model to utilize information learned during the initial training phase, which is especially useful when the new setting has limited data. In our setup, we estimate the implied RND for an illiquid options market using information transferred from a similar and liquid market. The deep and transfer learning–based approach proceeds in two phases:

\begin{itemize}

    \item[(i)] First step recovery. We introduce Deep-LSE to model the implied volatility function in a liquid option market, where the underlying (proxy) asset closely resembles that of the illiquid option market of interest. 
    
    \item[(ii)] Second step recovery. We fine-tune the learned implied volatility function with the data available from the illiquid option market. This method allows us to model accurately the implied volatility function of the illiquid option market. The Deep-LSE has learned in the first training phase the general feature of an implied volatility function and is able to interpolate the illiquid implied volatility function even with just a few observations.

\end{itemize}

Our method belongs to nonparametric volatility smoothing approach, as we first interpolate the volatility curve and then reconstruct the RND. We illustrate the steps in Algorithm \ref{alg_TL_RND}.

\begin{algorithm}[H]
  \caption{Illiquid RND estimation with Transfer Learning}
  \label{alg_TL_RND}
  \begin{algorithmic}[1]
    \small
    \Require Liquid option dataset $\mathcal{D}^{\mathrm{liq}}$, illiquid option dataset $\mathcal{D}^{\mathrm{ill}}$, Deep-LSE $f_{\theta}$, risk-free rate $r$, maturity $T$, strike grid $\{K_g\}_{g=1}^G$
    \Ensure Estimated risk-neutral density $\hat{f}_{t,T}$ for the illiquid market

    \State \textbf{First Step Recovery: Pre-training (liquid market)}
    \State Define the Deep-LSE estimator $f_{\theta}$
    \State Train $f_{\theta}$ on $\mathcal{D}^{\mathrm{liq}}$ to fit the implied volatility surface of the liquid market

    \State

    \State \textbf{Second Step Recovery: Transfer learning (illiquid market)}
    \State Consider the pre-trained model $f_{\theta}$ 
    \State Fine-tune $f_{\theta}$ on $\mathcal{D}^{\mathrm{ill}}$

    \State

    \State \textbf{Volatility smoothing and RND recovery}
    \For{each strike $K_g$ in $\{K_g\}_{g=1}^G$}
      \State Compute the moneyness $x_g$ 
      \State $\hat{\sigma}^{\mathrm{imp}}(K_g, T) \gets f_{\theta}(x_g)$
      \State $\hat{V}_t(K_g, T) \gets \text{OptionPrice}\big(\hat{\sigma}^{\mathrm{imp}}(K_g, T), K_g, T, r\big)$
    \EndFor

    \State Use finite differences to approximate $\partial^2 \hat{V}_t / \partial K^2$ on $\{K_g\}$ 
    \State For each $K_g$, set $\hat{f}_{t,T}(K_g) \gets e^{r \tau} \, \partial^2 \hat{V}_t / \partial K^2 (K_g, T)$
    \State \Return $\hat{f}_{t,T}$ on the strike grid
  \end{algorithmic}
\end{algorithm}

\newpage
\section{Deep Log-Sum-Exp Neural Network}\label{theoretical_framework}

In this section, we present the Deep Log-Sum-Exp Neural Network and the application of Transfer Learning for the estimation of the RND in an illiquid option market. In theory, transfer learning can be applied to various machine learning algorithms. We choose a Deep Neural Network architecture to leverage the flexibility of transfer learning and for the performance as an interpolation function. The deep architecture is a crucial characteristic  during the transfer learning phase as it helps to embed knowledge and makes the transfer more effective. We illustrate the Deep-LSE with $L$ layers, and discuss an example with two layers in Appendix \ref{theory_appendix}.

Let \(x\in\mathbb{R}^d\) denote the input, for each layer \(\ell=1,\dots,L\), let \(K_\ell\in\mathbb{N}\) be the number of affine linear functions, that play the role of the number of neurons, and let \(T_\ell>0\) be the temperature parameters, and \(c_{\mathrm{out}}\in\mathbb{R}\) be a global output bias. For a vector \(u=(u_1,\dots,u_m)^\top\in\mathbb{R}^m\) and \(T>0\), define
\[
\operatorname{LSE}_T(u)
\;=\;
T\,\log\!\Big(\sum_{i=1}^m e^{\,u_i/T}\Big),
\qquad
\text{as } T\downarrow 0,\ \operatorname{LSE}_T(u)\to \max_i u_i.
\]
For each layer \(\ell=1,\dots,L\) and each neuron \(k=1,\dots,K_\ell\), define an affine linear function
\[
\ell^{(\ell)}_k(x)
\;=\;
a^{(\ell)}_{k}{}^{\!\top}x + b^{(\ell)}_k,
\qquad
a^{(\ell)}_k\in\mathbb{R}^d,\; b^{(\ell)}_k\in\mathbb{R}.
\]
Then, we stack these $K_\ell$ affine linear function in a vector, which represents a generic $\ell$ layer with $K_\ell$ neurons
\[
L^{(\ell)}(x)
=\begin{bmatrix}
\ell^{(\ell)}_1(x)\\ \vdots\\ \ell^{(\ell)}_{K_\ell}(x)
\end{bmatrix}
\in\mathbb{R}^{K_\ell}.
\]
Let \(A^{(\ell)}\in\mathbb{R}^{K_\ell\times d}\) collect the row vectors
\((a^{(\ell)}_k)^\top\) and let \(b^{(\ell)}\in\mathbb{R}^{K_\ell}\) collect the biases, then
\[
L^{(\ell)}(x)=A^{(\ell)}x+b^{(\ell)},
\]
and the output of the first layer is the scalar
\[
z_1(x)
\;=\;
\operatorname{LSE}_{T_1}\!\big(L^{(1)}(x)\big)
\;=\;
\operatorname{LSE}_{T_1}\!\big(A^{(1)}x+b^{(1)}\big)
\in\mathbb{R}.
\]
For each succeeding layer \(\ell=2,\dots,L\) and each affine piece \(k=1,\dots,K_\ell\), we define a positive recursion weight from \(z_{\ell-1}(x)\) to the \(k\)-th affine term of layer \(\ell\) by
\[
\alpha^{(\ell)}_k
\;=\;
\operatorname{softplus}\!\big(\eta^{(\ell)}_k\big),
\qquad
\eta^{(\ell)}_k\in\mathbb{R},
\]
and collect them into
\[
\alpha^{(\ell)}
=
(\alpha^{(\ell)}_1,\ldots,\alpha^{(\ell)}_{K_\ell})^\top
\in\mathbb{R}^{K_\ell}_{> 0}.
\]
Now, for $\ell \ge2$, we define the vector $S^{(\ell)}(x)$ that collects the recursion term and the $K_\ell$ affine linear functions
\[
S^{(\ell)}(x)
\;=\;
\begin{bmatrix}
\alpha^{(\ell)}_1 z_{\ell-1}(x) + \ell^{(\ell)}_1(x)\\
\vdots\\
\alpha^{(\ell)}_{K_\ell} z_{\ell-1}(x) + \ell^{(\ell)}_{K_\ell}(x)
\end{bmatrix}
\;=\;
\alpha^{(\ell)}\, z_{\ell-1}(x) + L^{(\ell)}(x)
\;\in\;\mathbb{R}^{K_\ell},
\]
which yields
\[
S^{(\ell)}(x)
\;=\;
\alpha^{(\ell)}\, z_{\ell-1}(x) + A^{(\ell)}x + b^{(\ell)}.
\]
The scalar output of a generic layer \(\ell\) is then
\[
z_\ell(x)
\;=\;
\operatorname{LSE}_{T_\ell}\!\big(S^{(\ell)}(x)\big)
\;=\;
T_\ell \log\!\Big(\sum_{k=1}^{K_\ell}
e^{\,(\alpha^{(\ell)}_k z_{\ell-1}(x)+\ell^{(\ell)}_k(x))/T_\ell}\Big)
\in\mathbb{R}.
\]
Finally, the Deep-LSE output of the final layer $L$ is
\[
y(x)
\;=\;
z_L(x) + c_{\mathrm{out}}.
\]
In compact form, the \(L\)-layer Deep-LSE can be written as
\[
\begin{aligned}
z_1(x)
&=\operatorname{LSE}_{T_1}\!\big(A^{(1)}x+b^{(1)}\big),\\
z_\ell(x)
&=\operatorname{LSE}_{T_\ell}\!\big(\alpha^{(\ell)} z_{\ell-1}(x)
+ A^{(\ell)}x + b^{(\ell)}\big),
\qquad \ell=2,\dots,L,\\
y(x)
&= z_L(x) + c_{\mathrm{out}}.
\end{aligned}
\]

\subsection{Convexity}

In practice, the estimation of the RND from illiquid option quotes is a challenging task. A common approach involves fitting the implied volatility curve of the available option quotes, converting them back to option prices, and then differentiating twice with respect to the strike to obtain the RND (\cite{Litzenberger1978}). The level, shape, and curvature of the implied volatility curve are unknown and cannot be inferred accurately from illiquid quotes. For this reason, another crucial aspect of our model is that it is convex in its inputs and always returns a convex function. This property encourages the network to learn a meaningful and general representation of the implied volatility curve even under extreme illiquidity, without imposing convexity as an explicit modelling assumption and data requirement.

Lemma \ref{epigraph_lemma} describes the preservation of convexity under composition, and then we show that the Deep-LSE architecture is indeed convex with respect to its inputs. In fact, the Deep-LSE estimator is a composition of linear affine functions, positive recursion weights, and convex transformations, therefore it preserves convexity. The proofs of results related to convexity are reported in Appendix \ref{convexity_appendix}.

\begin{lemma}[Monotone convex composition]\label{epigraph_lemma}
\smallskip
Let $h:\mathbb{R}^m\to\mathbb{R}$ be convex and nondecreasing for each argument, and let $f_i:\mathbb{R}^d\to\mathbb{R}$ be convex for $i=1,\dots,m$. Then $x\mapsto h\big(f_1(x),\dots,f_m(x)\big)$ is convex.
\end{lemma}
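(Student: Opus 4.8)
The plan is to verify the defining convexity inequality directly, chaining two facts: the convexity of each inner map $f_i$ and the joint convexity together with the coordinatewise monotonicity of the outer map $h$. Write $g(x) = h\big(f_1(x),\dots,f_m(x)\big)$ for the composite. Fix arbitrary points $x,y\in\mathbb{R}^d$ and a weight $\lambda\in[0,1]$; the goal is to show $g(\lambda x+(1-\lambda)y)\le \lambda g(x)+(1-\lambda)g(y)$.

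First I would apply convexity of each coordinate function: for every $i$,
\[
f_i(\lambda x+(1-\lambda)y)\;\le\;\lambda f_i(x)+(1-\lambda)f_i(y).
\]
These hold simultaneously for $i=1,\dots,m$, so the vector $\big(f_i(\lambda x+(1-\lambda)y)\big)_i$ is dominated coordinatewise by $\big(\lambda f_i(x)+(1-\lambda)f_i(y)\big)_i$. Here the monotonicity hypothesis enters: since $h$ is nondecreasing in each argument, evaluating $h$ at the smaller vector cannot exceed its value at the larger one, giving
\[
g(\lambda x+(1-\lambda)y)
\;\le\;
h\big(\lambda f_1(x)+(1-\lambda)f_1(y),\,\dots,\,\lambda f_m(x)+(1-\lambda)f_m(y)\big).
\]
Second I would invoke convexity of $h$ on the right-hand side, viewing its argument as the convex combination $\lambda\big(f_1(x),\dots,f_m(x)\big)+(1-\lambda)\big(f_1(y),\dots,f_m(y)\big)$ of two points in $\mathbb{R}^m$, to obtain
\[
h\big(\lambda f_1(x)+(1-\lambda)f_1(y),\,\dots\big)
\;\le\;
\lambda\,h\big(f_1(x),\dots,f_m(x)\big)+(1-\lambda)\,h\big(f_1(y),\dots,f_m(y)\big).
\]
Chaining the two displays yields exactly $g(\lambda x+(1-\lambda)y)\le\lambda g(x)+(1-\lambda)g(y)$, which is the claim.

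The only real subtlety is the middle step, and it is where the monotonicity assumption is indispensable: convexity of the $f_i$ controls their arguments only up to an inequality, not an equality, so to propagate that inequality through $h$ one needs $h$ to respect the coordinatewise order. Without monotonicity the composition can fail to be convex, so I would stress this dependence rather than treat it as routine. As an alternative I could argue via epigraphs, showing $\{(x,t):g(x)\le t\}$ is convex, but the direct inequality chain above is shorter and makes the monotonicity requirement most transparent.
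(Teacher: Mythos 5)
Your argument is correct and complete: the two-step chain (convexity of each $f_i$ gives a coordinatewise inequality, monotonicity of $h$ propagates it, convexity of $h$ finishes) is the standard direct verification, and you correctly flag that the monotonicity hypothesis is exactly what lets the inner inequality pass through $h$. The paper, however, takes the route you mention only in passing: it works with epigraphs, writing $\{(x,t):h(f(x))\le t\}$ as the projection onto the $(x,t)$-coordinates of the set $\{(x,u,t):u\ge f(x),\ (u,t)\in\mathrm{epi}\,h\}$, which is convex because $\mathrm{epi}\,h$ is convex, each constraint $u_i\ge f_i(x)$ cuts out a convex set when $f_i$ is convex, and monotonicity of $h$ guarantees the projection recovers the epigraph of the composite exactly. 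The two arguments encode the same content — monotonicity appears in the paper as the step allowing one to replace $f(x)$ by any dominating $u$ without leaving $\mathrm{epi}\,h$, precisely where it appears in your middle inequality. Your version is shorter, needs no facts about projections of convex sets, and makes the role of each hypothesis maximally transparent; the epigraph version generalizes more readily to extended-real-valued or non-smooth settings where pointwise inequalities are awkward. Either is acceptable here.
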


\medskip
\begin{theorem}[Convexity of deep-LSE network]\label{convexity2}

Consider the Deep-LSE estimator $y(x)=z^{(L)}(x)+c_{\mathrm{out}}$ where, for $\ell\ge 2$, the recursion weights are $\alpha^{(\ell)}\in\mathbb{R}^{K_\ell}_{> 0}$.

\noindent Then $y:\mathbb{R}^d\to\mathbb{R}$ is convex in $x$, as each $z^{(\ell)}$ is convex for all layers $\ell=1,\dots,L$.
\end{theorem}

\noindent We illustrate in Example \ref{convexity} the case of the Deep-LSE estimator with 2 layers.

\medskip
\begin{example}[Convexity of the Deep-LSE network with 2 layers.]\label{convexity}

Define the Deep-LSE estimator with two layers as $y(x)=\operatorname{LSE}_{T_2}\!\big(\alpha\,z_1(x)+A^{(2)}x+b^{(2)}\big)+c_{\mathrm{out}}$ with finite temperature $T_1,T_2>0$ and neurons for the two layers $K_1,K_2\in\mathbb{N}$. 

\noindent Then, with the vector of recursive weights at the second layer $\alpha=(\alpha_1,\dots,\alpha_{K_2})^\top\in\mathbb{R}^{K_2}_{> 0}$, $y:\mathbb{R}^d\to\mathbb{R}$ is a convex function of $x$.

\end{example}

\subsection{Bounds}

Given any number of layers $\ell=1,...L$, our model $z^{(\ell)}(x)$ is a combination of affine transformations, recursive connections, and Log-Sum-Exp class of functions. This functional class is important because $\text{ as } T\downarrow 0, \operatorname{LSE}_T(u)\to \max_i u_i$ and therefore it collapses to a max function. In other words, it is possible to establish bounds between $z^{(\ell)}(x)$, which depends on the log-sum-exp function, and the max-affine class of functions. In turn, the max-affine class of functions is known to be a universal approximator of convex functions under certain conditions. By induction, we obtain the analytical bounds for $\ell$ layers in Theorem \ref{thm:deep_LSE_vs_max}. In Appendix \ref{bounds_appendix}, we report the  proof.

\medskip

\begin{theorem}[Deep-LSE bounds to deep max-affine surrogate]\label{thm:deep_LSE_vs_max}

Define the deep max-affine surrogate of the Deep-LSE model as 
\[
\bar z^{(1)}(x)\ =\ \max_{i\in K_1}\ \ell^{(1)}_i(x),\qquad
\bar z^{(\ell)}(x)\ =\ \max_{k\in K_\ell}\ \Big(\alpha^{(\ell)}_k\,\bar z^{(\ell-1)}(x)+\ell^{(\ell)}_k(x)\Big)\quad \text{for } \ \ell\ge2,
\]
with positive recursion weights. Then for every $\ell=1,\dots,L$ and every $x\in\mathbb{R}^d$,
\begin{equation}\label{eq:layerwise_sandwich}
\bar z^{(\ell)}(x)\ \le\ z^{(\ell)}(x)\ \le\ \bar z^{(\ell)}(x)\ +\ \Delta_\ell,
\end{equation}
where $\Delta_\ell$ satisfies the recursion
\begin{equation}\label{eq:Delta_recursion}
\Delta_1=T_1\log K_1,\qquad
\Delta_\ell\ =\ T_\ell\log K_\ell\ +\ \alpha^{(\ell)}_{\max}\,\Delta_{\ell-1}\quad \text{for } \ \ell\ge2,
\end{equation}
and in closed form with $\alpha^{(\ell)}_{\max}=\max_{k}\alpha^{(\ell)}_k$
\begin{equation}\label{eq:Delta_closed_form}
\Delta_\ell\ =\ \sum_{j=1}^{\ell}\Bigg(T_j\log K_j\ \prod_{r=j+1}^{\ell}\alpha^{(r)}_{\max}\Bigg).
\end{equation}
In addition, with $\bar y(x)=\bar z^{(L)}(x)+c_{\mathrm{out}}$, it then holds that
\[
\bar y(x)\ \ \le\ y(x) \ \le\ \bar y(x)\ + \ \Delta_L\qquad\text{for all }x\in\mathbb{R}^d.
\]
\end{theorem}

\noindent We illustrate the analytical bounds for a Deep-LSE model with two layers in Example \ref{bounds1}.

\begin{example}[Two layer Deep-LSE bounds to deep max-affine]\label{bounds1}

Consider the max-affine surrogates with two layers
\[
\bar y(x)=\max_{k\in K_2 }\Big(\alpha_k\,\bar z_1(x)+\big\langle A^{(2)}_{k,\cdot},x\big\rangle+b^{(2)}_k\Big)+c_{\mathrm{out}},
\]
with $\bar z_1(x)=\max_{i\in K_1}\big\langle A^{(1)}_{i,\cdot},x\big\rangle+b^{(1)}_i$.
Then, with $\alpha=(\alpha_1,\ldots,\alpha_{K_2})^\top\in\mathbb{R}^{K_2}_{>0}$
\[
\quad \bar y(x)\ \le\ y(x)\ \le\ \bar y(x)\;+\;T_2\log K_2\;+\;\alpha_{\max}\,T_1\log K_1\quad
\qquad\text{for all }x\in\mathbb{R}^d.
\]

\end{example}

\medskip
\noindent We enforce that $\alpha^{(\ell)}_k> 0$ to preserve monotonicity and as a consequence convexity across layers. In fact, $z^{(\ell-1)}\ge \bar z^{(\ell-1)}$ implies $\alpha^{(\ell)}_k z^{(\ell-1)}\ge \alpha^{(\ell)}_k \bar z^{(\ell-1)}$.
In case some $\alpha^{(\ell)}_k$ were negative, the lower bound could fail, and for this reason we enforce positivity on $\alpha^{(\ell)}_k$ with the softplus function. Notice also that $\Delta_\ell$ is independent of $x$, hence the approximation error $z^{(\ell)}-\bar z^{(\ell)}$ is uniformly bounded over $\mathbb{R}^d$.


It is important to notice that the multiplicative term 
\[
\Delta_L
= \sum_{j=1}^{L} \Bigl( T_j \log K_j \, \prod_{r=j+1}^{L} \alpha_{\max}^{(r)} \Bigr)
\]
is crucial for the upper bound defined in Theorem \ref{thm:deep_LSE_vs_max}. It makes the dependence on the positive recursion weights $\alpha_{\max}^{(\ell)}$ explicit, and highlights that the behavior of the estimator as depth grows depends on the recursion weights that control how much the upper bound grows (Eq. \ref{eq:Delta_recursion})
\[
\Delta_\ell\ =\ T_\ell\log K_\ell\ +\ \alpha^{(\ell)}_{\max}\,\Delta_{\ell-1}.
\]
So by controlling $\alpha_{\max}^{(\ell)} < q$ and $q\le 1$ for all $\ell$, and with bounded temperatures and widths $0 \le T_j \log K_j \le M$, we obtain 
\[
\sum_{j=1}^{L} q^{L-j} 
= q^{L-1} + q^{L-2} + \cdots + q^{1} + q^{0} 
= \sum_{k=0}^{L-1} q^k 
= \frac{1 - q^{L}}{1 - q} \quad \text{with } \ q \neq 1.
\]
\[
  \Delta_L \le M \sum_{j=1}^{L} q^{L-j}
  = M \frac{1-q^{L}}{1-q}
  \le \frac{M}{1-q},
\]
so $\Delta_L$ is {uniformly bounded in depth} and $\text{sup}_L\Delta_L \le \frac{M}{1-q}.$ We enforce $q<1$ with $\alpha_{\max}^{(\ell)} \le q$, and the bound for $\ell$ layers Deep-LSE network becomes
\[
\bar y(x)\ \le\ y(x) \ \ \le\ \bar y(x) + \frac{M}{1-q}
\qquad\text{for all }x\in\mathbb{R}^d.
\]

\bigskip

\subsection{Universal Approximation}

The result of Theorem \ref{thm:deep_LSE_vs_max} is crucial to link the class of functions of the Deep-LSE model to the max-affine function, a universal approximator of convex functions. We leverage these results of convex analysis to prove that our model is a universal approximator of convex functions and data points.

In our setup, we obtain the deep surrogate of max-affine functions because our framework accommodates $\ell \ge 1$. For this reason, in Theorem \ref{ybar_pwma} we show that the deep max–affine surrogate is still an affine base, and that the overall functional class of the deep max–affine surrogate is interpreted as a max function of finitely many affine functions. In particular, the network output remains a convex and continuous function of the input. We leverage this results as it allows us to link the deep max-affine surrogate to the classical max-affine function and its known bounds with respect to the log-sum-exp function class. The proofs are in Appendix \ref{uat_appendix}.

\begin{theorem}[Deep max–affine surrogate is a finite max of affines]\label{ybar_pwma}

Define the {combination sets} that neurons across layers as $P_\ell=\{1,\dots,K_1\}\times\cdots\times\{1,\dots,K_\ell\}$ with $|P_\ell|=\prod_{j=1}^\ell K_j<\infty$ for a finite number of layers and with positive recursion weights $\alpha_k^{(\ell)}>0$.
Then there exist affine coefficients $\{(A_p^{[\ell]},b_p^{[\ell]})\}_{p\in P_\ell}\subset\mathbb{R}^d\times \mathbb{R}$ such that, for every $\ell=1,\dots,L$, the deep max-affine surrogate
\[
\bar y(x)=\max_{p\in P_L}\bigl(\langle A^{[L]}_p,x\rangle+b^{[L]}_p\bigr)+c_{\mathrm{out}},
\]
that replicates the combination path set of the Deep-LSE estimator, is a finite maximum of affine functions, where convexity holds.
\end{theorem}

The next is a universal approximation result for the Deep-LSE estimator on the class of continuous convex functions. Given a compact convex set $K\subset\mathbb{R}^d$ and a continuous convex map $f:K\to\mathbb{R}$, it establishes that it is possible to choose $\varepsilon>0$ arbitrarily small and obtain sup-norm convergence of the estimator in approximating any convex and continuous function or data points. In other words, the functional class of the Deep-LSE estimator is dense in the space of continuous and convex functions. We leverage this result in the first step of our approach to interpolate the implied volatility function of the proxy liquid option market.

\medskip
\begin{theorem}[Uniform approximation on a compact convex set by an $L$-layer Deep-LSE network]\label{thm:uniform_approx_lse_streamlined}
Let $K\subset\mathbb{R}^d$ be compact and convex, and let $f:K\to\mathbb{R}$ be continuous and convex. Consider Deep-LSE $y(x)$ estimator with L layers and recursive parameters $\alpha^{(\ell)}\in\mathbb{R}_{>0}^{K_\ell}\ \text{for }\ \ell\ge2$. Then, for every $\varepsilon>0$, there exist specific parameters with finite value of the estimator such that
\[
\sup_{x\in K}\,|y(x)-f(x)|\ <\ \varepsilon .
\]
\end{theorem}

\bigskip

\subsection{Sieve M-Estimation}\label{sieve_sec}

Consider a nonparametric regression problem
\[
y_i = f_0(x_i) + \epsilon_i,
\]
with i.i.d.\ errors such that $\mathbb E[\epsilon_i]=0$, $\mathrm{Var}(\epsilon_i)=\sigma^2<\infty$, $x_i\in\mathcal X\subset\mathbb R^d$. The objective is to estimate a regression function $f_0$, which is unknown, and this can be achieved by minimizing the empirical squared error loss
\[
\hat f_n
  = \arg\min_{f\in\mathcal F} Q_n(f)
  = \arg\min_{f\in\mathcal F}\frac{1}{n}\sum_{i=1}^n\bigl(y_i-f(x_i)\bigr)^2. \quad (f_0\in\mathcal F)
\]
The objective of Sieve estimation is to minimize the squared error loss over a function space $\mathcal F_n$, which approximates the true function $\mathcal F$ as the error tends to $0$ and the sample size increases (\cite{shen1994convergence}, \cite{shen2023asymptotic}). 
Define the sequence of functions $\mathcal F_1 \subseteq \mathcal F_2 \subset \cdots \subset \mathcal F_n \subset \cdots \subset \mathcal F$
and assume $\bigcup_{n=1}^\infty \mathcal F_n$ is dense in $\mathcal F$ with respect to a pseudo-metric $d$, so that for every $f\in\mathcal F$ there exists $\pi_n f\in\mathcal F_n$ such that
\[
d(f,\pi_n f)\to 0 .
\]
A sieve estimator $\hat f_n$ is over $\mathcal F_n$ satisfies
\[
Q_n(\hat f_n)\ \le\ \inf_{f\in\mathcal F_n} Q_n(f) + O_p(\eta_n),
\qquad \eta_n\to 0.
\]
In our setting, we define the sieve class estimator as
\[
\mathcal F_{n}
=\left\{ f_\theta : \theta\in\Theta_n,\ \text{where for } x\in\mathbb{R}^d
\ \begin{aligned}
&\ell^{(\ell)}_{k}(x)=\langle a^{(\ell)}_{k},x\rangle+b^{(\ell)}_{k},\\[2pt]
&z^{(1)}_\theta(x)=\operatorname{LSE}_{T_1}\!\big((\ell^{(1)}_{k}(x))_{k=1}^{K_1}\big),\\[2pt]
&z^{(\ell)}_\theta(x)=\operatorname{LSE}_{T_\ell}\!\big((\alpha^{(\ell)}_{k}\,z^{(\ell-1)}_\theta(x)+\ell^{(\ell)}_{k}(x))_{k=1}^{K_\ell}\big),\ \ \ell\ge2,\\[2pt]
&f_\theta(x)=z^{(L)}_\theta(x)+c_{\mathrm{out}}
\end{aligned}
\,\right\},
\]
and
\[
\Theta_n=\Bigl\{\theta:\ 
\begin{aligned}
&\max_k \|a_k^{(\ell)}\|_* \le S_\ell^{(n)},\quad
 \max_k |b_k^{(\ell)}| \le B_\ell^{(n)},\quad
 \max_k \alpha_k^{(\ell)} \le q_\ell^{(n)}<1,\\
& T_\ell \le \Theta_\ell^{(n)},\quad
 K_\ell \le K_\ell^{(n)},\quad
 |c_{\mathrm{out}}| \le C^{(n)}\ \text{for all }\ell
\end{aligned}
\Bigr\}.
\]
where $B, V_n, S \to \infty$ as $n\to\infty$.

\subsubsection{Existence-Measurability}

We show in Theorem \ref{envelope} that the functional class of the Deep-LSE estimator is bounded, and this result of finite measurability is a requirement for the Sieve class to achieve consistency in this setup. In Appendix \ref{sieve_appendix}, we report the proof. 

\begin{theorem}[Finite Sieve Envelope]\label{envelope}

Let $\mathcal X=\{x:\|x\|\le R\}$ be a bounded input set, and for each fixed $n$, consider the sieve parameter space $\Theta_n$.


\noindent Given $\Theta_n$, let $\mathcal F_n=\{\,f_\theta:\theta\in\Theta_n\,\}$ be the corresponding functional class of Deep-LSE estimator, and define
\[
V_n = C^{(n)} + \sum_{\ell=1}^{L}
\Bigl(R\,S_\ell^{(n)} + B_\ell^{(n)} + \Theta_\ell^{(n)} \log K_\ell^{(n)}\Bigr)
\prod_{r=\ell+1}^{L} q_r^{(n)}.
\]
For each fixed $n$ we have that
\[
\;
\sup_{f\in\mathcal F_n}\|f\|_\infty
\;=\;
\sup_{\theta\in\Theta_n}\ \sup_{x\in\mathcal X} |f_\theta(x)|
\;\le\; V_n\;.
\]

\end{theorem}

\noindent The minimization of the empirical squared error over the sieve class $\mathcal F_r$ ensures that, for any fixed realization of the data, the criterion $Q_n(f)$ is continuous in $f$. Then, the existence and finite measurability of the sieve estimator is guaranteed once the set of functions $\mathcal F_n$ is compact in its support $\mathcal X$.

For a fixed sample $\{x_i\}_{i=1}^n \subset \mathcal X$ of size $n$, consider the evaluation map
\[
E:\Theta_n \longrightarrow \mathbb{R}^n,\qquad
E(\theta)=\bigl(f_\theta(x_1),\,\ldots,\,f_\theta(x_n)\bigr).
\]
Each point of the map $\theta\mapsto f_\theta(x_i)$ is continuous because the Deep-LSE estimator is a composition of affine linear functions, positive recursion weights, and convex transformation, and we define the parameters of the estimator in a compact set $\Theta_n$ with $T_\ell\ge \underline T_\ell>0$. So $E$ is continuous and $E(\Theta_n)$ is compact in $\mathbb{R}^n$.

Now, $\|f-g\|_n$ coincides with the Euclidean norm of $E(\theta_f)-E(\theta_g)$, up to $1/\sqrt{n}$. Therefore, the metric space $(\mathcal F_n,\|\cdot\|_n)$ is isometric with respect to the compact set $E(\Theta_n)\subset\mathbb{R}^n$, and hence, $\mathcal F_n$ is compact under $\|\cdot\|_n$.

\bigskip

\subsubsection{Consistency}

We obtain, in Theorem \ref{lse_map}, a basic Lipschitz estimate for the log-sum-exp function, which is a result we leverage to prove the consistency of the estimator. It shows that the map $\mathrm{LSE}_T:\mathbb{R}^m \to \mathbb{R}$ with $u \mapsto T \log\!\Bigl(\sum_{i=1}^m e^{u_i/T}\Bigr)$ is 1-Lipschitz with respect to standard norms, meaning that small changes in the input vector lead to proportionally small changes in the LSE output.

\begin{theorem}[The log-sum-exp function is $1$-Lipschitz]\label{lse_map}
For $m\in\mathbb{N}$ and $T>0$, define the log-sum-exp function $\mathrm{LSE}_T:\mathbb{R}^m\to\mathbb{R}$
\[
\mathrm{LSE}_T(u)=T\log\!\left(\sum_{i=1}^{m} e^{u_i/T}\right).
\]
Then $\mathrm{LSE}_T$ is $1$-Lipschitz with respect to the sup-norm $\|\cdot\|_\infty$
and also $1$-Lipschitz with respect to the Euclidean norm $\|\cdot\|_2$
\[
\bigl|\mathrm{LSE}_T(u)-\mathrm{LSE}_T(v)\bigr|
\le \|u-v\|_\infty
\quad\text{and}\quad
\le \|u-v\|_2
\qquad \text{with } \ \forall\,u,v\in\mathbb{R}^m.
\]
\end{theorem}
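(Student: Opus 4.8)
The plan is to prove both Lipschitz bounds through the gradient of $\mathrm{LSE}_T$, exploiting the fact that this gradient is exactly the softmax vector, which always lies in the probability simplex. First I would note that $\mathrm{LSE}_T$ is $C^\infty$ (the inner sum $\sum_i e^{u_i/T}$ is strictly positive), and compute for each coordinate $j$
\[
\frac{\partial}{\partial u_j}\,\mathrm{LSE}_T(u)
\;=\;
\frac{e^{u_j/T}}{\sum_{i=1}^m e^{u_i/T}}
\;=:\; p_j(u).
\]
The decisive observation is that $p_j(u)\ge 0$ and $\sum_{j=1}^m p_j(u)=1$ for every $u\in\mathbb{R}^m$, so $\nabla\mathrm{LSE}_T(u)=p(u)$ is a probability vector and in particular $\|\nabla\mathrm{LSE}_T(u)\|_1=1$ identically. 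Smoothness then justifies writing any increment as a line integral of the gradient.

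Next I would fix $u,v\in\mathbb{R}^m$, set $\gamma(t)=v+t(u-v)$ for $t\in[0,1]$, and apply the fundamental theorem of calculus along $\gamma$ to get
\[
\mathrm{LSE}_T(u)-\mathrm{LSE}_T(v)
\;=\;
\int_0^1 \big\langle \nabla\mathrm{LSE}_T(\gamma(t)),\, u-v\big\rangle\,dt.
\]
For the $\|\cdot\|_\infty$ bound I would control the integrand by H\"older's inequality, $|\langle p(\gamma(t)),u-v\rangle|\le \|p(\gamma(t))\|_1\,\|u-v\|_\infty=\|u-v\|_\infty$, invoking the identity $\|p\|_1=1$ from the first step; integrating over $t\in[0,1]$ yields $|\mathrm{LSE}_T(u)-\mathrm{LSE}_T(v)|\le\|u-v\|_\infty$.

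For the $\|\cdot\|_2$ bound I would instead apply Cauchy--Schwarz to the same integrand, $|\langle p,u-v\rangle|\le\|p\|_2\,\|u-v\|_2$, and then use that any probability vector satisfies $\|p\|_2\le\|p\|_1=1$; the identical integration gives the $\ell_2$ estimate. The argument is elementary, so I do not anticipate a genuine obstacle. The one point demanding care is pairing each norm with its dual correctly: the $\|\cdot\|_\infty$ constant is governed by $\|\nabla\mathrm{LSE}_T\|_1$, which equals \emph{exactly} $1$ because the softmax entries sum to one, while the $\|\cdot\|_2$ constant is governed by $\|\nabla\mathrm{LSE}_T\|_2$, bounded by $1$ via $\|p\|_2\le\|p\|_1$. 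Respecting these dual pairings is precisely what forces both Lipschitz constants to collapse to $1$ rather than picking up an $m$-dependent factor.
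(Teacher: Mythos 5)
Your proposal is correct and follows essentially the same route as the paper: both compute $\nabla\mathrm{LSE}_T(u)=p(u)$ as the softmax probability vector, then bound the increment via the gradient along the segment from $v$ to $u$, pairing $\|\cdot\|_\infty$ with $\|p\|_1=1$ and $\|\cdot\|_2$ with $\|p\|_2\le\|p\|_1=1$. The only cosmetic difference is that you spell out the fundamental-theorem-of-calculus line integral where the paper simply invokes the mean value inequality.
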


\medskip
\noindent Regarding the parametrization, the Deep-LSE estimator at each layer $\ell$ produces scalars $z^{(\ell)}(x)\in\mathbb{R}$ as the log-sum-exp is a scalar function. So for each neuron $k$ we have
\[
S^{(\ell)}(x)=\alpha^{(\ell)}\, z^{(\ell-1)}(x) + A^{(\ell)}x + b^{(\ell)} \in \mathbb{R}^{K_\ell}.
\]

The vector that recursively links adjacent layers $(\,z^{(\ell-1)}(x),\,x\,)\in\mathbb{R}^{1+d}$ is of dimension $d+1$, so the parameter set $(\alpha^{(\ell)}_{k},\, a^{(\ell)}_{k},\, b^{(\ell)}_{k})$ is of dimension $d+2$. In the first layer, there is no recursive term, so for $K_1$ neurons, the dimension of the first layer is $K_1(d+1)$. In succeeding layers $\ell\ge2$, we have $K_\ell(d+2)$ parameters each, plus the bias $c_{\mathrm{out}}$. Overall, we obtain
\noindent 
\[
m_n = K_1(d+1) + \sum_{\ell=2}^{L} K_\ell(d+2) + 1
\ \asymp\ \sum_{\ell=1}^{L} K_\ell(d+2) + 1 .
\]

\noindent Theorem 14.5 in \textcite{anthony2009neural} states that for the functional class $\mathcal{F}$, if $\epsilon \le 2b$, then
\[
\mathcal{N}_\infty(\epsilon,\mathcal{F},m)
\;\le\;
\left(\frac{4\,e\,m\,b\,W\,(L V)^{\ell}}{\epsilon\,(L V-1)}\right)^{W}.
\]

\noindent To apply this result, we correct the $m_n$ term with a reparametrization trick, so that only adjacent layers are connected. We create $d$ relay units $r^{(1)},\ldots,r^{(L-1)}$ that copy the input to the next layer without modifications
\[
r^{(0)}(x)=x, 
\qquad
r^{(\ell)}(x)=r^{(\ell-1)}(x)\quad \text{for } \ \ell=1,\ldots,L-1.
\]
With this reparametrization, the inner part of layer $\ell\ge2$ becomes 
\[
S_k^{(\ell)}(x)
= \alpha_k^{(\ell)}\, z^{(\ell-1)}(x)
+ \bigl(a_k^{(\ell)}\bigr)^{\!\top} r^{(\ell-1)}(x)
+ b_k^{(\ell)} .
\]
The estimator computes the same function, but all connections are between adjacent layers. Then, the covering number becomes
\begin{align*}
W &= m_n + d(L-1)\\
  &= \sum_{\ell=1}^{L} K_\ell(d+2) + 1 + d(L-1).
\end{align*}

\noindent We combine together the approximation and stability properties of Deep-LSE networks in an asymptotic framework.  
Let $\mathcal{F}_{n}$ denote the Deep-LSE sieve functional class, and let $Q_n(f)$ and $\overline Q_n(f)$ be the empirical and population risk functions associated with the function from the sieve functional class $f\in\mathcal{F}_{n}$. The first result identifies an asymptotic growth condition ($W V_n^{2}\,\log\!\big(V_n^L W\big)=o(n)$) that allow the estimator to be consistent.

\medskip
\begin{theorem}[Asymptotic behavior of Deep-LSE sieve class parameters]\label{consistency}
Under the asymptotic growth condition
\[
W V_n^{2}\,\log\!\big(V_n^L W\big)=o(n)\quad as \ n\to\infty,
\]
we have that
\[
\sup_{f\in \mathcal{F}_{n}}
\bigl|\,Q_n(f)-\overline{Q}_n(f)\,\bigr|
\;\xrightarrow{\,p^{*}\,}\;0 \qquad as \ n\to\infty.
\]
\end{theorem}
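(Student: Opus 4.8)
The plan is to strip $Q_n-\overline Q_n$ down to a single stochastic term indexed by $f$ and then annihilate that term using the envelope of Theorem~\ref{envelope} together with the covering-number estimate of \textcite{anthony2009neural}. Substituting $y_i=f_0(x_i)+\epsilon_i$ and expanding the square gives
\[
Q_n(f)=\frac1n\sum_{i=1}^n\bigl(f(x_i)-f_0(x_i)\bigr)^2-\frac2n\sum_{i=1}^n\epsilon_i\bigl(f(x_i)-f_0(x_i)\bigr)+\frac1n\sum_{i=1}^n\epsilon_i^2,
\]
while the population criterion is $\overline Q_n(f)=\frac1n\sum_{i=1}^n(f(x_i)-f_0(x_i))^2+\sigma^2$. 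Subtracting,
\[
Q_n(f)-\overline Q_n(f)=\Bigl(\frac1n\sum_{i=1}^n\epsilon_i^2-\sigma^2\Bigr)-\frac2n\sum_{i=1}^n\epsilon_i\bigl(f(x_i)-f_0(x_i)\bigr).
\]
The first bracket does not depend on $f$ and converges to $0$ in probability by the weak law of large numbers, since $\E[\epsilon_i^2]=\sigma^2<\infty$. Hence it suffices to prove that the cross term vanishes uniformly, i.e.\ $\sup_{f\in\mathcal F_n}\bigl|\frac1n\sum_{i=1}^n\epsilon_i g_f(x_i)\bigr|\xrightarrow{p^*}0$, where $g_f:=f-f_0$.

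Next I would set up the empirical process. By Theorem~\ref{envelope} every $f\in\mathcal F_n$ satisfies $\|f\|_\infty\le V_n$, so the increments are uniformly bounded, $|g_f(x_i)|\le M_n:=V_n+\|f_0\|_\infty$. Fix a scale $\delta_n\downarrow0$ and take a minimal $\delta_n$-cover $\{f^{(1)},\dots,f^{(N_n)}\}$ of $\mathcal F_n$ in the empirical sup-norm, with $N_n=\mathcal N_\infty(\delta_n,\mathcal F_n,n)$. For an arbitrary $f$ and its nearest center $f^{(j)}$,
\[
\Bigl|\frac1n\sum_{i=1}^n\epsilon_i g_f(x_i)\Bigr|\le\Bigl|\frac1n\sum_{i=1}^n\epsilon_i g_{f^{(j)}}(x_i)\Bigr|+\delta_n\cdot\frac1n\sum_{i=1}^n|\epsilon_i|,
\]
and the oscillation term is $\delta_n\,O_p(1)$ because $\E|\epsilon_i|<\infty$; choosing $\delta_n$ to decay polynomially renders it $o_p(1)$. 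The log-cardinality is controlled by the cited covering bound with the reparametrized weight count $W=\sum_{\ell}K_\ell(d+2)+1+d(L-1)$ and output bound $b\asymp V_n$, giving $\log N_n\lesssim W\log\!\bigl(V_n^{L}W/\delta_n\bigr)+W\log(n/\delta_n)$.

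The core estimate is then a maximal inequality over the $N_n$ centers. For a fixed center, $\frac1n\sum_i\epsilon_i g_{f^{(j)}}(x_i)$ is a mean-zero average whose variance is at most $\sigma^2 M_n^2/n$; a sub-Gaussian tail bound therefore yields $\Pr\bigl(|\cdot|>t\bigr)\le 2\exp\!\bigl(-nt^2/(2\sigma^2M_n^2)\bigr)$. A union bound over the centers gives $2N_n\exp\!\bigl(-nt^2/(2\sigma^2M_n^2)\bigr)\to0$ precisely when $M_n^2\log N_n=o(n)$, and since $M_n\asymp V_n$ and $\log N_n\asymp W\log(V_n^{L}W)$ this is exactly the hypothesis $WV_n^2\log(V_n^{L}W)=o(n)$. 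Combining the center bound with the vanishing oscillation term closes the argument.

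The main obstacle is that the theorem assumes only $\mathrm{Var}(\epsilon_i)=\sigma^2<\infty$, whereas the union bound over exponentially many covering centers needs an exponential tail for $\frac1n\sum_i\epsilon_i g(x_i)$. I would resolve this by truncation: write $\epsilon_i=\epsilon_i\mathbf{1}\{|\epsilon_i|\le\tau_n\}+\epsilon_i\mathbf{1}\{|\epsilon_i|>\tau_n\}$ with a slowly growing threshold $\tau_n$, apply a Bernstein or Hoeffding bound to the bounded part (now with effective range $M_n\tau_n$, which only rescales the growth condition) and bound the truncated remainder in first moment via $\E[|\epsilon_i|\mathbf{1}\{|\epsilon_i|>\tau_n\}]\to0$. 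Alternatively, imposing sub-Gaussian errors removes this step entirely and delivers the clean balance above. Either way, the quantitative heart of the proof is the exact cancellation between the envelope $V_n^2$, the metric entropy $W\log(V_n^{L}W)$, and the sample size $n$ encoded in the hypothesis.
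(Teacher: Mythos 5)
Your decomposition of $Q_n-\overline Q_n$ into the $f$-free term $\frac1n\sum_i\epsilon_i^2-\sigma^2$ and the cross term is exactly the reduction the paper makes (it appears in a commented-out block and is implicit in the appeal to \textcite{shen2023asymptotic}), and your identification of the quantitative balance $W V_n^2\log(V_n^LW)=o(n)$ between envelope, entropy and sample size is the right heart of the matter. But the way you kill the cross term has a genuine gap, and it is precisely the one you flag at the end. A single-scale $\delta_n$-cover plus a union bound over $N_n=\exp\{O(W\log(V_n^LW))\}$ centers requires an \emph{exponential} tail for each $\frac1n\sum_i\epsilon_i g(x_i)$; a variance bound of order $\sigma^2M_n^2/n$ gives only Chebyshev, and your proposed truncation does not close under the stated hypothesis. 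Concretely: to make the truncation remainder $M_n\,\mathbb{E}\bigl[|\epsilon|\mathbf{1}\{|\epsilon|>\tau_n\}\bigr]\le M_n\sigma^2/\tau_n$ vanish with only $\mathrm{Var}(\epsilon)<\infty$ you need $\tau_n\gg M_n\asymp V_n$, but then the effective range $M_n\tau_n\gg V_n^2$ forces the Hoeffding exponent to scale like $n/(M_n^2\tau_n^2)$ (or the Bernstein exponent like $n/(M_n\tau_n)$), and the union bound then demands a growth condition strictly stronger than $WV_n^2\log(V_n^LW)=o(n)$. So the two requirements on $\tau_n$ are incompatible, and the argument as written only goes through if you additionally assume sub-Gaussian (or at least sub-exponential) errors, which is not what the theorem assumes.

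The paper avoids this entirely by taking a different route: it invokes the symmetrization inequality of \textcite{shen2023asymptotic}, which bounds the supremum by a single anchored term at $f_n^*=\pi_nf_0$ (controlled via the Universal Approximation Theorem \ref{thm:uniform_approx_lse_streamlined}) plus a Dudley-type entropy \emph{integral} $\int_0^{2V_n}\sqrt{\log\mathcal N(\eta/(2\sqrt{\sigma^2}+1),\mathcal F_n,\|\cdot\|_\infty)/n}\,d\eta$, with the covering radius rescaled by $\sqrt{\sigma^2+1}$. The point of the Rademacher multipliers $\xi_i$ is that, \emph{conditionally on} $(\epsilon_i,x_i)$, the process $\frac1n\sum_i\xi_i\epsilon_i g(x_i)$ is genuinely sub-Gaussian with intrinsic metric $d(f,g)^2\le\bigl(\frac1n\sum_i\epsilon_i^2\bigr)\|f-g\|_\infty^2$, and $\frac1n\sum_i\epsilon_i^2\to_p\sigma^2$ by the law of large numbers; hence finite variance of the errors suffices and no truncation is needed. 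The chaining (multi-scale) rather than single-scale covering is what lets the paper land exactly on $8\sqrt{WV_n^2\log(V_n^LW)/n}$ and hence on the stated hypothesis. To repair your version, either add a moment/tail condition on $\epsilon$, or replace the union bound by conditional symmetrization and chaining as the paper does.
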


\medskip
\noindent Since $Q$ is continuous at $f_0\in\mathcal F$ and $Q(f_0)=\sigma^2<\infty$, the function $f_0$ is a minimizer of the population risk (\cite{shen2023asymptotic}). For any $\varepsilon>0$ we have
\begin{align*}
\inf_{\;f:\ \|f-f_0\|_n\ge \varepsilon}\Bigl(Q_n(f)-Q_n(f_0)\Bigr)
&= \inf_{\;f:\ \|f-f_0\|_n\ge \varepsilon}
    \frac1n\sum_{i=1}^{n}\!\bigl(f(x_i)-f_0(x_i)\bigr)^2 \\
&= \inf_{\;f:\ \|f-f_0\|_n\ge \varepsilon} \|f-f_0\|_n^2 \\
&\ge \varepsilon^{2}
\;>\; 0 .
\end{align*}
In other words, $f_0$ is a minimizer of the empirical risk $Q_n$ with respect to the norm $\|\cdot\|_n$. Hence, by Theorem \ref{consistency} applied to $Q_n$ and $Q$, we have that
\[
\|\hat f_n-f_0\|_n \xrightarrow{p} 0 .
\]

\medskip
\noindent The result of Theorem \ref{consistency} allows us to identify optimal structural conditions for the Deep-LSE model. For example, let $R_n=\sum_{\ell=1}^L K_{\ell}$ be the total number of affine terms. For a fixed $L, W \asymp c(d) R_n$, the consistency condition becomes, up to constants,
$$
R_n V_n^2\left(\log R_n+L \log V_n\right)=o(n).
$$
A possible scenario in which the condition is satisfied is $V_n=O(1)$ and fixed depth. The condition gives $R_n \log R_n=o(n)$. So it grows almost linearly in $n$ and $R_n=\frac{n}{(\log n)^{1+\delta}}$ is a possible solution. Another scenario consists in $V_n=O(1)$ and the depth grows. Now $R_n\left(\log R_n+L_n\right)=o(n)$, and a possible solutions is
$$
R_n=\frac{n}{(\log n)^{1+\delta}}, \quad L_n=o(\log n).
$$

\subsection{Optimal Stopping Criteria for Transfer Learning}

Another key design choice concerns when to stop the transfer-learning fine-tuning step, which forms the second phase of the inference procedure. Our approach is simple, the second phase of the estimation procedure stops when the benefit of continuing it is outweighed by the costs. In practice, we quantify the cost by evaluating the divergence between the weights of the Deep-LSE during the first and second estimation steps. The rationale is that when divergence grows, the model is estimating a function that is far from the initial source function, and hence we stop the fine-tuning.

Let $P$ be a prior distribution of the pre-trained estimator weights $w_0$. In order to obtain the analytical expression, we consider a Gaussian prior for both the initial estimator weights $P$ and the same weights during fine-tuning $Q$
\begin{equation*}
P=\mathcal{N}\!\big(w_0,\;\Sigma_P\big),\qquad
Q_t=\mathcal{N}\!\big(\mu_t,\;\Sigma_Q\big),
\label{eq:gaussians}
\end{equation*}
where $\mu_t$ the mean of $Q_t$, and $\Sigma_Q$ is the posterior covariance. The Bayes bounds allows us to control the population risk $R(Q_t)=\mathbb{E}_{w\sim Q_t}[R(w)]$ via the empirical target risk $\widehat R(Q_t)=\mathbb{E}_{w\sim Q_t}[\widehat R(w)]$ as
\begin{equation*}
R(Q_t)\;\le\;\widehat R(Q_t)\;+\;\sqrt{\frac{\mathrm{KL}(Q_t\|P)\;+\;\ln\!\frac{2\sqrt n}{\delta}}{2(n-1)}}
\qquad\text{with probability at least }1-\delta.
\label{eq:pac-bayes}
\end{equation*}
where $KL$ is the Kullback-Leibler divergence between the estimator weights distributions of the pre-training and fine-tuning phase
\begin{equation*}
\mathrm{KL}(Q\|P)
=\frac{1}{2}\Big(
\mathrm{tr}(\Sigma_P^{-1}\Sigma_Q)
+(\mu_P-\mu_Q)^\top\Sigma_P^{-1}(\mu_P-\mu_Q)
-p+\ln\frac{\det \Sigma_P}{\det \Sigma_Q}
\Big).
\label{eq:kl-gaussians}
\end{equation*}

Define the objective
\begin{equation*}
\mathcal{B}(t)\;=\;\underbrace{\widehat R(Q_t)}_{\text{empirical fit}}
\;+\;c\;\sqrt{\mathrm{KL}(Q_t\|P)}\!
\label{eq:free-energy}
\end{equation*}
with $c>0$ a constant. We stop at the first point in time $t^\star$ that satisfies
\begin{equation*}
\frac{d}{dt}\,\mathcal{B}(t)\Big|_{t=t^\star}=0
\quad\Longleftrightarrow\quad
\Big|\tfrac{d}{dt}\widehat R(Q_t)\Big|
\;=\;
c\;\tfrac{d}{dt}\sqrt{\mathrm{KL}(Q_t\|P)}\ \ \text{at }t=t^\star.
\label{eq:stationary}
\end{equation*}


\section{Simulation Studies}\label{simulation_section}

We perform a simulation study using the  \textcite{bates1996jumps} stochastic volatility model. We simulate a cross-section of option prices and market conditions (liquid and illiquid), and compare the estimated RND with the ground truth implied by the data-generating process. For completeness, we also test the simulation study on the \textcite{kou2002jump}, \textcite{andersen2002empirical}, and Three-Factor Double Exponential Stochastic volatility (\cite{andersen2015risk}) models in Appendix \ref{more_sim}. The Bates stochastic volatility jump--diffusion model is
\begin{align}
    dS_t &= S_t\Bigl[(r - \lambda k)\,dt + \sqrt{v_t}\,dW_t^{(1)} + (J_t - 1)\,dN_t\Bigr], \\
    dv_t &= \kappa(\theta - v_t)\,dt + \eta\sqrt{v_t}\,dW_t^{(2)}, \\
    \langle dW_t^{(1)}, dW_t^{(2)}\rangle &= \rho\,dt, \qquad
    \mathbb{P}(dN_t = 1) = \lambda\,dt.
\end{align}
with $k = \mathbb{E}[J_t - 1] = e^{\mu_j + \frac{1}{2}\sigma_j^2} - 1$ and $Y_t \sim \mathcal{N}(\mu_j,\sigma_j^2)$. Here, $S_t>0$ denotes the stock price, $v_t>0$ is the instantaneous variance, $r\in\mathbb{R}$ is the risk-free rate, $\sigma>0$ is a constant diffusion volatility, $\mu_j\in\mathbb{R}$ and $\sigma_j>0$ are the mean and standard deviation of the normal jump sizes $Y_t$, and $k$ is the mean relative jump size. The processes $(W_t^{(1)})_{t\ge0}$ and $(W_t^{(2)})_{t\ge0}$ form a two-dimensional Brownian motion with correlation $\rho\in[-1,1]$, $(N_t)_{t\ge0}$ is a Poisson process with jump intensity $\lambda>0$ independent of $(W_t^{(1)},W_t^{(2)})$, and $(J_t)_{t\ge0}$ are i.i.d.\ lognormal jump multipliers given by $J_t=e^{Y_t}$. The parameters $\kappa>0$, $\theta>0$, and $\eta>0$ are respectively the speed of mean reversion, the long-run mean, and the volatility of the variance process $(v_t)_{t\ge0}$. In Table \ref{tab:Bates_param} we report the parameters we use for the simulation.

\begin{table}[h]
    \centering
    \begin{tabular}{c c c c c c c c c c c}
    \hline
    $S_0$ & $r$ & $v_0$ & $\kappa$ & $\theta$ & $\eta$ & $\rho$ & $\lambda$ & $\mu_j$ & $\sigma_j$ \\
    \hline
    100 & 0.06 & 0.09 & 3.0 & 0.07 & 0.3 & -0.34 & 0.5 & -0.09 & 0.45 \\
    \end{tabular}
    \caption{Simulated parameters for Bates model}
    \label{tab:Bates_param}
\end{table}



In Fig. \ref{fig:setup_bates}, the blue line is the implied volatility curve that we retrieve with the synthetic option prices, while the orange line is the proxy implied volatility (IV) curve. To obtain the target implied volatility curve, we assume it is a translation of the source implied volatility curve. Specifically, we apply a $-10\%$ decrease on the implied volatilities (y-axis) and a $+20\$$ increase on the strikes (x-axis). In Appendix \ref{more_sim}, we test the framework generating the proxy and illiquid stocks from two different data-generating processes, thereby obtaining two distinct implied volatilities directly, rather than translating the curve. We use the entire proxy to train the model, then, leveraging transfer learning, we fine-tune the pretrained model on the illiquid market observations. The illiquid strikes are simulated by random sampling in the interval of option prices that are 10\%-25\% out of the money (green illiquid strikes) and by random sampling in the interval of option prices that are 10\%-25\% in-the-money (orange illiquid strikes). 
They are represented by orange and green dots, with the associated implied volatility and option prices. 
The first experiment we perform involves emulating the condition of a severely illiquid market. In fact, we select only the three in-the-money strikes to represent the illiquid market (Strikes $K=83, 94, 95$), and maturity one year. 

\begin{figure}[H]
    \centering
    \begin{subfigure}{0.6\textwidth}
        \includegraphics[width=\linewidth]{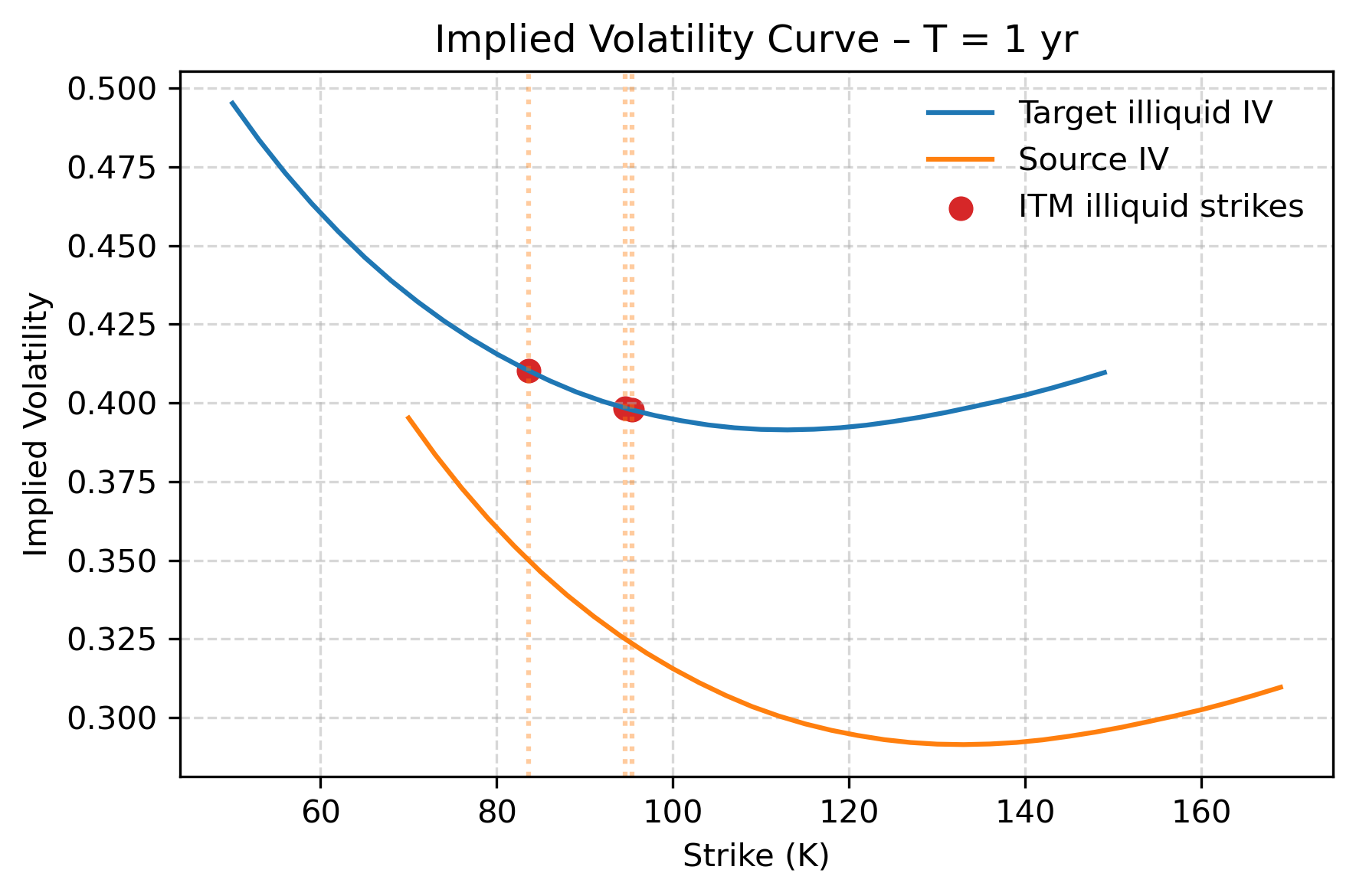} 
    \end{subfigure}
    \caption{Setup of implied volatility curves using Bates model. In blue, the illiquid target implied volatility. In orange, the liquid proxy we use to train the model.}
    \label{fig:setup_bates}
\end{figure}

For the estimation of the IV curve of the liquid proxy and illiquid target, we use the Deep-LSE model with 2 layers and 3 affine terms each. In Fig. \ref{fig:rnd_bates}, we compare the estimate of the RND of our model versus the interpolation of quadratic splines. The blue curve is the ground-truth RND, while the orange and green represent the estimate of the RND of our model and quadratic splines, respectively. It is possible to observe how quadratic splines are not able to recover correctly the ground-truth illiquid RND, especially on the right tail. In contrast, our Deep-LSE, after performing transfer learning, produces a good fit of the RND.

\begin{figure}[H]
    \centering
    \begin{subfigure}{0.76\textwidth}
        \includegraphics[width=\linewidth]{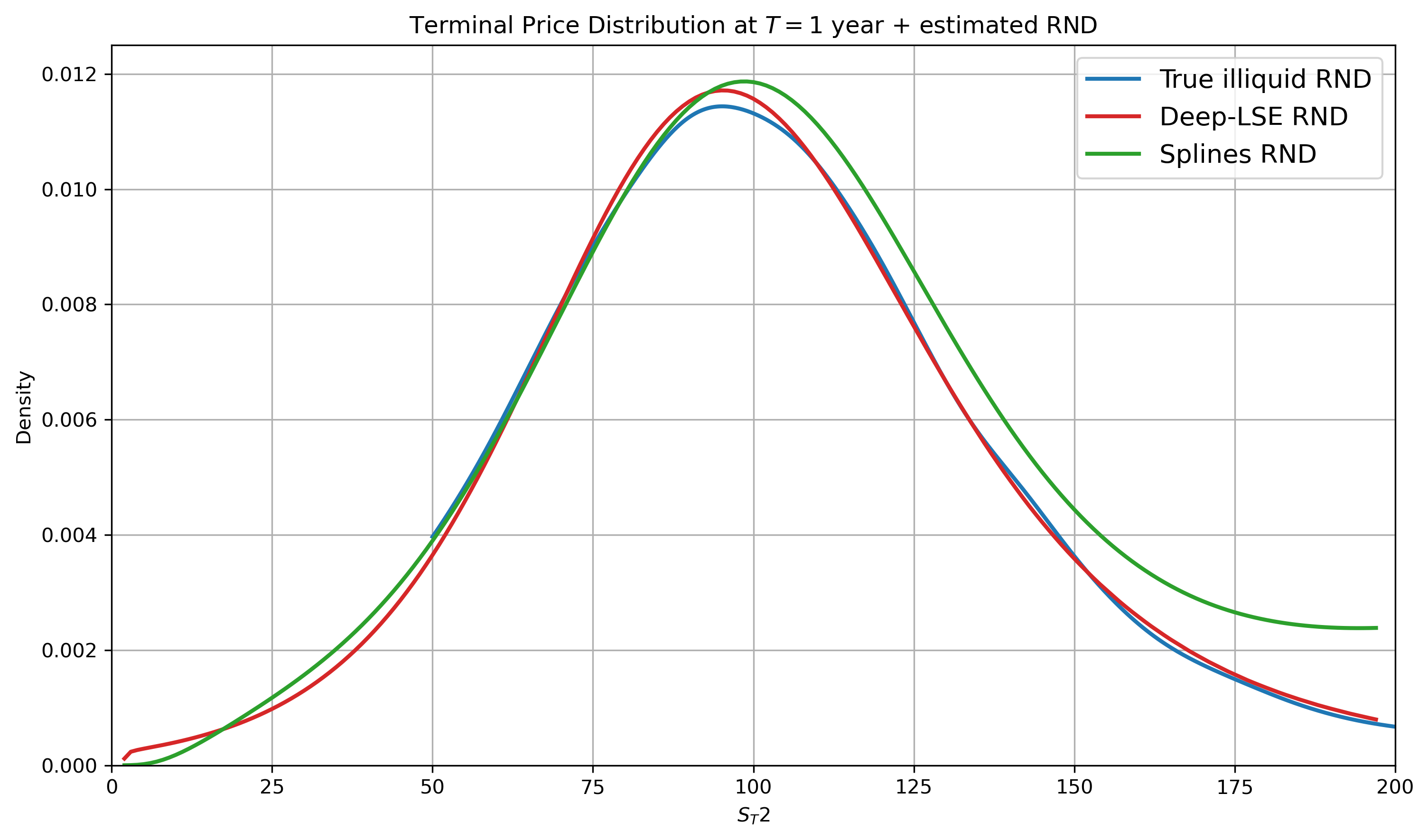} 
    \end{subfigure}
    \caption{Illiquid RND recovery of Deep-LSE (orange curve) and quadratic splines (green curve) in comparison with the target ground truth simulated RND (blue curve).}
    \label{fig:rnd_bates}
\end{figure}

In Fig. \ref{fig:train_source_bates}, we observe the capacity of our Deep-LSE model to approximate convex functions. In this case, the convex function represents the implied volatility curve of the liquid, proxy, asset. During the learning process, the model starts learning the shape and curvature of the proxy implied volatility, and between iteration $6-8$ it starts fitting a convex function. At the end of the learning process on the source data (liquid proxy), the model perfectly fits its implied volatility curve.

\begin{figure}[]
\centering
\begin{minipage}{\textwidth}
\begin{subfigure}{0.42\textwidth}
  \includegraphics[width=\linewidth]{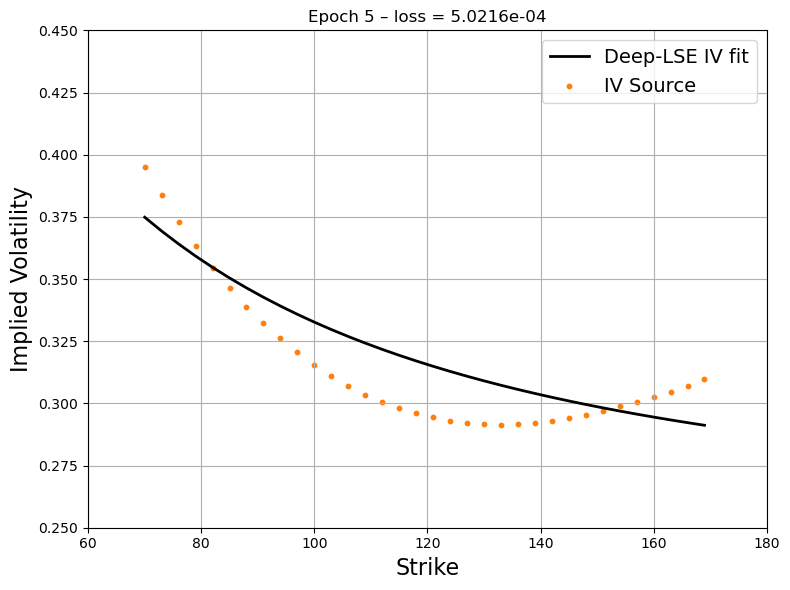}
\end{subfigure}\hspace{1cm}
\begin{subfigure}{0.42\textwidth}
  \includegraphics[width=\linewidth]{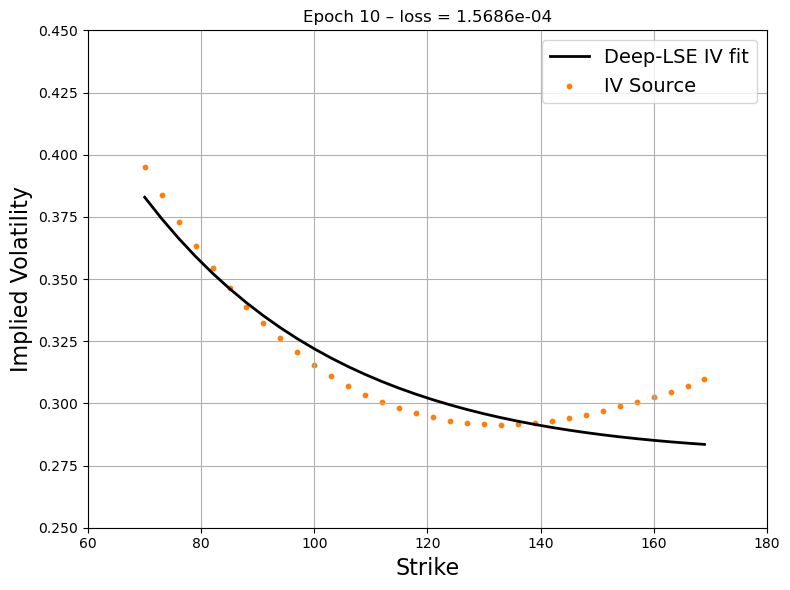}
\end{subfigure}

\medskip

\begin{subfigure}{0.42\textwidth}
  \includegraphics[width=\linewidth]{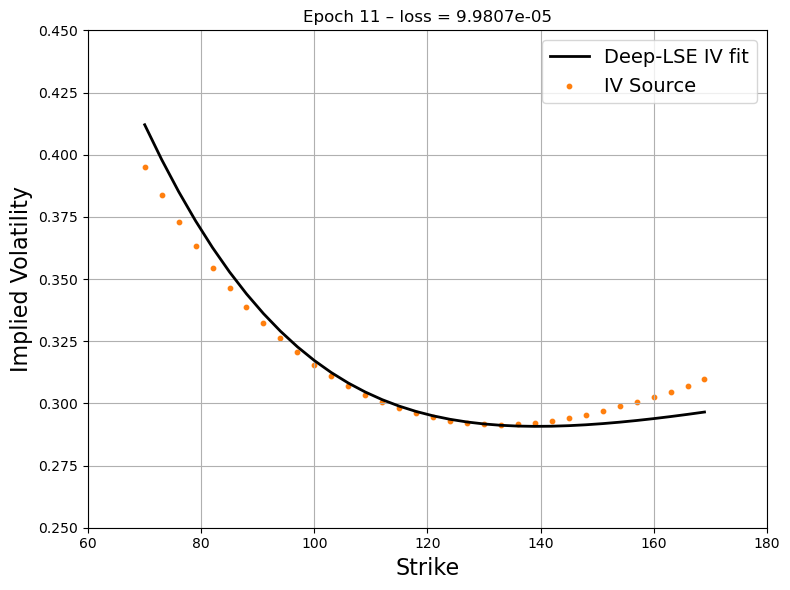}
\end{subfigure}\hspace{1cm}
\begin{subfigure}{0.42\textwidth}
  \includegraphics[width=\linewidth]{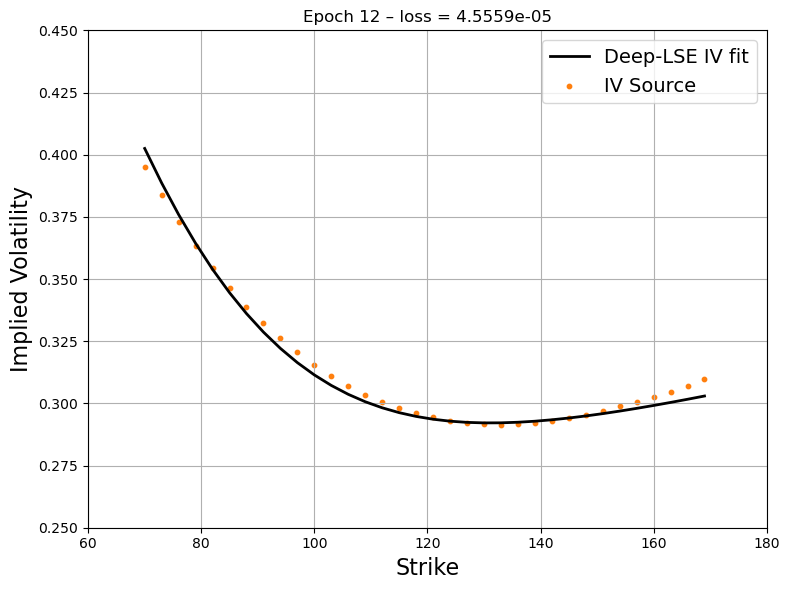}
\end{subfigure}

\medskip

\begin{subfigure}{0.42\textwidth}
  \includegraphics[width=\linewidth]{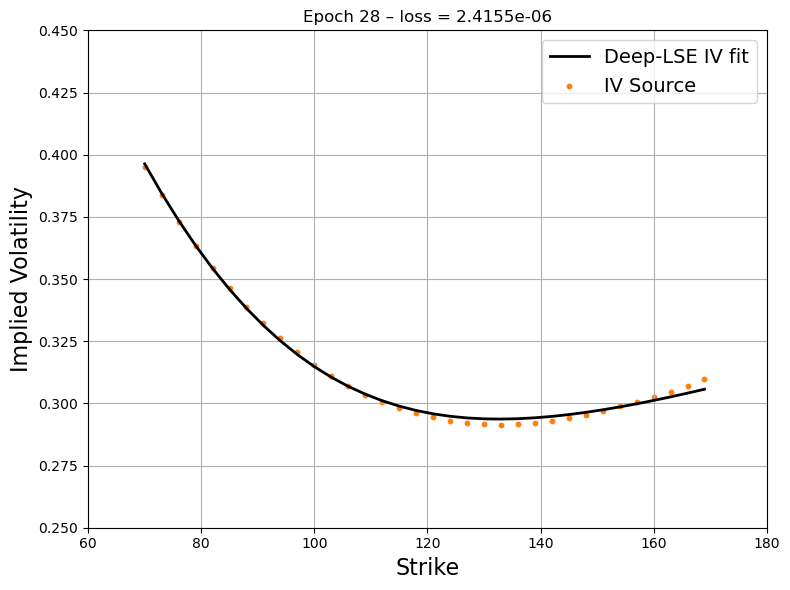}
\end{subfigure}\hspace{1cm}
\begin{subfigure}{0.42\textwidth}
  \includegraphics[width=\linewidth]{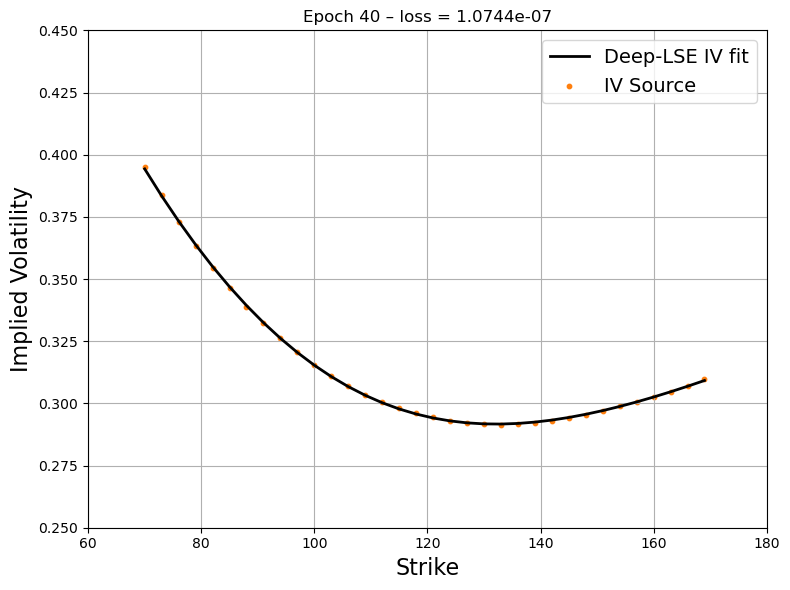}
\end{subfigure}

\caption{First step recovery - Source Deep-LSE Fit. The blue dots represent the implied volatility curve of option quotes of the liquid proxy asset while the blue solid line represents the fit of the interpolating function of the Deep-LSE model.}
\label{fig:train_source_bates}
\end{minipage}
\end{figure}

The second step of the estimation process of the illiquid RND involves performing transfer learning to fit the model on the (illiquid) target option data. To achieve this, we fine-tune with illiquid option data the model pre-trained on the liquid proxy.

We observe this process in Fig. \ref{fig:train_target_bates}. The starting point is the orange curve represents the implied volatility curve that the model has learned from the simulated proxy implied volatility (red curve), and they coincide. As the fine-tuning continues, the green curve moves from the starting point, indicating that the model is learning the new function of the volatility curve of the illiquid target strikes (orange points). We observe that the model adapts to the illiquid strikes by adjusting the level, then its shape and convexity. At the end of the fine-tuning process, the model has learned accurately (green curve) the target implied volatility of the illiquid market (blue curve).

Overall, we observe in Fig. \ref{fig:train_target_bates} that the Deep-LSE model is able to recover the IV surface and RND in conditions of extreme illiquidity, with as few as three option quotes. In addition, we gather recovery in areas where option quotes are missing.

\begin{figure}[H]
\centering
\begin{minipage}{\textwidth}
\begin{subfigure}{0.42\textwidth}
  \includegraphics[width=\linewidth]{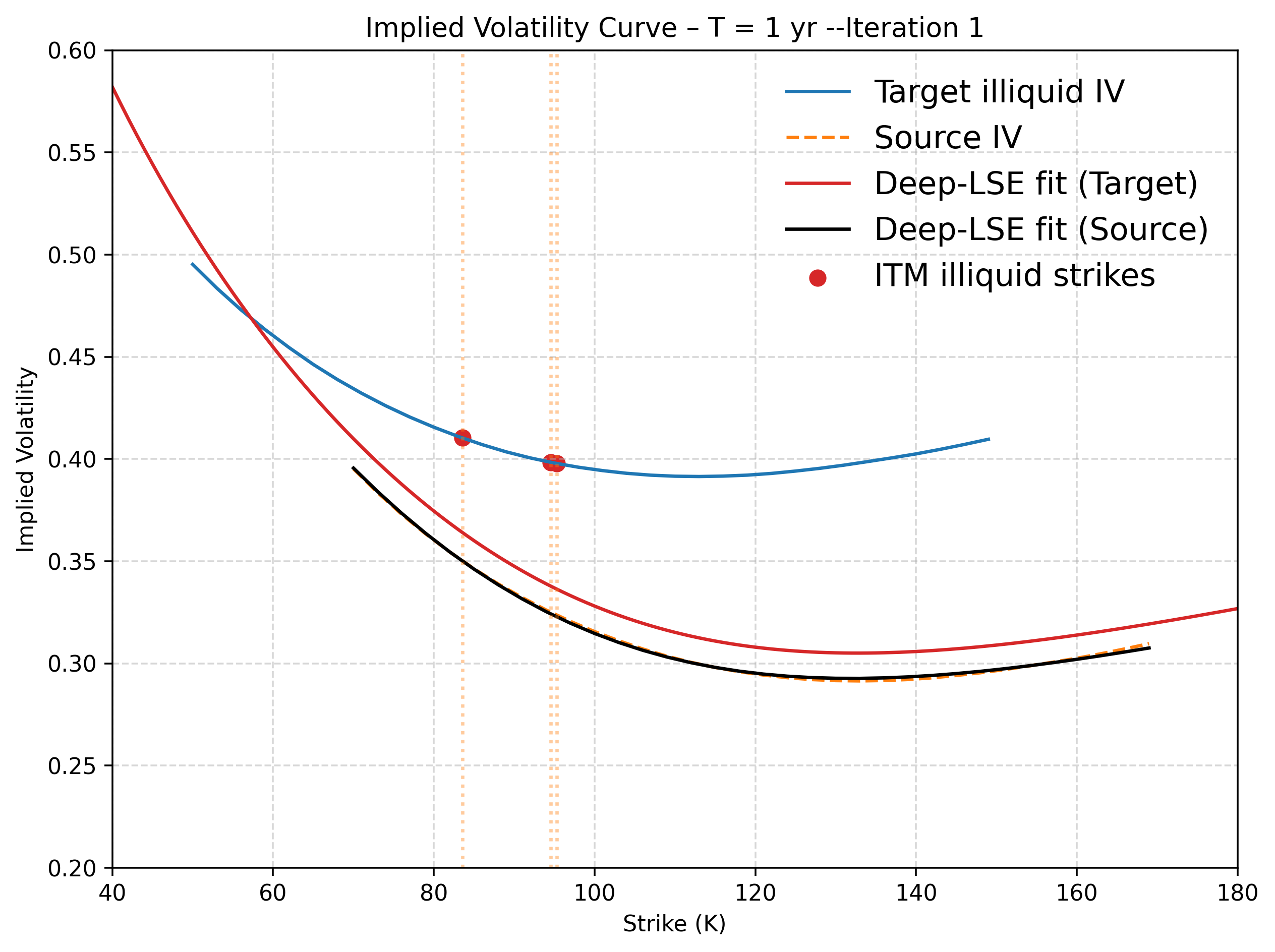}
\end{subfigure}\hspace{1cm}
\begin{subfigure}{0.42\textwidth}
  \includegraphics[width=\linewidth]{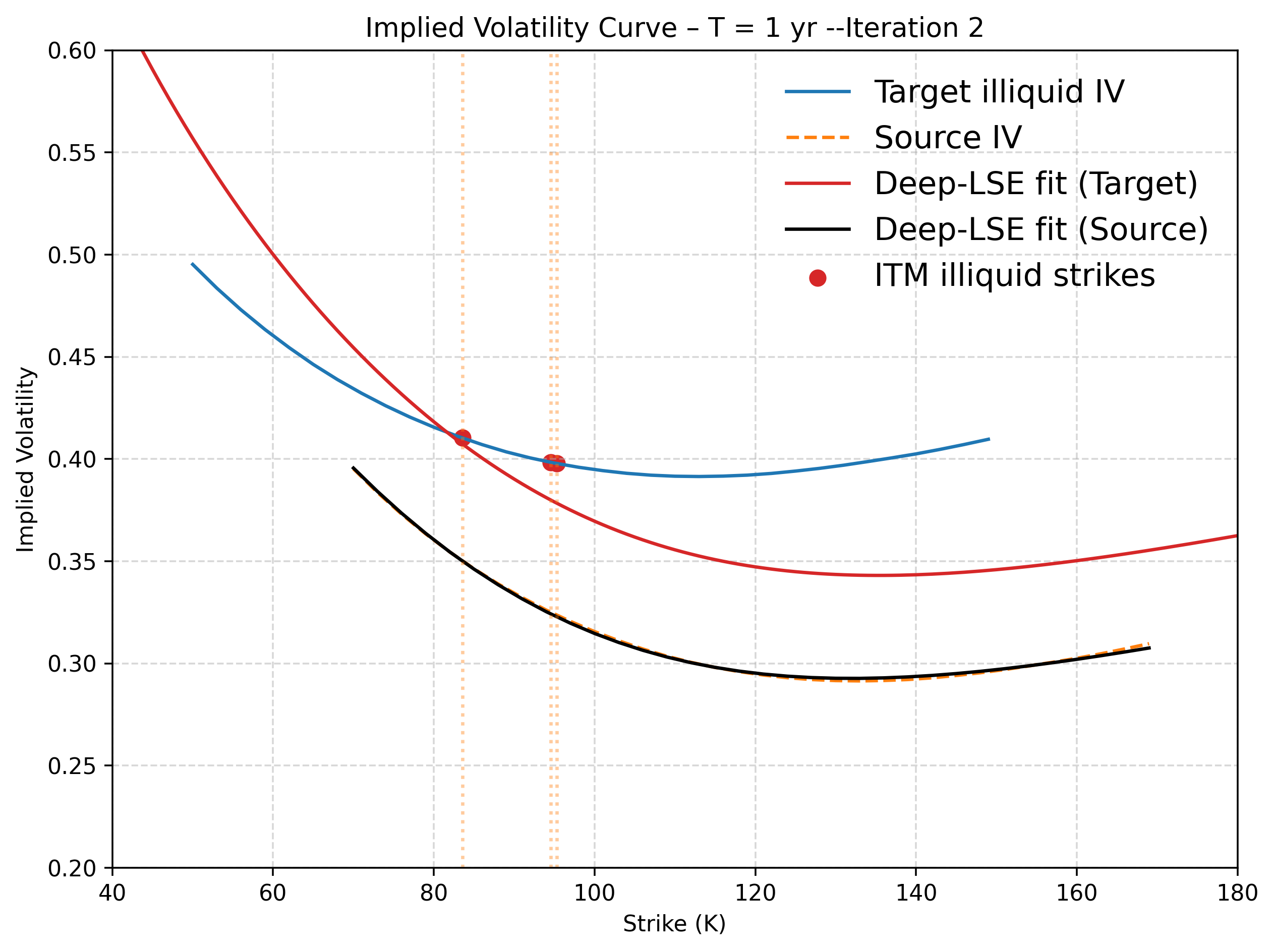}
\end{subfigure}

\medskip

\begin{subfigure}{0.42\textwidth}
  \includegraphics[width=\linewidth]{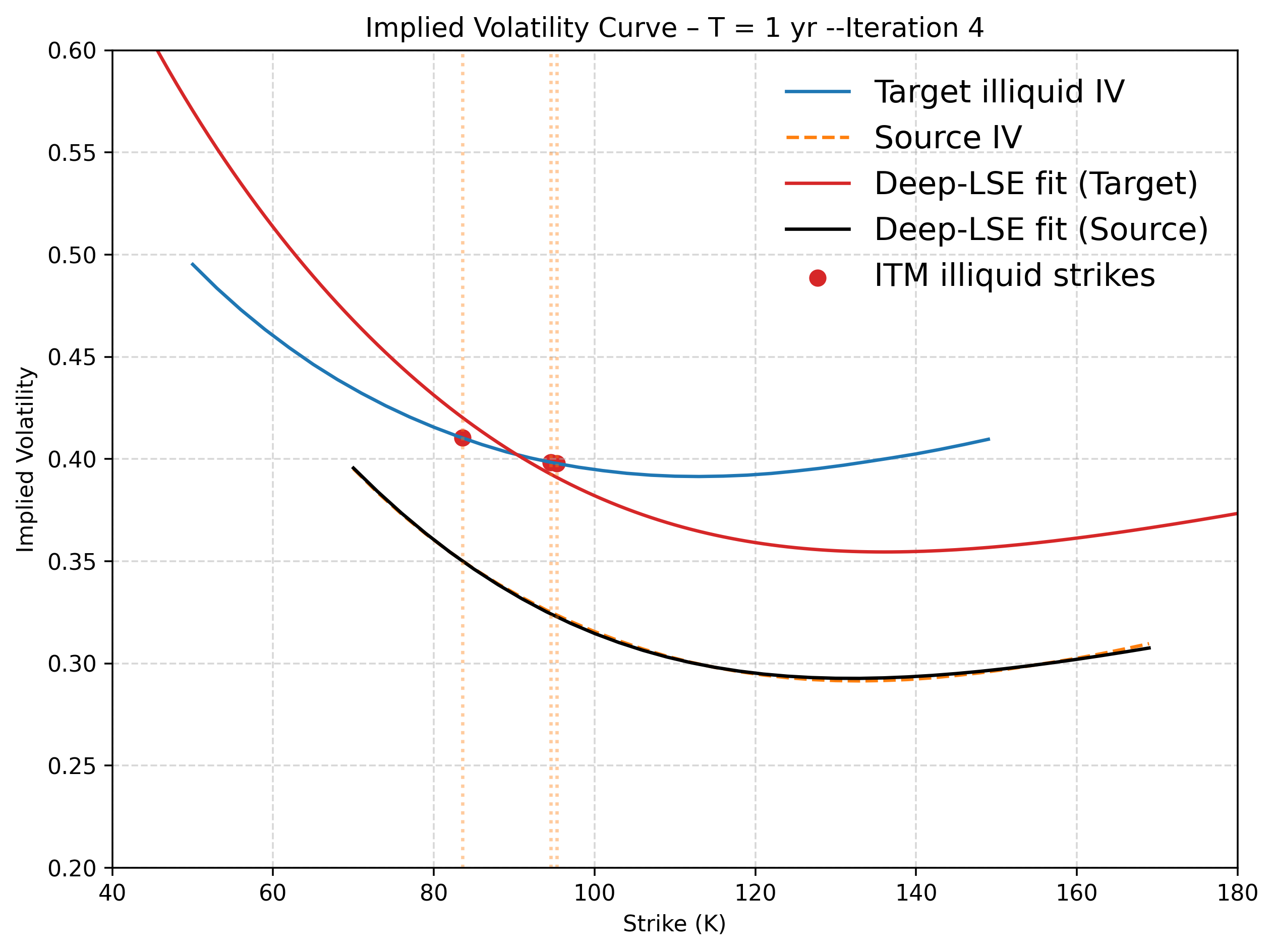}
\end{subfigure}\hspace{1cm}
\begin{subfigure}{0.42\textwidth}
  \includegraphics[width=\linewidth]{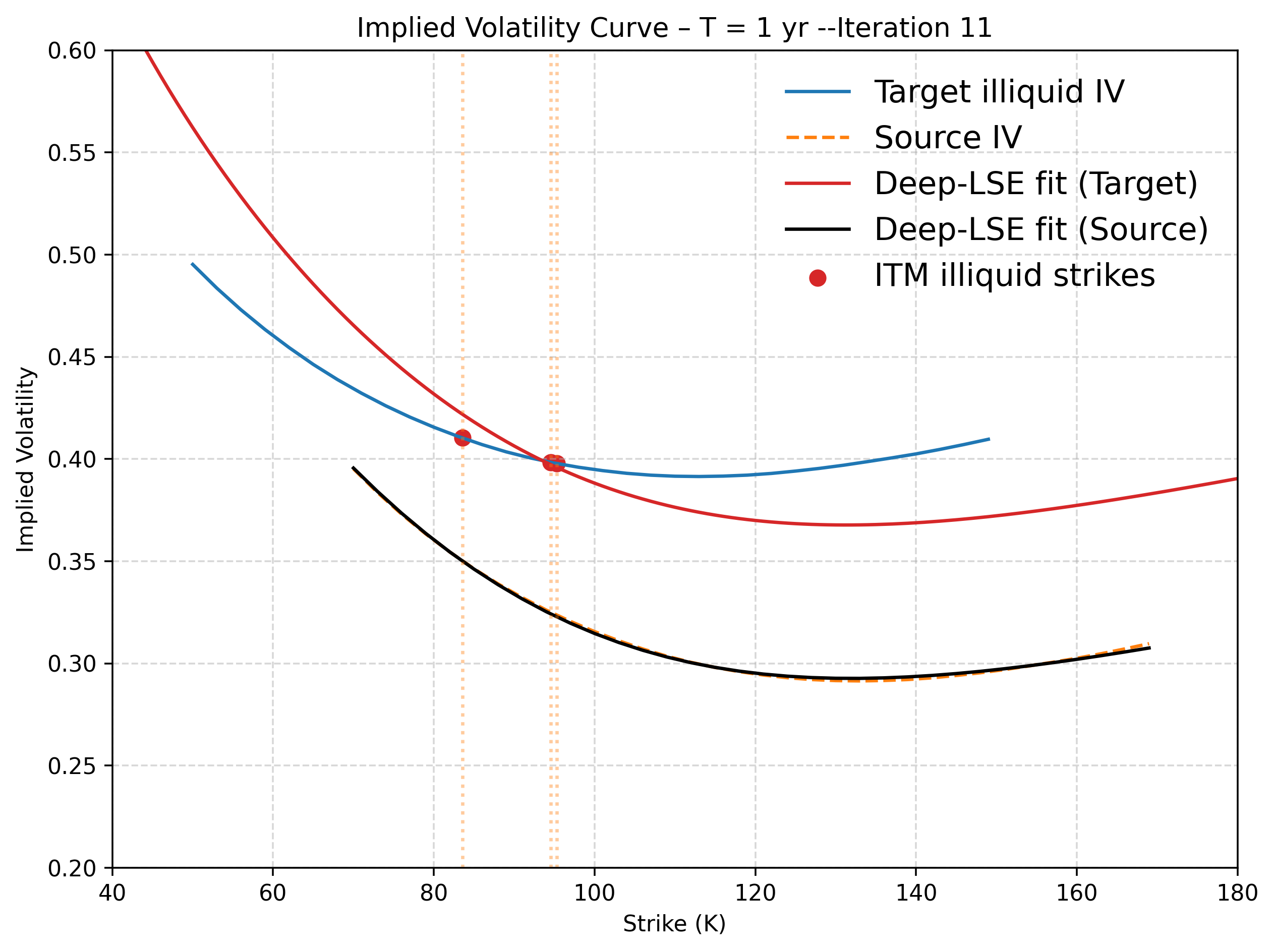}
\end{subfigure}

\medskip

\begin{subfigure}{0.42\textwidth}
  \includegraphics[width=\linewidth]{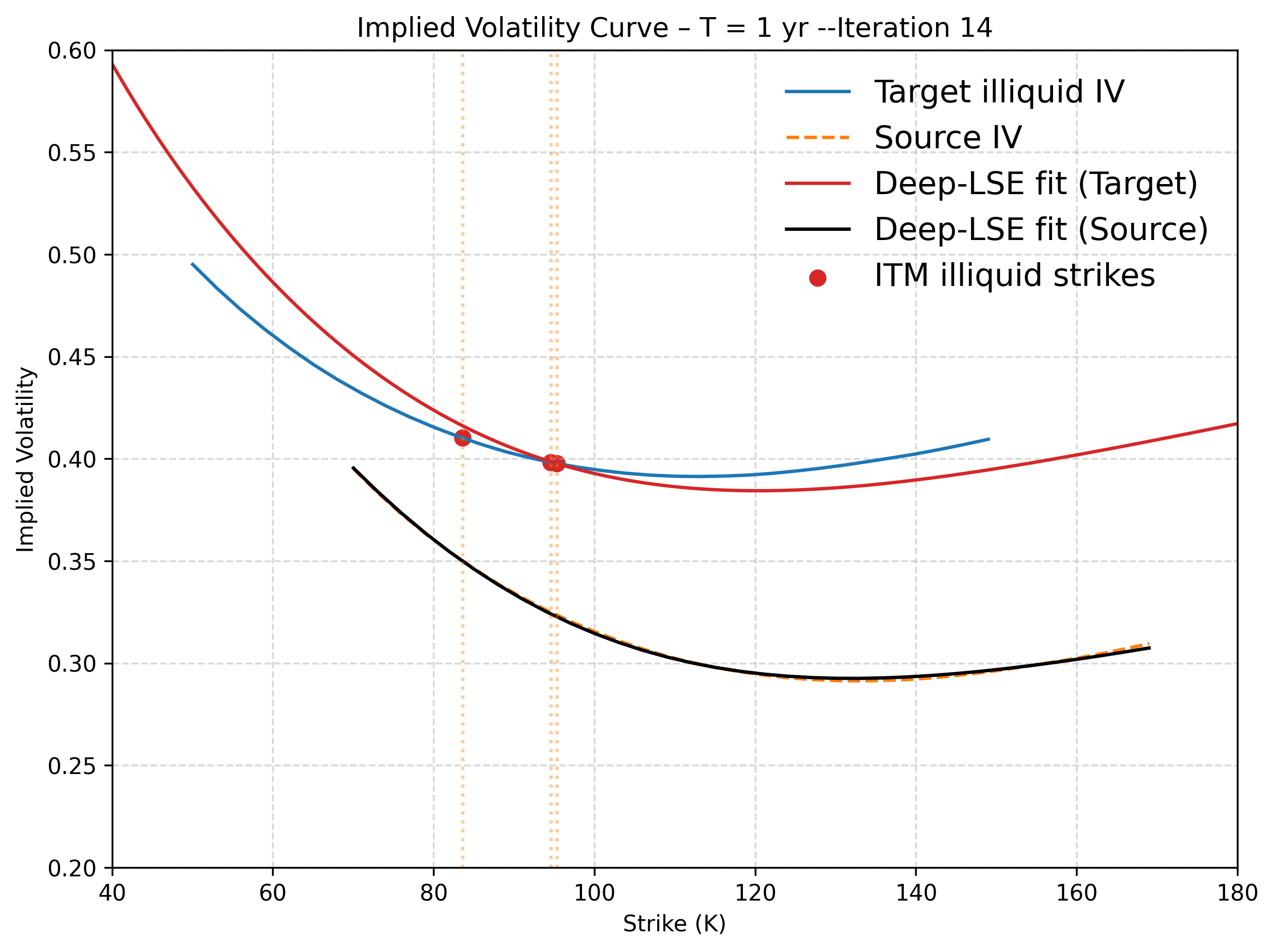}
\end{subfigure}\hspace{1cm}
\begin{subfigure}{0.42\textwidth}
  \includegraphics[width=\linewidth]{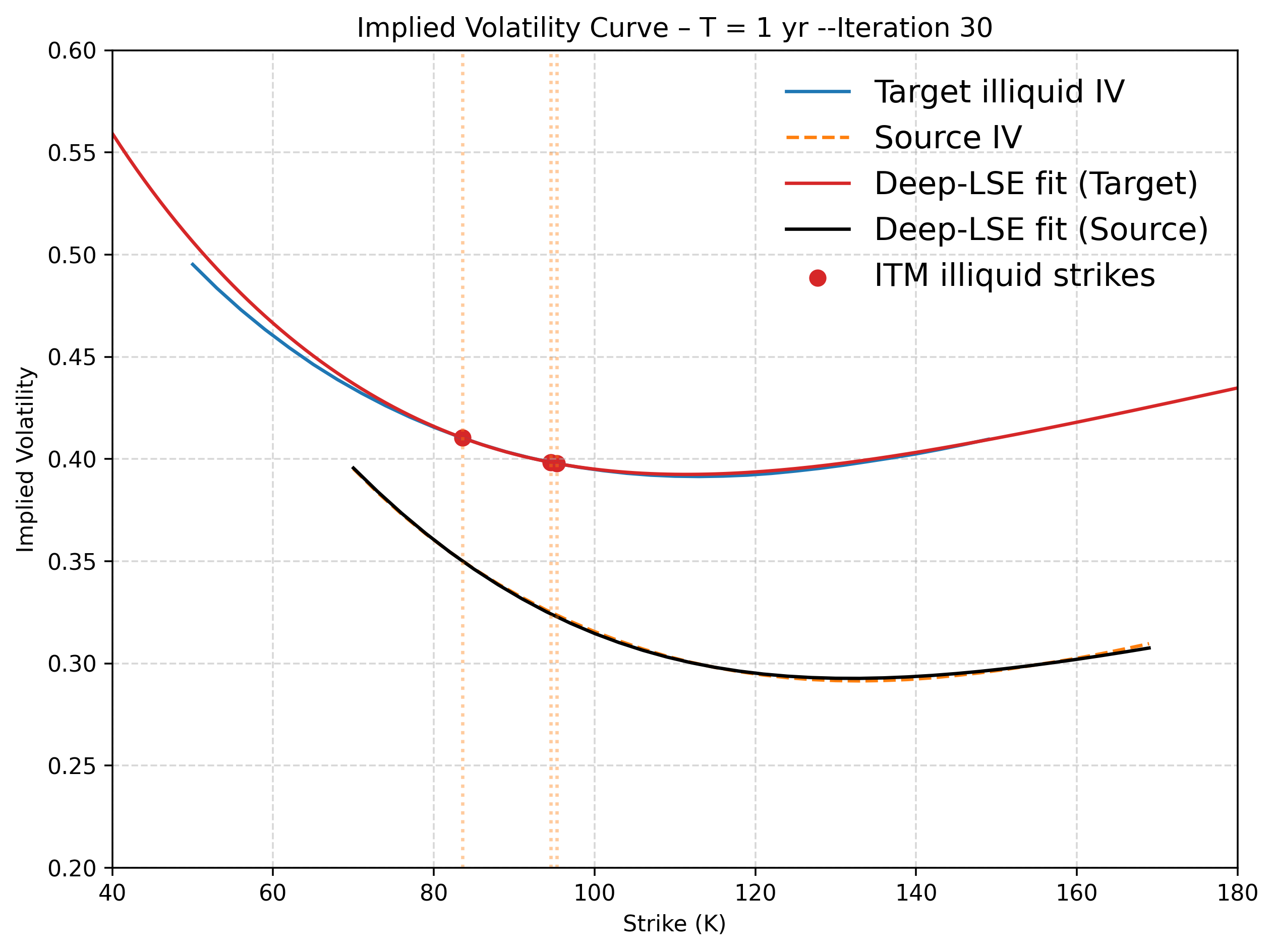}
\end{subfigure}

\caption{Second step recovery - Target Deep-LSE Fit. The model only sees the illiquid (orange) quotes. The blue solid line is the true implied volatility function that the Deep-LSE recovers (green solid line). The solid orange and red curves represent the true and estimated IV curve of the first step. }
\label{fig:train_target_bates}
\end{minipage}
\end{figure}

\section{Empirical Analysis}\label{empa_section}


We test our framework empirically on the SPX option data which consists of January 2015 SPX option data for the source (proxy) and January 2016 SPX option data for the target. 

We emulate the conditions of an illiquid market by censoring the data and using just three random quotes of call options. In particular, this allows us to recover the ground truth from the entire panel of option data, and then compare it with the estimate of the RND from our model, which we train using only three call option quotes with 1 month maturity.

We investigate two forms of severe market illiquidity by randomly selecting three in-the-money call option quotes in the first scenario and three out-of-the-money call option quotes in the second scenario described in Table \ref{tab:strikes_prices_cases}. We emphasize that these three option quotes constitute the only information on the terminal RND available to the models.

\begin{table}[htbp]
    \centering
    \caption{Strikes and prices randomly sampled from in-the-money (ITM) and out-of-the-money (OTM) SPX option quotes to emulate the conditions of an illiquid market.}
    \label{tab:strikes_prices_cases}
    \begin{tabular}{cccc}
        \hline
         & Strike & Price & Implied Volatility (IV)\\
        \hline
         & 1950 & 82.95 & 0.204 \\
        Scenario 1 (ITM) & 1995 & 51.15 & 0.186 \\
         & 2180 & 0.45  & 0.129\\
        \hline
         & 2145 & 1.32  & 0.128\\
        Scenario 2 (OTM) & 2200 & 0.30  & 0.134\\
         & 2230 & 0.125 & 0.137\\
        \hline
    \end{tabular}
\end{table}

Market data is noisy, and we illustrate in Fig. \ref{fig:smooth_empa} the smoothing approach required to recover the ground truth RND. After smoothing option prices, we differentiate twice the Black-Scholes function with respect to the strike to obtain the ground truth RND, which we recover from a dense set of option quotes.

\begin{figure}[H]
    \centering
    \begin{subfigure}{0.8\textwidth}
        \includegraphics[width=\linewidth]{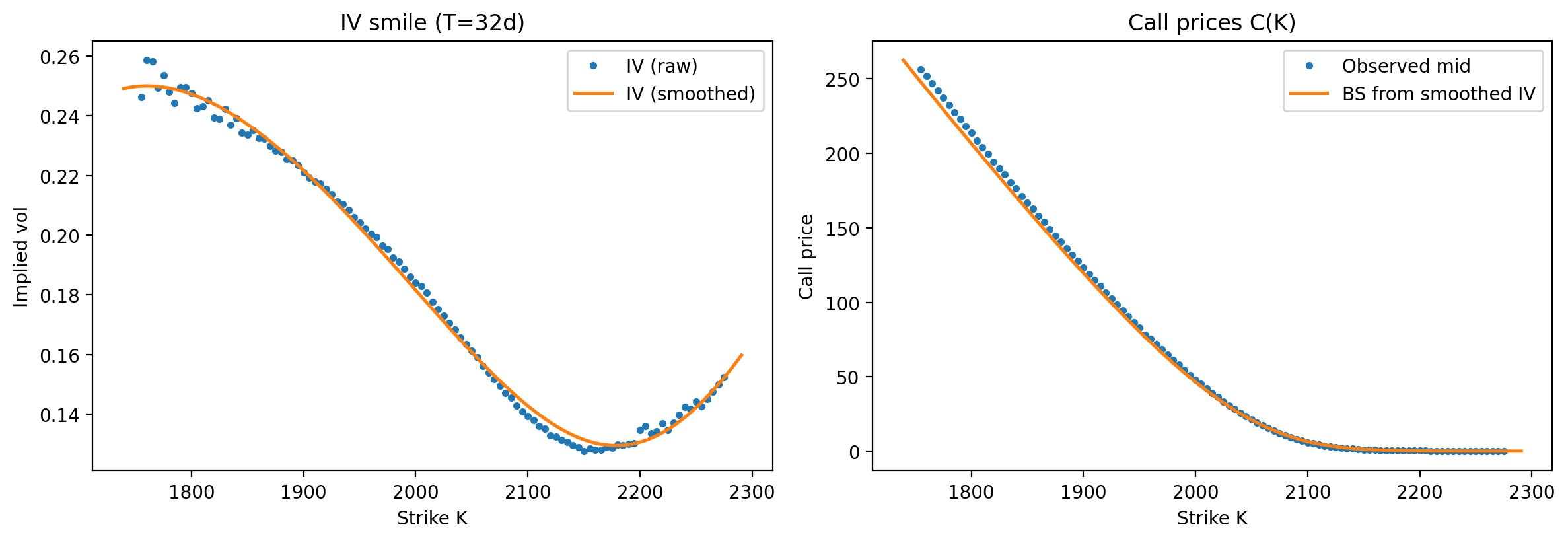} 
    \end{subfigure}
    \caption{Smoothing approach to recover the ground truth RND. On the left panel, the liquid IV curve. On the right, the liquid pricing function for each strike.}
    \label{fig:smooth_empa}
\end{figure}

In Fig. \ref{fig:setup_iv_empa} we illustrate the implied volatility curve of the proxy (blue curve) against the target implied volatility curve of the illiquid market (orange curve), where the green crosses represent the three illiquid strikes identified in Scenario 1. This setup with real option market data is particularly challenging because the proxy source data is heavily convex, while the target data comes from an implied volatility curve that is not convex.

\begin{figure}[H]
    \centering
    \begin{subfigure}{0.6\textwidth}
        \includegraphics[width=\linewidth]{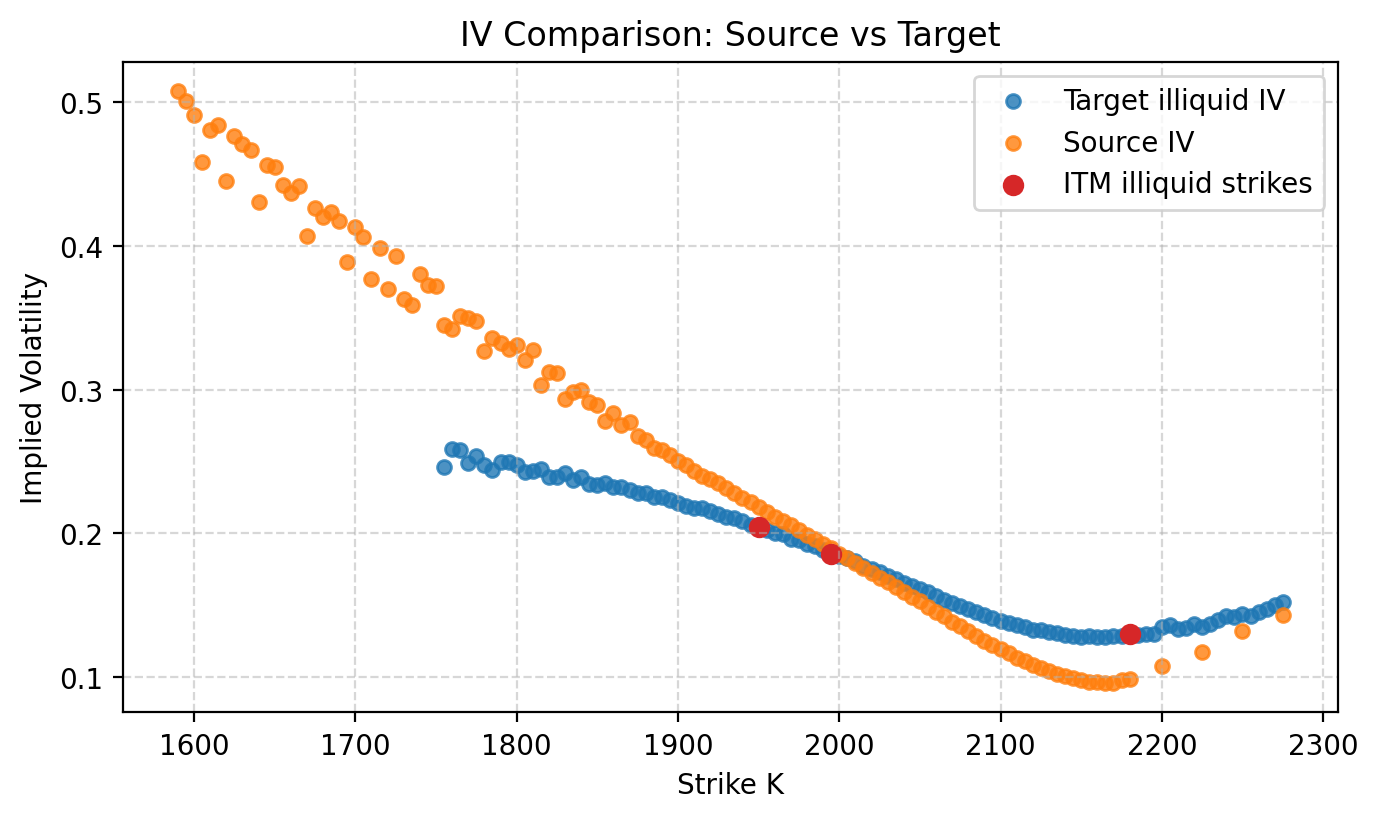} 
    \end{subfigure}
    \caption{Scenario 1 - Setup of implied volatility curves of the empirical analysis on SPX data. In orange, the illiquid target implied volatility. In blue, the liquid proxy we use to train the model. The green crosses are the illiquid strikes.}
    \label{fig:setup_iv_empa}
\end{figure}

Similarly to the simulation studies, we use the Deep-LSE model with 2 layers and 3 affine terms each, as we find it strikes the right balance between flexibility and performance. In Fig. \ref{fig:final_rnd_empa} we illustrate the estimates of the illiquid RND, comparing the Deep-LSE model versus quadratic splines. It is possible to observe that the Deep-LSE model produces a tight recovery of the illiquid RND, particularly in the strike range of $2000-2100$. By contrast, the recovery obtained from quadratic splines substantially deviates from the ground truth illiquid RND, particularly in right tail of the distribution. In Appendix \ref{empa_appendix}, we report the details of the training process of Scenario 1, illustrating the estimation of the liquid proxy and the transfer to the illiquid target.
\begin{figure}[H]
    \centering
    \begin{subfigure}{0.75\textwidth}
        \includegraphics[width=\linewidth]{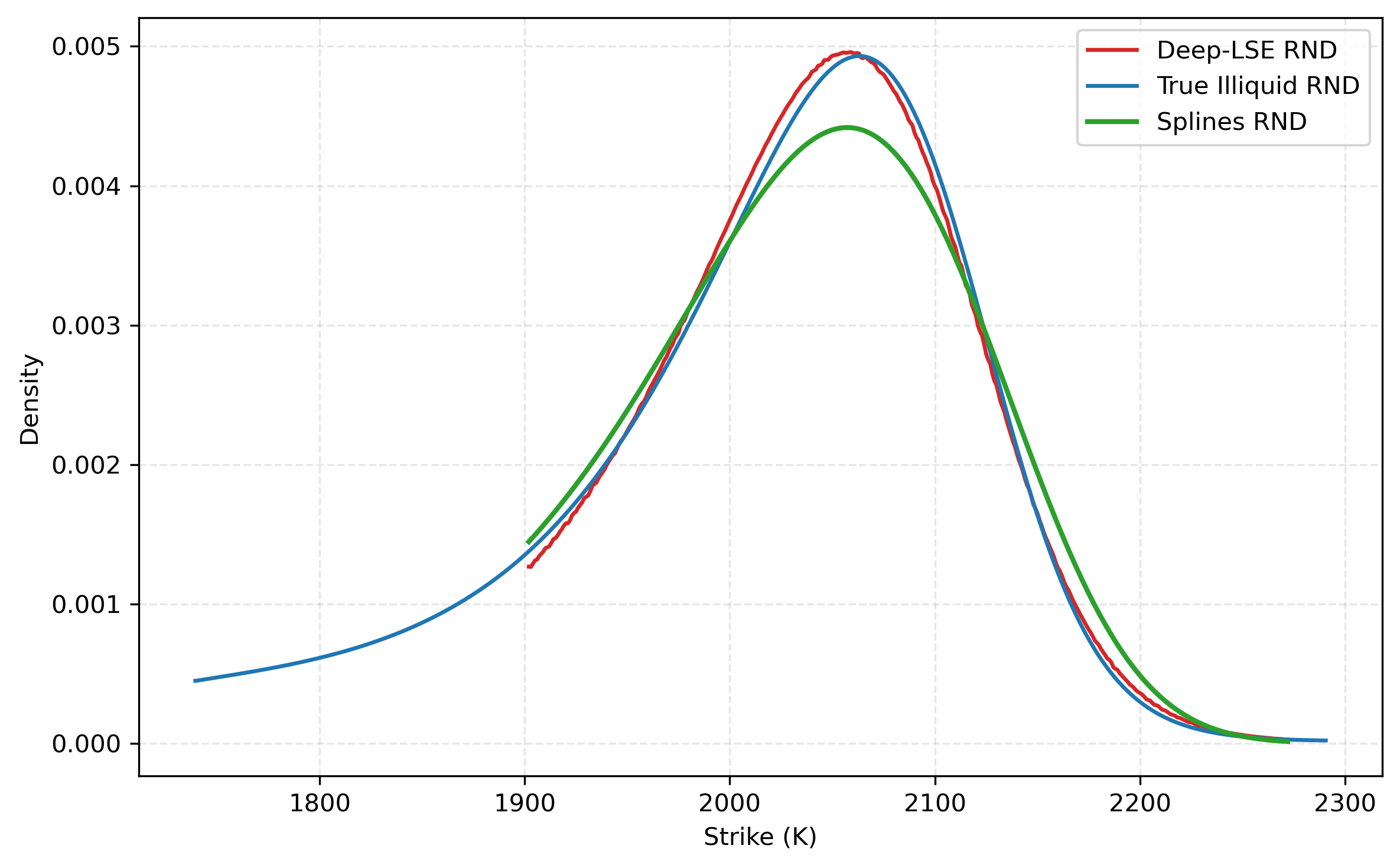} 
    \end{subfigure}
    \caption{Scenario 1 - Illiquid RND recovery of Deep-LSE (blue solid curve) and quadratic splines (green solid line) in comparison with the illiquid ground truth RND (orange dotted line).}
    \label{fig:final_rnd_empa}
\end{figure}
After the Deep-LSE model recovers the target implied volatility function of the target illiquid market, we compute the theoretical price of the option quotes on different strikes. Empirically, the model respects no-arbitrage constraints, as the pricing function of call options is monotone and convex.

We illustrate the validity of our framework under different illiquid market conditions by performing the same empirical analysis on a different set of illiquid strikes (Scenario 2) (Fig. \ref{fig:empa2_strike_+_rnd}). In the first experiment on 2015-2016 SPX data we randomly sample from the left tail consisting of in-the-money option quotes. From the same data, now we randomly sample on the right tail to test out-of-money strikes. The left panel of Fig. \ref{fig:empa2_strike_+_rnd} illustrates the three strikes sampled that represent illiquid market observations.
\begin{figure}[H]
\centering
\begin{minipage}{\textwidth}
\begin{subfigure}{0.55\textwidth}
  \includegraphics[width=\linewidth]{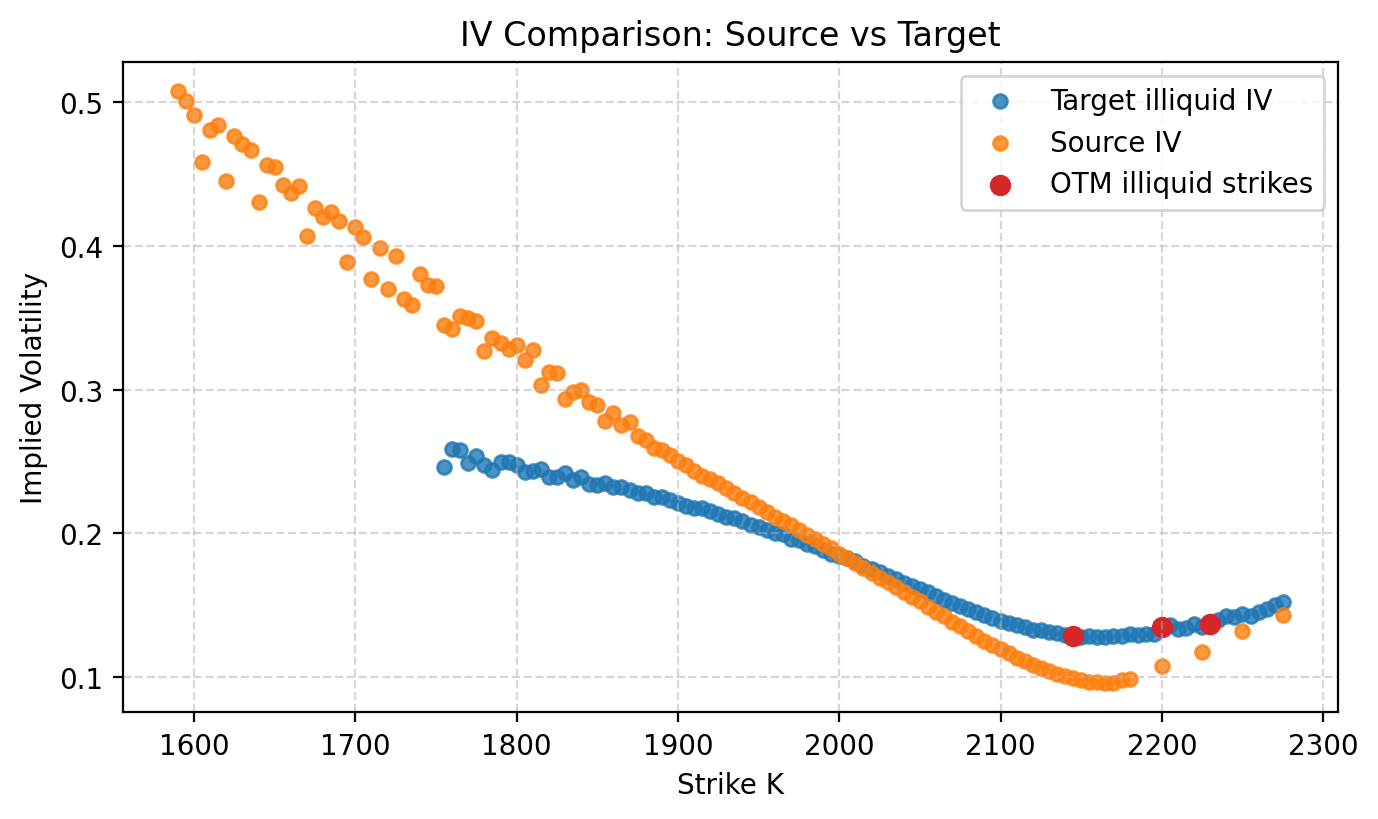}
\end{subfigure}\hfill
\begin{subfigure}{0.55\textwidth}
  \includegraphics[width=\linewidth]{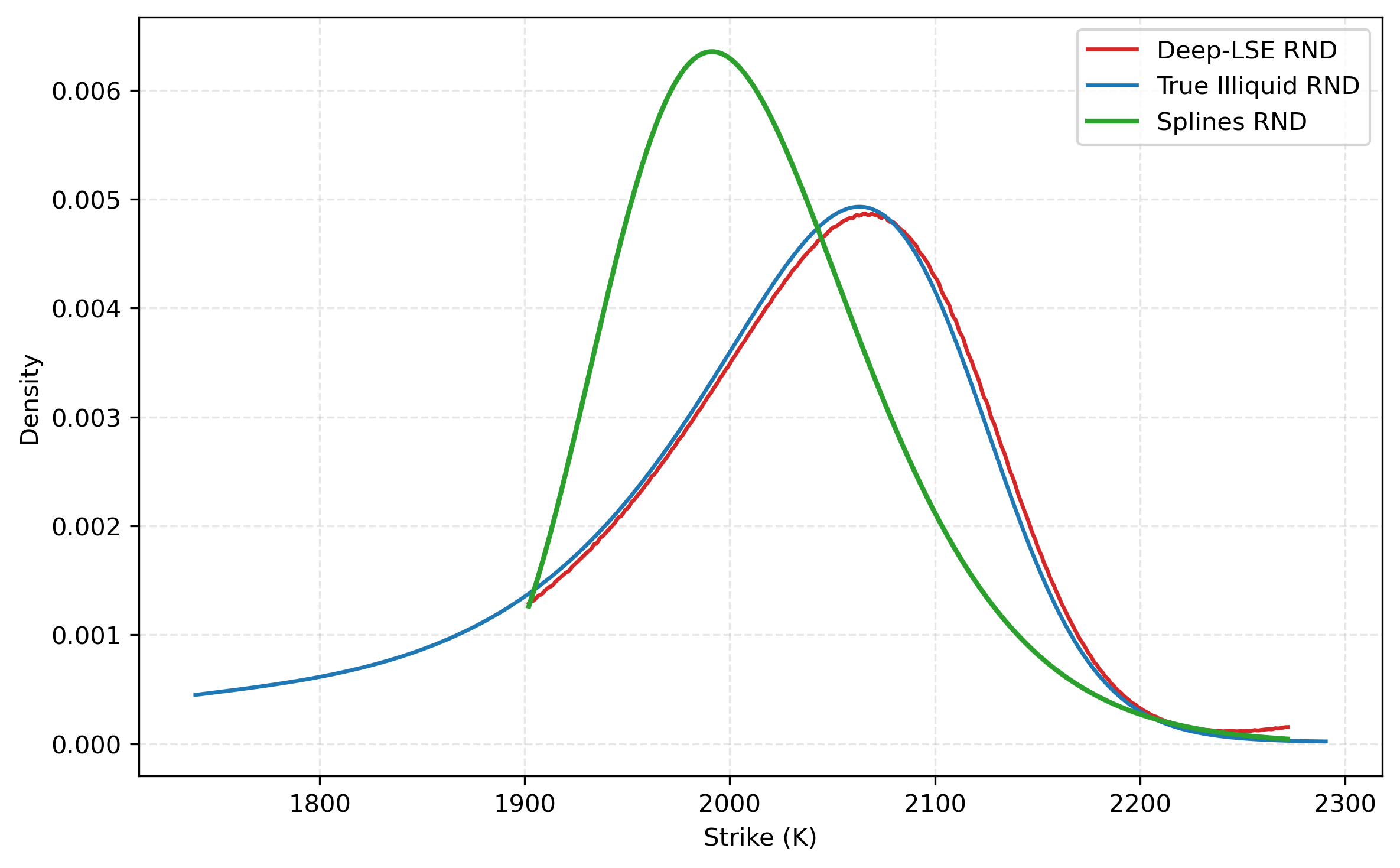}
\end{subfigure}
\caption{Scenario 2 - Setup of implied volatility curves of the empirical analysis on SPX data. On the left panel, strikes selected (green crosses) on the illiquid target (orange) implied volatility curve. On the right panel, illiquid RND recovery of Deep-LSE model (blue solid line) and quadratic splines (green solid line) in comparison with the illiquid ground truth RND (orange dotted line).}
\label{fig:empa2_strike_+_rnd}
\end{minipage}

\end{figure}

In this Scenario, the Deep-LSE model shows an even better recovery with respect to Scenario 1. The right panel of Fig. \ref{fig:empa2_strike_+_rnd} plots the recovery of the illiquid RND for the Deep-LSE model (blue distribution) and quadratic splines (green distribution), comparing them to the ground truth RND (orange dotted distribution).

In Fig. \ref{fig:concave_fit_splines}, we highlight an additional benefit of our model, which guarantees convexity. With noisy and illiquid market quotes, traditional methods might produce concave fits for the implied volatility curve. By contrast, our framework adjusts robustly to illiquid conditions and still recovers a well–behaved RND. This is driven by the pretraining phase, during which they internalize the typical shape of the implied volatility surface and, by extension, of the corresponding risk–neutral density.

\begin{figure}[H]
    \centering
    \begin{subfigure}{0.6\textwidth}
        \includegraphics[width=\linewidth]{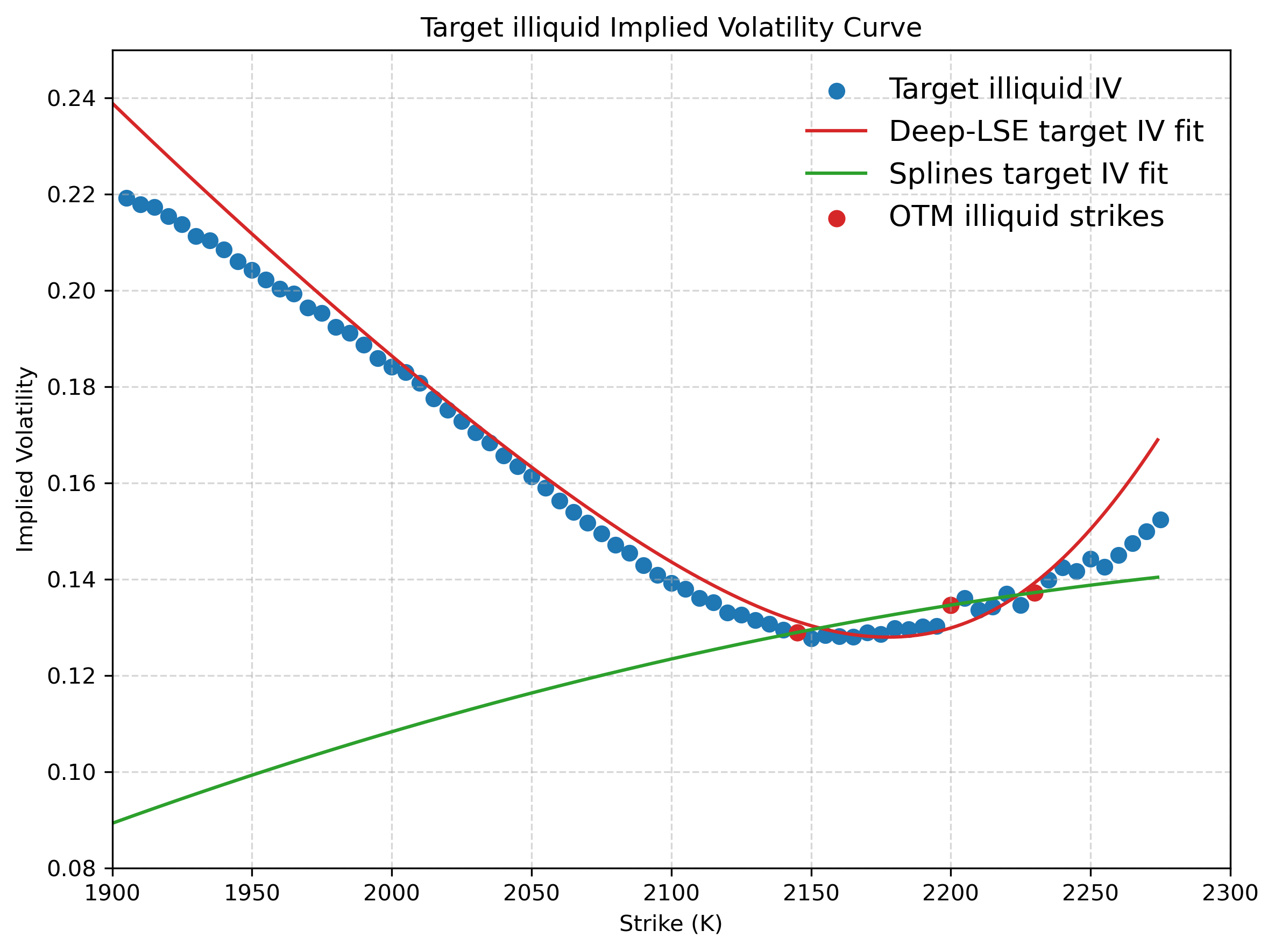} 
    \end{subfigure}
    \caption{Scenario 2 - Market noise and illiquidity might lead to a concave fit of quadratic splines in recovering the implied volatility curve of the illiquid market. The blue solid line is the IV curve of Deep-LSE, the orange solid line is the IV curve of quadratic splines, and blue dots constitute the target IV curve. The red dots are the illiquid strikes.}
    \label{fig:concave_fit_splines}
\end{figure}

Furthermore, we recover option prices from the estimated RND and contrast them with market quotes, from which we infer the associated pricing errors. Let $(S_t)_{t\ge0}$ be the underlying price process under the risk–neutral measure $\mathbb{Q}$, and assume a constant continuously compounded interest rate $r$. Let $\hat{f}_{\mathbb{Q}}(s;T)$ denote the estimated risk–neutral density of $S_T$ on $(0,\infty)$. The estimated price of a European call, in $t=0$, with maturity $T$ and strike $K$ is
\begin{equation}
  \hat{C}(0;K,T)
  = e^{-rT}\,\mathbb{E}^{\mathbb{Q}}\!\left[(S_T-K)^+\right]
  = e^{-rT}\int_K^{\infty} (s-K)\,\hat{f}_{\mathbb{Q}}(s;T)\,\mathrm{d}s 
\end{equation}
and we define the pricing error as $C(0;K,T)-\hat{C}(0;K,T)$, where $C(0;K,T)$ is the price observed on the market. In Table \ref{tab:PE_case1_case2}, we report the pricing error of popular methods to estimate the RND. Specifically, the Kernel-based nonparametric RND estimator replicates the local polynomial approach of \textcite{ait2003nonparametric}, the Lognormal-Weibull Mixture implements the approach \textcite{li2024parametric}, and Maximum-Entropy consists in the Positive Convolution method of \textcite{bondarenko2003estimation}. We infer the RND in all cases by using the same sets of illiquid quotes illustrated in Table \ref{tab:strikes_prices_cases}. The evaluation grid consists of equispaced strikes in the range $1900-2150$, while the Mean Absolute Error (MAE) computes the average absolute error across all strikes.

In Table \ref{tab:PE_case1_case2}, we observe the absolute pricing errors of Scenario 1, which corresponds to the scenario in which only three in-the-money option quotes are available on the market. In this extreme illiquid condition, the benchmarks are not able to fit the RND and extrapolate a meaningful pricing function. Among all benchmarks, our Deep-LSE is the model reports the lowest MAE and the best fit across the strike grid. Parametric models such as the Lognormal-Weibull Mixture, Normal and Lognormal approaches perform better than nonparametric methods such as the kernel-based and Maximum-Entropy approach. Similarly, we observe the absolute pricing errors of Scenario 2, which corresponds to the scenario in which only three out-of-the-money option quotes are available on the market. The Deep-LSE model presents the lowest MAE and absolute pricing error in the sample grid. Also in this case, the parametric model outperforms nonparametric approaches.

\begin{table}[H]
\centering
\caption{Absolute Pricing Error - Scenario 1 and Scenario 2.}
\label{tab:PE_case1_case2}

\begin{tabular}{l*{7}{c}}
\hline
\multicolumn{8}{c}{{Scenario 1: In-the-money illiquid option quotes}} \\
\hline
 & 1900 & 1950 & 2000 & 2050 & 2100 & 2150 & MAE \\
\hline
Kernel-based nonparametric  & 14.56 & 13.56 & 10.82 & 7.74 & 5.19 & 1.35 & 8.87 \\
Deep-LSE                    & 0.20  & 0.69  & 0.26  & 0.28 & 0.45 & 0.27 & 0.53 \\
Lognormal-Weibull Mixture   & 0.50  & 0.79  & 0.61  & 1.10 & 1.09 & 0.45 & 0.76 \\
Maximum-Entropy             & 5.60  & 0.51  & 0.17  & 3.08 & 3.30 & 0.84 & 2.61 \\
Parametric Lognormal        & 1.07  & 2.95  & 2.39  & 0.29 & 2.04 & 1.43 & 1.70 \\
Parametric Normal           & 1.37  & 2.83  & 2.12  & 0.41 & 1.95 & 1.30 & 1.67 \\
Quadratic Splines           & 0.19  & 0.57  & 1.30  & 1.96 & 1.64 & 0.56 & 1.03 \\
\hline
\multicolumn{8}{c}{{Scenario 2: Out-of-the-money illiquid option quotes}} \\
\hline
 & 1900 & 1950 & 2000 & 2050 & 2100 & 2150 & MAE \\
\hline
Kernel-based nonparametric  & 18.21 & 13.43 & 2.16  & 11.73 & 12.91 & 7.77 & 11.04 \\
Deep-LSE                    & 3.57  & 3.30  & 2.83  & 2.10  & 1.14  & 0.45 & 2.23 \\
Lognormal-Weibull Mixture   & 9.64  & 16.07 & 19.18 & 11.97 & 3.36  & 0.32 & 10.94 \\
Maximum-Entropy             & 112.11& 102.08& 87.45 & 63.89 & 29.67 & 0.18 & 65.70 \\
Parametric Lognormal        & 20.60 & 28.47 & 42.27 & 22.40 & 6.90  & 1.35 & 29.80 \\
Parametric Normal           & 52.01 & 41.62 & 26.93 & 12.45 & 3.23  & 0.26 & 22.75 \\
Quadratic Splines           & 14.31 & 18.38 & 16.79 & 9.21  & 2.43  & 0.80 & 10.20 \\
\hline
\end{tabular}

\par\vspace{3mm}
{\small\emph{Notes.} Results indicate, in unit of dollar, the absolute pricing error. We infer the RND by using three illiquid quotes of Scenario 1 and Scenario 2, as illustrated in Table \ref{tab:strikes_prices_cases}. We evaluate on evenly spaced strikes between 1900 and 2150, and report the Mean Absolute Error (MAE) as the average absolute pricing error across all strikes.}
\end{table}

\section{Concluding Remarks}

We address the estimation of the risk-neutral density under illiquid market conditions. Such conditions arise frequently, not only in illiquid option markets, but also for options written on traded equities, and they pose a major challenge for reliable RND estimation. The Deep-LSE model overcomes these difficulties by learning the shape, location, and general form of the implied volatility function of a proxy and liquid market. It then transfers this knowledge to estimate the implied volatility function of the target illiquid market. The simulation study and empirical analysis show that the Deep-LSE model recovers the RND with as few as three option quotes, and yields the lowest pricing error with respect to popular RND estimation methods.

\printbibliography

\newpage

\appendix

\begin{center}
    {\LARGE  Appendix of "Transfer Learning (Il)liquidity" \par}
    \vspace{2cm}

    \begin{minipage}{0.85\textwidth}
        \small

        In Appendix \ref{theory_appendix}, we show additional theoretical details related to the Deep-LSE model. Appendix \ref{proofs_appendix} illustrates the proofs related to the convexity property of the Deep-LSE model (Appendix \ref{convexity_appendix}), the link between the Deep-LSE model and the class of max-affine functions (Appendix \ref{bounds_appendix}), and in Appendix \ref{uat_appendix} we prove the Universal Approximation theorem of the Deep-LSE for convex functions. We conclude the theoretical framework by illustrating in Appendix \ref{sieve_appendix} proofs related to Sieve M-estimation and consistency. Regarding the simulation studies (Appendix \ref{more_sim}), we test the Deep-LSE in illiquid markets using the data-generating process of Kou-Heston (Appendix \ref{kou_heston_appendix}), Andersen-Benzoni-Lund (Appendix \ref{abl_appendix}), and the Three-Factor Double Exponential (Appendix \ref{3fde_appendix}). In the end, we illustrate in Appendix \ref{empa_appendix} additional details regarding the training process of the Deep-LSE model on SPX market data. 
        
    \end{minipage}
\end{center}

\newpage

\section{Architecture of the Deep-LSE}\label{theory_appendix}

We present, as an example, the architecture of the Deep-LSE estimator with two layers. Let \(x\in\mathbb{R}^d\) denote the input and define with \(K_1, K_2\in\mathbb{N}\) the number of affine functions, that play the role of the number of neurons, in the first and second LSE layer, respectively. Define the temperature parameters as the scalar value \(T_1,T_2>0\).

For each \(k=1,\dots,K_1\) in the first layer, define one affine function
\[
\ell^{(1)}_k(x)
\;=\;
a^{(1)}_{k}{}^{\!\top}x + b^{(1)}_k,
\qquad
a^{(1)}_k\in\mathbb{R}^d,\; b^{(1)}_k\in\mathbb{R},
\]
which in scalar form becomes an affine base
\[
L^{(1)}(x)
=\begin{bmatrix}
\ell^{(1)}_1(x)\\ \vdots\\ \ell^{(1)}_{K_1}(x)
\end{bmatrix}
\in\mathbb{R}^{K_1}.
\]
The output of the first layer of the Deep-LSE estimator is
\[
z_1(x)
\;=\;
\operatorname{LSE}_{T_1}\!\big(L^{(1)}(x)\big)
\;=\;
T_1\log\!\Big(\sum_{k=1}^{K_1} e^{\,\ell^{(1)}_k(x)/T_1}\Big)
\in\mathbb{R}.
\]
We can express this also in matrix form as
\[
L^{(1)}(x)=A^{(1)}x+b^{(1)},
\qquad
z_1(x)=\operatorname{LSE}_{T_1}\!\big(A^{(1)}x+b^{(1)}\big).
\]
Regarding the second layer, for each neuron \(k=1,\dots,K_2\), we have a second set affines linear functions
\[
\ell^{(2)}_k(x)
\;=\;
a^{(2)}_{k}{}^{\!\top}x + b^{(2)}_k,
\qquad
a^{(2)}_k\in\mathbb{R}^d,\; b^{(2)}_k\in\mathbb{R}
\]
and define with \(\alpha_k > 0\) the positive recursion weight, which is a positive scalar as a consequence of the softplus function
\[
\alpha_k \;=\; \operatorname{softplus}(\eta_k),
\qquad
\eta_k\in\mathbb{R}.
\]
Then, we have the vector of the second layer that collects the recursive term and the skip connection 
\[
S^{(2)}(x)
\;=\;
\begin{bmatrix}
\alpha_1 z_1(x) + \ell^{(2)}_1(x)\\
\vdots\\
\alpha_{K_2} z_1(x) + \ell^{(2)}_{K_2}(x)
\end{bmatrix}
\;=\;
\alpha\, z_1(x) + L^{(2)}(x)
\;\in\;\mathbb{R}^{K_2},
\]
where \(\alpha=(\alpha_1,\ldots,\alpha_{K_2})^\top\in\mathbb{R}^{K_2}_{> 0}\), and
\[
L^{(2)}(x)
=\begin{bmatrix}
\ell^{(2)}_1(x)\\ \vdots\\ \ell^{(2)}_{K_2}(x)
\end{bmatrix}
= A^{(2)}x + b^{(2)},
\quad
A^{(2)}\in\mathbb{R}^{K_2\times d},\;
b^{(2)}\in\mathbb{R}^{K_2}.
\]
The output of the estimator at the second layer is the result of the log-sum-exp function over the inner vector of the second layer
\[
y(x)
\;=\;
\operatorname{LSE}_{T_2}\!\big(S^{(2)}(x)\big) + c_{\text{out}}
\;=\;
T_2 \log\!\Big(\sum_{k=1}^{K_2} e^{\,(\alpha_k z_1(x)+\ell^{(2)}_k(x))/T_2}\Big) + c_{\text{out}}
\in\mathbb{R}.
\]
All together, we express the output of the Deep-LSE estimator with 2 layers as
\[
\begin{aligned}
z_1(x)
&= T_1 \log\!\Big(\sum_{i=1}^{K_1} e^{\,\big(a^{(1)}_{i}{}^{\!\top}x+b^{(1)}_i\big)/T_1}\Big),\\[2mm]
y(x)
&= T_2 \log\!\Bigg(\sum_{k=1}^{K_2}
\exp\!\Big(
\frac{\alpha_k\, z_1(x) + a^{(2)}_{k}{}^{\!\top}x + b^{(2)}_k}{T_2}
\Big)\Bigg)
\;+\; c_{\text{out}}.
\end{aligned}
\]

\newpage

\section{Proofs}\label{proofs_appendix}

\subsection{Convexity}\label{convexity_appendix}

\bigskip
\begin{proof}[Proof of Lemma \ref{epigraph_lemma}]   
Set $f(x)=(f_1(x),\dots,f_m(x))$ and $\mathrm{epi}\,h=\{(u,t):\,h(u)\le t\}$, which is convex because $h$ is convex. Since $h$ is nondecreasing, for any $u\le v$ and any $t$ with $h(v)\le t$ we also have 
\[
\{(x,t):\,h(f(x))\le t\}\;=\;\{(x,t):\,\exists\,u\ \text{s.t.}\ u\ge f(x)\ \text{and}\ (u,t)\in\mathrm{epi}\,h\}.
\]
The term on the right is the projection of the set $\{(x,u,t):\ u\ge f(x),\ (u,t)\in\mathrm{epi}\,h\}$, and is convex $u\mapsto u-f(x)$ when each $f_i$ is convex. Therefore the epigraph of $x\mapsto h(f(x))$ is convex.
\end{proof}

\bigskip
\begin{proof}[Proof of Theorem \ref{convexity2}]
For $u\in\mathbb{R}^m$ and $T>0$, the log-sum-exp function $\operatorname{LSE}_T(u)=T\log\!\sum_{i=1}^m e^{u_i/T}$ is convex and its gradient is the softmax $p_i(u/T)\ge0$ which has a nonnegative Hessian. Hence, it is a function that preserves convexity (\cite{calafiore2019log}).

By induction, we take the output of the first layer $z^{(1)}(x)=\operatorname{LSE}_{T_1}(A^{(1)}x+b^{(1)})$ and we know that it is convex.

Then, we assume that each succeeding $z^{(\ell-1)}$ is convex. For each $k\le K_\ell$ set $s^{(\ell)}_k(x)=\alpha^{(\ell)}_k z^{(\ell-1)}(x)+a^{(\ell)}_k{}^\top x+b^{(\ell)}_k$. Now, $s^{(\ell)}_k$ is convex because it is a composition of an affine function, convex transformation, the addition of a bias that translates without impact on convexity. In addition, we define the recursive weight to be strictly positive and we know that $z^{(\ell-1)}$ is convex.

Since $\operatorname{LSE}_{T_\ell}$ is convex, it implies $z^{(\ell)}(x)=\operatorname{LSE}_{T_\ell}\big(s^{(\ell)}_1(x),\ldots,s^{(\ell)}_{K_\ell}(x)\big)$ is convex, and by induction, all $z^{(\ell)}$ are convex. Adding the constant $c_{\mathrm{out}}$ does not affect convexity, so $y$ is convex.
\end{proof}

\bigskip
\begin{proof}[Proof of Example \ref{convexity}]
By Theorem \ref{convexity2}, $z_1(x)=\operatorname{LSE}_{T_1}(L^{(1)}(x))$ is convex. When it comes to the second layer, for $k=1,\dots,K_2$ we define the vector
\[
s_k(x)\;=\;\alpha_k\,z_1(x)+\ell^{(2)}_k(x)
\quad\text{with}\quad
\ell^{(2)}_k(x)=a^{(2)}_k{}^\top x+b^{(2)}_k.
\]
Since $\alpha_k> 0$ and $z_1$ is convex, $\alpha_k z_1$ is convex, $\ell^{(2)}_k$ is an affine linear function and hence $s_k$ is convex because this vector is a composition of the recursive term with positive weight and an affine linear function, so convexity still holds. So the consequence is that the vector $S^{(2)}(x)=(s_1(x),\dots,s_{K_2}(x))^\top$ that collects the terms of the second layer is convex because each component is convex.

Similarly, for $h(u)=\operatorname{LSE}_{T_2}(u)$ is convex. Using Lemma \ref{epigraph_lemma} with $h=\operatorname{LSE}_{T_2}$ and $f_k=s_k$ yields that $x\mapsto \operatorname{LSE}_{T_2}(S^{(2)}(x))$ is convex. Adding the constant bias $c_{\mathrm{out}}$ preserves convexity.
\end{proof}

\subsection{Bounds}\label{bounds_appendix}

\bigskip
\begin{proof}[Proof of Theorem \ref{thm:deep_LSE_vs_max}]
For any $T>0$, any $K\in\mathbb{N}$, and any $v=(v_1,\dots,v_K)\in\mathbb{R}^K$,
\begin{equation}\label{bound_calafiore}
\max_{i} v_i\ \le\ \operatorname{LSE}_T(v)\ \le\ \max_{i} v_i\ +\ T\log K,
\end{equation}
as in (\cite{calafiore2019log}).
Now, we define the $K_\ell$-vector
\[
u^{(\ell)}(x)\ =\ \big(u^{(\ell)}_1(x),\dots,u^{(\ell)}_{K_\ell}(x)\big),\qquad
u^{(\ell)}_k(x)\ =\ \alpha^{(\ell)}_k\,z^{(\ell-1)}(x)+\ell^{(\ell)}_k(x)\quad( \text{for } \ \ell\ge2),
\]
and for $\ell=1$ we simply write $u^{(1)}_k(x)=\ell^{(1)}_k(x)$. Then, we construct our known estimator $z^{(\ell)}(x)\ =\ \operatorname{LSE}_{T_\ell}\!\big(u^{(\ell)}(x)\big)$ and the max-affine surrogate as $\bar z^{(\ell)}(x)\ =\ \max_{k\in K_\ell} u^{(\ell)}_k(x)$ and in the recursion we use ${\,z^{(\ell-1)}\ =\ \bar z^{(\ell-1)}}$.
The theorem states that for each $\ell=1,\dots,L$, the bound Eq. \ref{eq:layerwise_sandwich} holds with $\Delta_\ell$ as in Eq. \ref{eq:Delta_recursion}. 

We prove this by induction on $\ell$, since we are generalizing to $L$ layers. 
For the base case $\ell=1$ we have $u^{(1)}(x)=(\ell^{(1)}_k(x))_{k=1}^{K_1}$, so the estimator at layer one is $z^{(1)}(x)=\operatorname{LSE}_{T_1}\!\big(u^{(1)}(x)\big)$ and the deep max-affine is $\bar z^{(1)}(x)=\max_{i\in K_1}\ell^{(1)}_i(x)$.

\noindent Applying Eq. \ref{eq:layerwise_sandwich} with $T=T_1$ and $K=K_1$ gives
\[
\bar z^{(1)}(x)\ \le\ z^{(1)}(x)\ \le\ \bar z^{(1)}(x)+T_1\log K_1,
\]
which is Eq. \ref{eq:layerwise_sandwich} with $\Delta_1=T_1\log K_1$.
By induction, assume Eq. \ref{eq:layerwise_sandwich} holds for some $\ell-1\ge1$, so
\begin{equation}\label{eq:IH2}
\bar z^{(\ell-1)}(x)\ \le\ z^{(\ell-1)}(x)\ \le\ \bar z^{(\ell-1)}(x)+\Delta_{\ell-1}\quad\text{for all }x.
\end{equation}
Regarding the lower bound at layer $\ell$, by definition and left inequality in Eq. \ref{bound_calafiore},
\[
z^{(\ell)}(x)\ =\ \operatorname{LSE}_{T_\ell}\!\big(u^{(\ell)}(x)\big)\ \ge\ \max_{k}u^{(\ell)}_k(x)
\ =\ \max_k\Big(\alpha^{(\ell)}_k\,z^{(\ell-1)}(x)+\ell^{(\ell)}_k(x)\Big).
\]
Using the left inequality in Eq. \ref{eq:IH2} ($z^{(\ell-1)}\ge\bar z^{(\ell-1)}$) and the fact that each $\alpha^{(\ell)}_k > 0$ we obtain
\[
z^{(\ell)}(x)\ \ge\ \max_k\Big(\alpha^{(\ell)}_k\,\bar z^{(\ell-1)}(x)+\ell^{(\ell)}_k(x)\Big)\ =\ \bar z^{(\ell)}(x),
\]
proving the left side of Eq. \ref{eq:layerwise_sandwich} at layer $\ell$.

Regarding the upper bound at layer $\ell$ we apply the right inequality in Eq. \ref{bound_calafiore} to $u^{(\ell)}(x)$
\begin{align}
z^{(\ell)}(x)
  &= \operatorname{LSE}_{T_\ell}\!\big(u^{(\ell)}(x)\big) \notag\\
  &\le T_\ell \log K_\ell \;+\; \max_k u^{(\ell)}_k(x) \notag\\
  &= T_\ell \log K_\ell \;+\; \max_k\!\Big(\alpha^{(\ell)}_k\, z^{(\ell-1)}(x)+\ell^{(\ell)}_k(x)\Big).
  \label{eq:ub_step1}
\end{align}
Insert the upper bound from the induction hypothesis Eq. \ref{eq:layerwise_sandwich} into the rightmost term of Eq. \ref{eq:ub_step1}
\[
\max_k\Big(\alpha^{(\ell)}_k\,z^{(\ell-1)}(x)+\ell^{(\ell)}_k(x)\Big)
\ \le\ \max_k\Big(\alpha^{(\ell)}_k\,\big(\bar z^{(\ell-1)}(x)+\Delta_{\ell-1}\big)+\ell^{(\ell)}_k(x)\Big).
\]
Split the contribution of $\Delta_{\ell-1}$ out of the maximum using $\alpha^{(\ell)}_k\le \alpha^{(\ell)}_{\max}$
\begin{align*}
\max_k\Big(\alpha^{(\ell)}_k\,\big(\bar z^{(\ell-1)}+\Delta_{\ell-1}\big)+\ell^{(\ell)}_k\Big)
&= \max_k\Big(\alpha^{(\ell)}_k\,\bar z^{(\ell-1)}+\ell^{(\ell)}_k\ +\ \alpha^{(\ell)}_k \Delta_{\ell-1}\Big)\\
&\le \max_k\Big(\alpha^{(\ell)}_k\,\bar z^{(\ell-1)}+\ell^{(\ell)}_k\Big)\ +\ \alpha^{(\ell)}_{\max}\,\Delta_{\ell-1}\\
&=\ \bar z^{(\ell)}(x)\ +\ \alpha^{(\ell)}_{\max}\,\Delta_{\ell-1}.
\end{align*}
Combining this result with the Eq. \ref{eq:ub_step1} we obtain
\[
z^{(\ell)}(x)\ \le\ \bar z^{(\ell)}(x)\ +\ {\Big(T_\ell\log K_\ell+\alpha^{(\ell)}_{\max}\Delta_{\ell-1}\Big)},
\]
where the term in brackets is $\Delta_\ell$.
This is the right side of Eq. \ref{eq:layerwise_sandwich} with the recursive definition Eq. \ref{eq:Delta_recursion}. We have proved Eq. \ref{eq:layerwise_sandwich} for layer $\ell$ assuming it for $\ell-1$. Together with the base case this holds for all $\ell=1,\dots,L$.
Developing the linear recursion Eq. \ref{eq:Delta_recursion} gives
\[
\Delta_\ell
= T_\ell\log K_\ell + \alpha^{(\ell)}_{\max}\Delta_{\ell-1}
= T_\ell\log K_\ell + \alpha^{(\ell)}_{\max}\Big(T_{\ell-1}\log K_{\ell-1} + \alpha^{(\ell-1)}_{\max}\Delta_{\ell-2}\Big),
\]
and continue until $\Delta_1=T_1\log K_1$. Collecting terms we obtain
\[
\Delta_\ell
=\sum_{j=1}^{\ell}\Bigg(T_j\log K_j\ \prod_{r=j+1}^{\ell}\alpha^{(r)}_{\max}\Bigg),
\]
which is Eq. \ref{eq:Delta_closed_form}. From this expansion, we can see that the term for the temperature of each layer ($T_j\log K_j$) is amplified by the positive recursion weights. Then the we consider the Deep-LSE estimator $y(x)=z^{(L)}(x)+c_{\mathrm{out}}$ and the deep max-affine surrogate $\bar y(x)=\bar z^{(L)}(x)+c_{\mathrm{out}}$ and we add the same bias to both function to obtain
\[
0\ \le\ y(x)-\bar y(x)\ =\ z^{(L)}(x)-\bar z^{(L)}(x)\ \le\ \Delta_L.
\]
\end{proof}

\begin{proof}[Proof of Example \ref{bounds1}]
In the first layer, by the standard LSE bounds,
\[
\bar z_1(x)\ \le\ z_1(x)\ \le\ \bar z_1(x)+T_1\log K_1. \tag{1}
\]
In the second layer, lower bound, since $\operatorname{LSE}_{T_2}$ is coordinatewise nondecreasing and it is greater than the pointwise maximum,
\begin{align*}
y(x)-c_{\mathrm{out}}
&=\operatorname{LSE}_{T_2}\!\big(\{\alpha_k z_1(x)+\langle A^{(2)}_{k,\cdot},x\rangle+b^{(2)}_k\}_{k=1}^{K_2}\big) \\
&\ge \max_k \big(\alpha_k z_1(x)+\langle A^{(2)}_{k,\cdot},x\rangle+b^{(2)}_k\big).
\end{align*}
Using $z_1(x)\ge\bar z_1(x)$ from (1) and adding $c_{\mathrm{out}}$ gives $y(x)\ge \bar y(x)$.
In the second layer, upper bound, again by the LSE bound,
set, for each $k$, 
\[
u_k(x)=\alpha_k z_1(x)+\langle A^{(2)}_{k,\cdot},x\rangle+b^{(2)}_k,
\]
with $u(x)=(u_k(x))_{k=1}^{K_2}\in\mathbb{R}^{K_2}$.
Then $y(x)-c_{\mathrm{out}}=\operatorname{LSE}_{T_2}(u(x))$, so applying the inequality with $K=K_2$, $T=T_2$ to $u(x)$ yields
\[
\operatorname{LSE}_{T_2}(u(x))
\ \le\ T_2\log K_2+\max_k u_k(x).
\]
Therefore,
\begin{align*}
y(x)-c_{\mathrm{out}}
&\le T_2\log K_2+\max_k \big(\alpha_k z_1(x)+\langle A^{(2)}_{k,\cdot},x\rangle+b^{(2)}_k\big)\\
&\le T_2\log K_2+\max_k \big(\alpha_k(\bar z_1(x)+T_1\log K_1)+\langle A^{(2)}_{k,\cdot},x\rangle+b^{(2)}_k\big) \quad\text{by (1) and } \alpha_k>0\\
&= T_2\log K_2+\alpha_{\max}T_1\log K_1+\max_k \big(\alpha_k\bar z_1(x)+\langle A^{(2)}_{k,\cdot},x\rangle+b^{(2)}_k\big).
\end{align*}
Adding $c_{\mathrm{out}}$ yields the upper bound for $y(x)$.
\end{proof}

\subsection{Universal Approximation Theorem}\label{uat_appendix}

\begin{proof}[Proof of Theorem \ref{ybar_pwma}]
We proceed by induction on the depth $\ell$ to establish the max–affine representation of the theorem and the recursive formulas for $(A_p^{[\ell]},b_p^{[\ell]})$.

\noindent For the base case $\ell=1$, by definition,
\[
\bar z^{(1)}(x)=\max_{k_1\in\{1,\dots,K_1\}}\bigl(\langle A^{(1)}_{k_1},x\rangle+b^{(1)}_{k_1}\bigr),
\]
so the statement holds with $P_1=\{1,\dots,K_1\}$ and $A^{[1]}_{k_1}=A^{(1)}_{k_1}$, $b^{[1]}_{k_1}=b^{(1)}_{k_1}$.

\noindent Regarding the induction step, we assume that for some $\ell\ge2$ that
\[
\bar z^{(\ell-1)}(x)=\max_{p\in P_{\ell-1}}\bigl(\langle A^{[\ell-1]}_p,x\rangle+b^{[\ell-1]}_p\bigr),
\]
and so we have that
\[
\alpha^{(\ell)}_k\,\bar z^{(\ell-1)}(x)
=\max_{p\in P_{\ell-1}}\bigl(\alpha^{(\ell)}_k\langle A^{[\ell-1]}_p,x\rangle+\alpha^{(\ell)}_kb^{[\ell-1]}_p\bigr).
\]
Therefore,
\begin{align*}
\bar z^{(\ell)}(x)
&=\max_{k}\Bigl(\alpha^{(\ell)}_k\,\bar z^{(\ell-1)}(x)+\langle A^{(\ell)}_k,x\rangle+b^{(\ell)}_k\Bigr)\\
&=\max_{k}\ \max_{p\in P_{\ell-1}}
\Bigl(\big\langle \alpha^{(\ell)}_k A^{[\ell-1]}_p + A^{(\ell)}_k,\ x\big\rangle
+ \alpha^{(\ell)}_k b^{[\ell-1]}_p + b^{(\ell)}_k\Bigr)\\
&=\max_{(p,k)\in P_{\ell-1}\times\{1,\dots,K_\ell\}}
\bigl(\langle A^{[\ell]}_{(p,k)},x\rangle+b^{[\ell]}_{(p,k)}\bigr),
\end{align*}
where $A^{[\ell]}_{(p,k)}=\alpha^{(\ell)}_k A^{[\ell-1]}_p + A^{(\ell)}_k$ and $b^{[\ell]}_{(p,k)}=\alpha^{(\ell)}_k b^{[\ell-1]}_p + b^{(\ell)}_k$. So this representation holds at depth $\ell$ with index set $P_\ell=P_{\ell-1}\times\{1,\dots,K_\ell\}$, and we obtain that the deep max-affine surrogate is still a max of finite affine functions.

\end{proof}

\begin{proof}[Proof of Theorem \ref{thm:uniform_approx_lse_streamlined}]
To establish the proof, we use two main results: the deep-LSE bound that links our model to max-affine functions, and the density result that links max-affine functions with any convex function.

The bound (proved in Theorem \ref{thm:deep_LSE_vs_max}) ensures that the Deep-LSE stays within the max-affine surrogate and $\Delta_L$ for all $x\in\mathbb{R}^d$:
\begin{equation}\label{eq:sandwich_stream}
\bar y(x)\ \le\ y(x)\ \le\ \bar y(x)+\Delta_L,
\end{equation}
with
\begin{equation}\label{eq:Delta_stream}
\Delta_L\ =\ \sum_{j=1}^{L}\Bigl(T_j\log K_j\prod_{r=j+1}^{L}\alpha^{(r)}_{\max}\Bigr),
\end{equation}
with $\alpha^{(\ell)}_{\max}$ being the max of $\alpha$ for each $k$ and $\alpha^{(1)}_{\max}=1$. Most importantly, by Theorem \ref{ybar_pwma} $\bar{y}(x)$ is still a max-affine function.

Let $\mathcal{M}$ denote all the possible deep max-affine surrogate functions that one can create with a Deep-LSE estimator with $L$ layers, and $\alpha^{(\ell)}>0$.
As a consequence, max-affine functions for continuous convex functions are dense on compact convex sets \cite{huang2020convex}. Hence, for any $\eta>0$, there exist parameters such that
\begin{equation}\label{eq:under_from_below_stream}
0\ \le\ f(x)-\bar y(x)\ \le\ \eta\qquad\text{for all }x\in K.
\end{equation}
We have the bound in Eq. \ref{eq:sandwich_stream} that yields, for all $x\in K$,
\begin{equation}\label{eq:y_minus_ybar_stream}
0\ \le\ y(x)-\bar y(x)\ \le\ \Delta_L .
\end{equation}
Combining Eq. \ref{eq:under_from_below_stream} and Eq. \ref{eq:y_minus_ybar_stream}, for every $x\in K$,
\[
|y(x)-f(x)|\ \le\ |y(x)-\bar y(x)| + |\bar y(x)-f(x)|
\ \le\ \Delta_L + \eta .
\]
Since $\Delta_L$ in Eq. \ref{eq:Delta_stream} is linear in $T_1,\dots,T_L$, we can choose the temperatures small enough to ensure $\Delta_L\le \varepsilon/2$. So that we obtain
\[
T_j\ =\ \frac{\varepsilon}{2^{j+1}}\ \Big/\ \Bigl(\log K_j\ \prod_{r=j+1}^{L}\alpha^{(r)}_{\max}\Bigr)\quad j=1,\dots,L
\]
implies $\Delta_L\le\sum_{j=1}^L \varepsilon/2^{j+1}\le \varepsilon/2$.
Finally we set $\eta=\varepsilon/2$ in Eq. \ref{eq:under_from_below_stream}.
Then $|y(x)-f(x)|\le \varepsilon$ for all $x\in K$, hence $\sup_{x\in K}|y(x)-f(x)|\le\varepsilon$.

\end{proof}

\subsection{Sieve M-estimation and Consistency}\label{sieve_appendix}

\begin{proof}[Proof of Theorem \ref{envelope}]

For a closed domain $\mathcal{X}=\{x\in\mathbb{R}^d:\ \|x\|\le R\}$ with $0<R<\infty$, we define a vector norm $\|\cdot\|$ on $\mathbb{R}^d$ and its {dual} $\|\cdot\|_*$ (so $\lvert\langle u,x\rangle\rvert \le \|u\|_*\,\|x\|$). Recall that we also define the affine pieces as
\[
\ell^{(\ell)}_k(x)=\langle a^{(\ell)}_k,\,x\rangle + b^{(\ell)}_k .
\]
and the Deep-LSE estimator recursion starting from $z^{(1)}(x)=\mathrm{LSE}_{T_1}\!\bigl(\,(\ell^{(1)}_k(x))_{k=1}^{K_1}\bigr)$ and for $\ell\ge2$
\[
z^{(\ell)}(x)=\mathrm{LSE}_{T_\ell}\!\bigl(\,(\alpha^{(\ell)}_k\,z^{(\ell-1)}(x)+\ell^{(\ell)}_k(x))_{k=1}^{K_\ell}\bigr).
\]
whose output is
\[
y(x)=z^{(L)}(x)+c_{\mathrm{out}} .
\]
We define the sieve constraints for each $\ell$ as $M_\ell = \max_k \|a^{(\ell)}_k\|_* \le S_\ell$, $B_\ell = \max_k |b^{(\ell)}_k| \le \bar B_\ell$, and $q_\ell = \max_k \alpha^{(\ell)}_k \le q < 1$. We also can constrain the temperature parameter and the number of neurons for each layer $K_\ell \le \bar K_\ell$, $T_\ell \le \Theta_\ell$ and the bias $|c_{\mathrm{out}}| \le C$.
Recall that for every $\ell \ge 1$ and $x \in \mathbb{R}^d$ we have
\[
\bar z^{(\ell)}(x) \le z^{(\ell)}(x) \le \bar z^{(\ell)}(x) + \Delta_\ell .
\]
\[
\Delta_\ell
= \sum_{j=1}^{\ell} T_j \log K_j \prod_{r=j+1}^{\ell} \max_k \alpha^{(\ell)}_k \ =\sum_{j=1}^{\ell} T_j \log K_j \prod_{r=j+1}^{\ell} q_r .
\]
and with $\Delta_1 = T_1 \log K_1$.

We have that $z^{(\ell)}=\bar z^{(\ell)}+\varepsilon_\ell$ with $0\le \varepsilon_\ell\le \Delta_\ell$, hence
$|z^{(\ell)}|\le |\bar z^{(\ell)}|+|\varepsilon_\ell|\le |\bar z^{(\ell)}|+\Delta_\ell$.
As a consequence, for all $\ell$ and $x$,
\[
\bigl|z^{(\ell)}(x)\bigr| \le \bigl|\bar z^{(\ell)}(x)\bigr| + \Delta_\ell \quad \text{(A)}.
\]
Now, by triangle inequality and dual norm inequality $(|\langle u, v\rangle| \le \|u\|_*\, \|v\|)$,
\[
\bigl|\ell^{(\ell)}_k(x)\bigr|
=\bigl|\langle a^{(\ell)}_k, x\rangle + b^{(\ell)}_k\bigr|
\le \bigl|\langle a^{(\ell)}_k, x\rangle\bigr| + \bigl|b^{(\ell)}_k\bigr|
\le \|a^{(\ell)}_k\|_*\,\|x\| + \bigl|b^{(\ell)}_k\bigr|.
\]
Now we take $\displaystyle \sup_{x\in\mathcal X}$ with $k$ fixed. Since $\lvert b_{k}^{(\ell)}\rvert$ does not depend on $x$
$$ \sup_{\|x\|\le R}\langle a,x\rangle = R\|a\|_* \quad,\quad \sup_{\|x\|\le R}\bigl|\langle a,x\rangle\bigr| = R\|a\|_* $$
\[
\sup_{x\in\mathcal X}\!\bigl(\, \lvert \langle a_{k}^{(\ell)},x\rangle \rvert + \lvert b_{k}^{(\ell)}\rvert \,\bigr)
= \|a_{k}^{(\ell)}\|_{*}\, \sup_{x\in\mathcal X}\|x\| + \lvert b_{k}^{(\ell)}\rvert
= \|a_{k}^{(\ell)}\|_{*}\, R + \lvert b_{k}^{(\ell)}\rvert .
\]
Finally, we take $\max_k$,
\[
\sup_{x\in\mathcal X}\ \max_k \bigl|\ell_{k}^{(\ell)}(x)\bigr|
\;\le\; \max_k\Bigl(\|a_{k}^{(\ell)}\|_{*}\,R + \lvert b_{k}^{(\ell)}\rvert\Bigr)
\;\le\; R \max_k \|a_{k}^{(\ell)}\|_{*} + \max_k \lvert b_{k}^{(\ell)}\rvert
\;=\; R M_\ell + B_\ell 
\;\le\; R\,S_\ell + \bar B_\ell. \quad \text{(B)}
\]
which follows from $M_\ell \le S_\ell$ and $B_\ell \le \bar B_\ell$.
Now let
\[
A_\ell \;=\; \sup_{x\in\mathcal X}\bigl|z^{(\ell)}(x)\bigr|.
\]
For $\ell=1$, combine (A) with (B) and $\Delta_1\le \Theta_1\log K_1$
\begin{align*}
A_1
&= \sup_{x\in\mathcal X}\bigl|z^{(1)}(x)\bigr|
  \le \sup_{x\in\mathcal X}\bigl|\bar z^{(1)}(x)\bigr| + \Delta_1 \\
&\le \sup_{x\in\mathcal X}\max_k \bigl|\ell^{(1)}_k(x)\bigr| + \Theta_1\log K_1
 \le R S_1 + \bar B_1 + \Theta_1\log K_1 .
\end{align*}
For $\ell\ge2$, for each $x$, by (A),
\begin{align*}
\bigl|z^{(\ell)}(x)\bigr|
&\le \bigl|\bar z^{(\ell)}(x)\bigr| + \Delta_\ell \\
&\le \max_k \bigl|\alpha^{(\ell)}_k z^{(\ell-1)}(x) + \ell^{(\ell)}_k(x)\bigr| + \Delta_\ell \\
&\le \max_k \bigl(\alpha^{(\ell)}_k \bigl|z^{(\ell-1)}(x)\bigr|
                 + \bigl|\ell^{(\ell)}_k(x)\bigr|\bigr) + \Delta_\ell \\
&\le q_\ell \bigl|z^{(\ell-1)}(x)\bigr|
   + \max_k \bigl|\ell^{(\ell)}_k(x)\bigr| + \Delta_\ell .
\end{align*}
Taking $\sup_{x\in\mathcal X}$ and using (B) and $\Delta_\ell\le \Theta_\ell\log K_\ell$ gives
\[
A_\ell \le q_\ell A_{\ell-1} + R S_\ell + \bar B_\ell + \Theta_\ell \log K_\ell .
\]
with $A_1 \;\le\; R S_1 + \bar B_1 + \Theta_1 \log K_1$.

To find the analytical formula for the recursion, we define $c_\ell = R S_\ell + \bar B_\ell + \Theta_\ell \log K_\ell$ so that
\[
A_1 \le c_1, \quad
\text{and } \quad A_\ell \le q_\ell A_{\ell-1} + c_\ell \quad (\ell \ge 2).
\]
Then, we show by induction
\[
A_L \le q_L A_{L-1} + c_L
   \le q_L \sum_{j=1}^{L-1} c_j \prod_{r=j+1}^{L-1} q_r + c_L
   = \sum_{j=1}^{L-1} c_j \prod_{r=j+1}^{L} q_r + c_L,
\]
\[
\
A_L \le \sum_{j=1}^{L} c_j \prod_{r=j+1}^{L} q_r
= \sum_{j=1}^{L} \bigl(R S_j + \bar B_j + \Theta_j \log K_j\bigr)
  \prod_{r=j+1}^{L} q_r .\ 
\]
Finally,
\[
\sup_{x\in\mathcal X} |y(x)|
\;\le\; |c_{\mathrm{out}}| + \sup_{x\in\mathcal X} \bigl|z^{(L)}(x)\bigr|
\;\le\; C + A_L,
\]
\[
\;
\sup_{x\in\mathcal X} |y(x)|
\;\le\; C + \sum_{j=1}^{L} \Bigl( R S_j + \bar B_j + \Theta_j \log K_j \Bigr)
\prod_{r=j+1}^{L} q_r \; 
\]
\[
\sup_{x\in\mathcal X} |y(x)|
\;\le\; V_n \; 
\]
which is the envelope depending only on box constants
$(S_j,\,\bar B_j,\,q,\,\Theta_j,\,K_j,\,C,\,R)$.
Since the right-hand side does not depend on $x$ and $f$, we get
\[
\sup_{\theta\in\Theta_n}\ \sup_{x\in\mathcal X} |y_\theta(x)|
\;=\;
\sup_{f\in\mathcal F_n}\|f\|_\infty
\;\le\; V_n\;.
\]

\end{proof}

\bigskip

\begin{proof}[Proof of Theorem \ref{lse_map}]

Let $S(u)=\sum_{i=1}^{m} e^{u_i/T}$. For $j=1,\dots,m$,
\[
\frac{\partial}{\partial u_j}\mathrm{LSE}_T(u)
= T\cdot \frac{1}{S(u)}\cdot \frac{\partial S(u)}{\partial u_j}
= T\cdot \frac{1}{S(u)}\cdot \frac{1}{T} e^{u_j/T}
= \frac{e^{u_j/T}}{S(u)}=p_j(u).
\]
Thus $\nabla \mathrm{LSE}_T(u)=p(u)=(p_1(u),\ldots,p_m(u))$, where each $p_j(u)\in(0,1)$ and
$\sum_{j=1}^{m} p_j(u)=1$.

\noindent For any norm $\|\cdot\|$ and dual $\|\cdot\|_*$, the mean value inequality gives
\[
\bigl|\mathrm{LSE}_T(u)-\mathrm{LSE}_T(v)\bigr|
\le \sup_{\xi\in [u,v]} \bigl\|\nabla \mathrm{LSE}_T(\xi)\bigr\|_*\,\|u-v\|,
\]
so the Lipschitz constant is $\sup_{\xi}\|p(\xi)\|_*$.

\noindent Now we consider the cases 

if $\|\cdot\|=\|\cdot\|_\infty$:\;
$\|\cdot\|_*=\|\cdot\|_1$ and $\|p(\xi)\|_1=\sum_j p_j(\xi)=1$.

if $\|\cdot\|=\|\cdot\|_2$:\;
$\|\cdot\|_*=\|\cdot\|_2$ and $\|p(\xi)\|_2\le \|p(\xi)\|_1=1$.

\noindent Therefore, the Lipschitz constant is $1$ in both cases. 
\end{proof}

\bigskip

\begin{proof}[Proof of Theorem \ref{consistency}]

To prove that 
\[
\sup_{f\in \mathcal{F}_{n}}
\bigl|\,Q_n(f)-\overline{Q}_n(f)\,\bigr|
\;\xrightarrow{\,p^{*}\,}\;0 \qquad as \ n\to\infty.
\]
under a given growth condition of $\mathcal F_n$, \textcite{shen2023asymptotic} shows that it suffices to have 
\begin{align*}
\mathbb{E}_{\xi}&\!\left[
  \sup_{f\in\mathcal F_{n}}
  \left|\frac{1}{n}\sum_{i=1}^{n}\xi_i\epsilon_i\bigl(f(x_i)-f_0(x_i)\bigr)\right|
\right]\\
&\le
\mathbb{E}_{\xi}\!\left[
  \left|\frac{1}{n}\sum_{i=1}^{n}\xi_i\epsilon_i\bigl(f_n^{\ast}(x_i)-f_0(x_i)\bigr)\right|
\right]
+ K\!\int_{0}^{2V_n}\!\sqrt{\frac{\log \mathcal N\!\left(\frac{\eta}{2\sqrt{\sigma^{2}}+1},\ \mathcal F_{n},\ \|\cdot\|_{\infty}\right)}{n}}\,d\eta .
\end{align*}
Regarding the first term, we apply the Universal Approximation theorem (\ref{thm:uniform_approx_lse_streamlined}) and choose $f_n^*=\pi_{n} f_0$. This means that $\sup_{x\in\mathcal X}|f_n^*(x)-f_0(x)|\to 0$ as $n\to\infty$. Therefore, the first term is smaller than any $\zeta>0$, for a sufficiently large $n$.

Regarding the second term, we apply Theorem 14.5 of \textcite{anthony2009neural}
\[
\mathcal{N}_\infty(\epsilon,\mathcal{F},m)
\;\le\;
\left(\frac{4\,e\,m\,b\,W\,(L V)^{\ell}}{\epsilon\,(L V-1)}\right)^{W},
\]
\begin{align*}
\mathcal N\!\left(\frac{1}{2\sqrt{\sigma^{2}+1}}\,\eta,\ \mathcal F_{n},\ \|\cdot\|_\infty\right)
&\le \left(
\frac{(8\sqrt{\sigma^{2}+1})\,e\,[\,b\,][\,W\,]\left(V_n\right)^{L}}
{\eta\left(V_n-1\right)}
\right)^{\,W} ,\notag\\
&\le \ \left(
\frac{(8\sqrt{\sigma^{2}+1})\,e\,[\,V_n\,][\,W\,]\left(V_n\right)^{L}}
{\eta\left(V_n-1\right)}
\right)^{\,W} ,\\
&= \ \left(
\frac{(8\sqrt{\sigma^{2}+1})\,e\,[\,W\,]\left(V_n\right)^{L+1}}
{\eta\left(V_n-1\right)}
\right)^{\,W} ,\\
&= \ \tilde B_{W,d,V_n}\,\eta^{-W},
\end{align*}
where
\[
\tilde B_{W,d,V_n}
=
\left(\frac{\bigl(8\sqrt{\sigma^{2}+1}\bigr)\,e[\,W\,]\,(V_n)^{L+1}}{V_n-1} \right)^{\,W}.
\]
Now, let
\[
B_{W,d,V_n}=\log\tilde B_{W,d,V_n}- W .
\]
Then
\begin{align*}
B_{W,d,V_n}
& = W \left(
 \log\frac{\bigl(8\sqrt{\sigma^{2}+1}\bigr)\,e[\,W\,]\,(V_n)^{L+1}}{V_n-1}-1\right),  \\
&= W \left(
 \log\frac{W\,(V_n)^{L+1}}{V_n-1}+\log(8\sqrt{\sigma^{2}+1})\right),  \\
&\le 2W\, \log\frac{W\,(V_n)^{L+1}}{V_n-1}, 
\end{align*}
that holds because $V_n^{L+1} - V_n + 1 \ge 0$ for all $V_n$ and $L+1$ being even, so that
\[
\log\!\left(\frac{ W V_n^{L+1}}{V_n-1}\right)
\;\ge\;
\log\!\left(\frac{(2\sqrt{\sigma^{2}+1})(V_n-1)}{V_n-1}\right)
=\log\!\bigl(2\sqrt{\sigma^{2}+1}\bigr).
\]
Since $V^{L+1}-eV+e\ge 0$ for all $V$, we have $\dfrac{V^{L+1}}{V-1}\ge e$, hence
\begin{align*}
\log\!\left(W\frac{V^{L+1}}{V-1}\right)
&\ge \log\!\left(\frac{V^{L+1}}{V-1}\right)
\ge \log\!\left(\frac{e(V-1)}{V-1}\right)
= 1 .
\end{align*}
But we also have:
\begin{align*}
B_{W,d,V}
&= \log \tilde B_{W,d,V} - W, \\
&= \log \left(\frac{\bigl(8\sqrt{\sigma^{2}+1}\bigr)\,eW(V_n)^{L+1}}{V_n-1} \right)^{\,W} - W ,\\
&= W\!\left[\log\!\left(\bigl(8\sqrt{\sigma^{2}+1}\bigr)e\,W\frac{V^{L+1}}{V-1}\right)-1\right], \\
&= W\,\log\!\left(\bigl(8\sqrt{\sigma^{2}+1}\bigr)W\frac{V^{L+1}}{V-1}\right),\\
&= W * (\mathbb{R} > 1),\\
&\ge W.
\end{align*}
Hence,
\begin{align*}
H\!\left(\frac{1}{2\sqrt{\sigma^{2}+1}}\,\eta,\ \mathcal F_{n},\ \|\cdot\|_\infty\right)
&= \log \mathcal N\!\left(\frac{1}{2\sqrt{\sigma^{2}+1}}\,\eta,\ \mathcal F_{n},\ \|\cdot\|_\infty\right) ,\notag\\
&\le B_{W,d,V_n}\!\left(1+\frac{1}{\eta}\right) \ \ \text{since } (B \ge W). 
\end{align*}\label{Hbound}
As a result, for sufficiently large $n$ we have
\begin{align*}
\int_{0}^{2V_n} H^{1/2}\!\left(\frac{1}{2\sqrt{\sigma^{2}}+1}\,\eta,\ \mathcal F_{n},\ \|\cdot\|_\infty\right)\, d\eta
&\le 4\sqrt{2}\, B_{W,d,V_n}^{1/2}\, V_n . \label{eq:intHhalf}
\end{align*}
Recall that
\begin{equation*}\label{eq:Bdef}
B_{W,d,V_n}
= \log \tilde B_{W,d,V_n} - W
= W \log\!\left( (8\sqrt{\sigma^2+1})\,W\,\frac{V_n^{L+1}}{V_n-1} \right).
\end{equation*}
Then the bound entropy integral becomes
\begin{align*}
\int_{0}^{2V_n}\!
\sqrt{\frac{ H\!\left(\frac{\eta}{2\sqrt{\sigma^2}+1},\,\mathcal F_{n},\,\|\cdot\|_\infty\right)}{n}}\; d\eta
&\le 4\sqrt{2}\,\frac{V_n}{\sqrt{n}}\;\sqrt{B_{W,d,V_n}} ,\notag\\
&\le 4\sqrt{2}\,\frac{V_n}{\sqrt{n}}\;
   \sqrt{\,W\Bigg[\log\!\left(\frac{V_n^{L+1}}{V_n-1}\right)
   + \log\!\big((8\sqrt{\sigma^2+1})W\big)\Bigg] }.
\end{align*}
As \(V_n\to\infty\),
\[
\log\!\left(\frac{V_n^{L+1}}{V_n-1}\right)=\log V_n^L+o(1),
\]
so
\begin{equation*}
\sqrt{B_{W,d,V_n}}
\;\lesssim\; \sqrt{\,W\,[\log V_n^L + \log W +\log (8\sqrt{\sigma^2+1})+o(1)]\,}
\;\le\; \sqrt{\,2W\,\log(V_n^L W)\,},
\end{equation*}
because as \(\log V_n^LW\to\infty,\  8\sqrt{\sigma^2+1}\le\log V_n^LW\). 
Therefore,
\begin{align*}
\int_{0}^{2V_n}\!
\sqrt{\frac{ H\!\left(\frac{\eta}{2\sqrt{\sigma^2+1}},\,\mathcal F_{n},\,\|\cdot\|_\infty\right)}{n}}\; d\eta
&\le 4\sqrt{2}\,\frac{V_n}{\sqrt{n}}\;\sqrt{\,2W\,\log(V_n^L W)\,}, \notag\\
&= 8\,\sqrt{\frac{W\,V_n^2\,\log(V_n^L W)}{n}}\;.
\end{align*}
Under the assumptions $W V_n^{2}\,\log\!\big(V_n^L W\big)=o(n)\quad as \ n\to\infty,$ we have that
\[
\sqrt{\frac{W\, V_n^{2}\,
\log\!\big(V_n^L W\big)}{n}}
< \frac{\zeta}{8}\,.
\]
Finally, we have
\begin{align*}
\mathbb{E}_{\xi}\!\Bigg[
  \sup_{f\in\mathcal F_{n}}
  \Bigl|\frac1n\sum_{i=1}^{n}\xi_i\epsilon_i\bigl(f(x_i)-f_0(x_i)\bigr)\Bigr|
\Bigg]
&\le \sqrt{\sigma^2+1}\,\bigl\|\pi_{n}f_0-f_0\bigr\|_\infty + 4\sqrt{2} K\,\frac{V_n}{\sqrt{n}}\;\sqrt{\,2W\,\log(V_n^L W)\,},
\label{eq:aux-bound}
\end{align*}
which goes to  $0$ as $n\to\infty$. This means that 
\begin{align*}
\mathbb{E}^{*}&\!\Bigg[
  \sup_{f\in\mathcal F_{n}}
  \Bigl|\frac1n\sum_{i=1}^{n}\epsilon_i\bigl(f(x_i)-f_0(x_i)\bigr)\Bigr|
\Bigg]\xrightarrow[n\to\infty]{} 0 ,
\end{align*}
completing the proof.

\end{proof}

\newpage

\section{Additional Simulations}\label{more_sim}

\subsection{Kou-Heston Stochastic Volatility model}\label{kou_heston_appendix}

The \textcite{kou2002jump} Stochastic Volatility model has the volatility evolution
$$
d V_t=\kappa\left(\theta-V_t\right) d t+\sigma_v \sqrt{V_t} d W_t^{(2)}
$$
that is simulated using Euler-Maruyama. The asset price evolution is
$$
d S_t=S_t\left[\left(r-q-\lambda\kappa_J\right) d t+\sqrt{V_t} d W_t^{(1)}+d J_t\right]
$$
$d J_t$ is the jump component from a compound Poisson process using intensity $\lambda_j$ and double-exponential jump sizes
$$
Y \sim \begin{cases}\operatorname{Exp}\left(1 / \eta_1\right), & \text { with prob } p_{\text {up }} \\ -\operatorname{Exp}\left(1 / \eta_2\right), & \text { with prob } 1-p_{\text {up }}\end{cases}
$$

Two correlated Wiener processes using Cholesky
$$
\operatorname{Corr}\left(d W_t^{(1)}, d W_t^{(2)}\right)=\rho
$$

\[
\mathrm dJ_t
     =\sum_{i=1}^{\mathrm dN_t}\!\Bigl(e^{Y_i}-1\Bigr),
\qquad
N_t\sim\operatorname{Poisson}(\lambda t),
\]
\[
Y_i\sim
\begin{cases}
\mathrm{Exp}(1/\eta_1), & \text{with probability } p_{\text{up}},\\[4pt]
-\mathrm{Exp}(1/\eta_2),& \text{with probability } 1-p_{\text{up}}.
\end{cases}
\]

\[
\kappa_J
=\mathbb E\!\bigl[e^{Y}-1\bigr]
=p_{\text{up}}\dfrac{\eta_1}{\eta_1-1}
 +(1-p_{\text{up}})\dfrac{\eta_2}{\eta_2+1}-1,
\quad
\eta_1>1,\;\eta_2>-1.
\]
For risk-neutral simulation, one can use
$$
S_{t+\Delta t}=S_t \cdot \exp \left[\left(r-q-\lambda\left(\frac{p_{\mathrm{up}} \eta_1}{\eta_1-1}+\frac{\left(1-p_{\mathrm{up}}\right) \eta_2}{\eta_2+1}-1\right)-\frac{1}{2} V_t\right) \Delta t+\sqrt{V_t} \Delta W_t^{(1)}+J_t\right]
$$

In Table \ref{tab:KH_param} we report the parameters we use for the simulation. To construct the target implied volatility curve, we assume it is obtained by translating the source curve. In particular, we shift implied volatilities downward by 10\% (y-axis) and increase strikes by $20\$$ (x-axis).

\begin{table}[h]
    \centering
    \begin{tabular}{c c c c c c c c c c c c}
    \hline
    $S_0$ & $r$ & $q$ & $v_0$ & $\kappa$ & $\theta$ & $\sigma$ & $\rho$ & $\lambda$ & $p_{\text{up}}$ & $\eta_1$ & $\eta_2$ \\
    \hline
    100 & 0.05 & 0.00 & 0.04 & 2.0 & 0.04 & 0.8 & -0.5 & 0.12 & 0.35 & 8.0 & 10.0 \\
    \end{tabular}
    \caption{Simulated parameters for Kou-Heston model.}
    \label{tab:KH_param}
\end{table}

We study two different situations of severe market illiquidity by randomly selecting three in-the-money call option quotes in Scenario 1 and three out-of-the-money call option quotes in Scenario 2. We emphasize that these three call option quotes constitute the only information on the terminal RND available to the models. In Fig. \ref{fig:kh_setup}, we report Scenario 1, which consists of three illiquid observations of in-the-money (ITM) options.

\begin{figure}[H]
\centering
\begin{minipage}{\textwidth}
\begin{subfigure}{0.55\textwidth}
  \includegraphics[width=\linewidth]{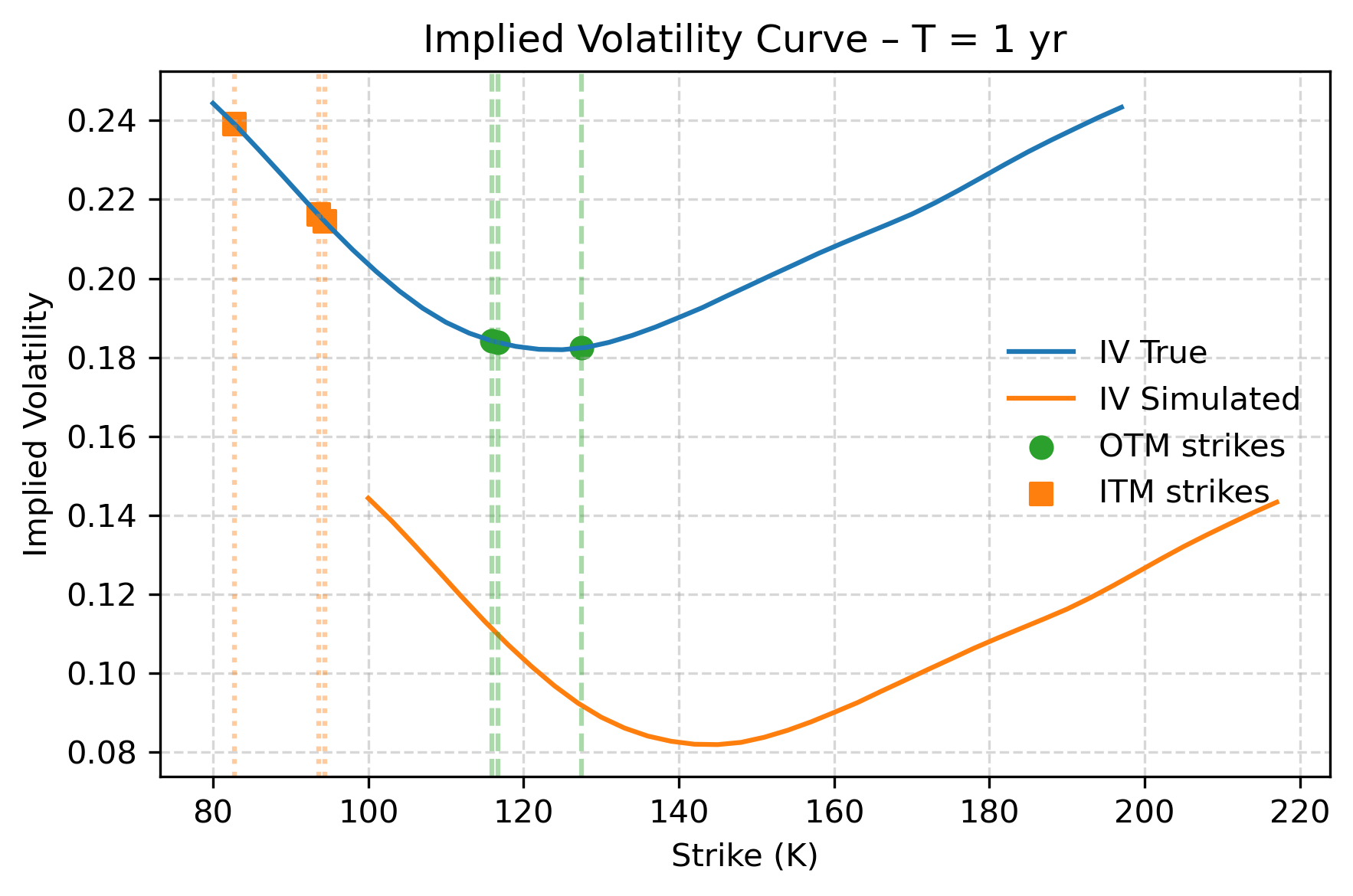}
\end{subfigure}\hfill
\begin{subfigure}{0.55\textwidth}
  \includegraphics[width=\linewidth]{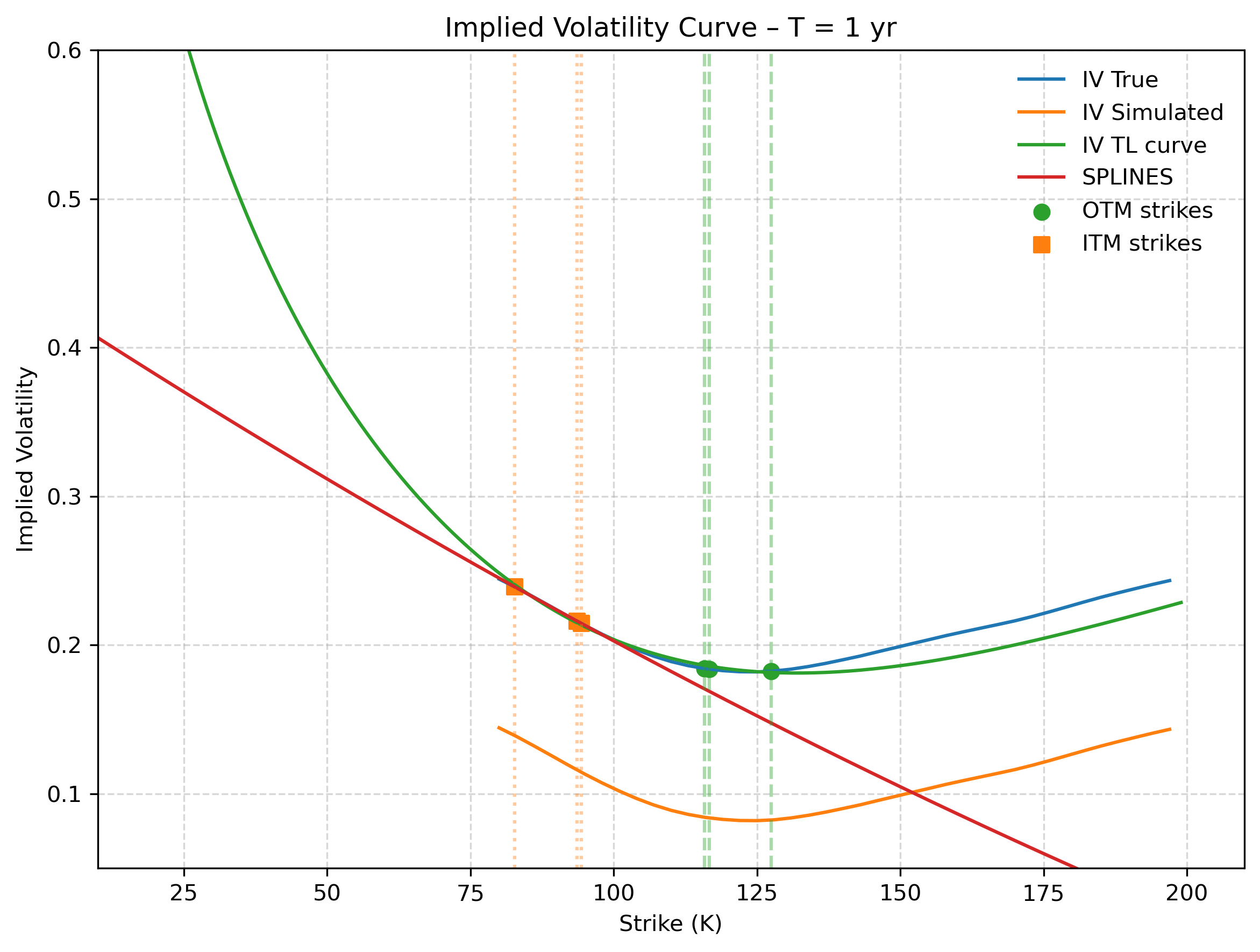}
\end{subfigure}
\caption{Scenario 1 - On the left panel, the orange dots are the ITM strikes selected on the target illiquid implied volatility curve (blue solid line), and the orange solid line is the liquid source (proxy) implied volatility curve. On the right panel, the interpolation of the illiquid implied volatility curve of the Deep-LSE model (green solid line) and quadratic splines (red solid line).}
\label{fig:kh_setup}
\end{minipage}

\end{figure}

We observe in Fig. \ref{fig:kh_final} the estimates of Deep-LSE and quadratic splines compared to the ground truth illiquid RND. It emerges that the Deep-LSE accurately recovers the illiquid RND.

\begin{figure}[H]
    \centering
    \begin{subfigure}{0.75\textwidth}
        \includegraphics[width=\linewidth]{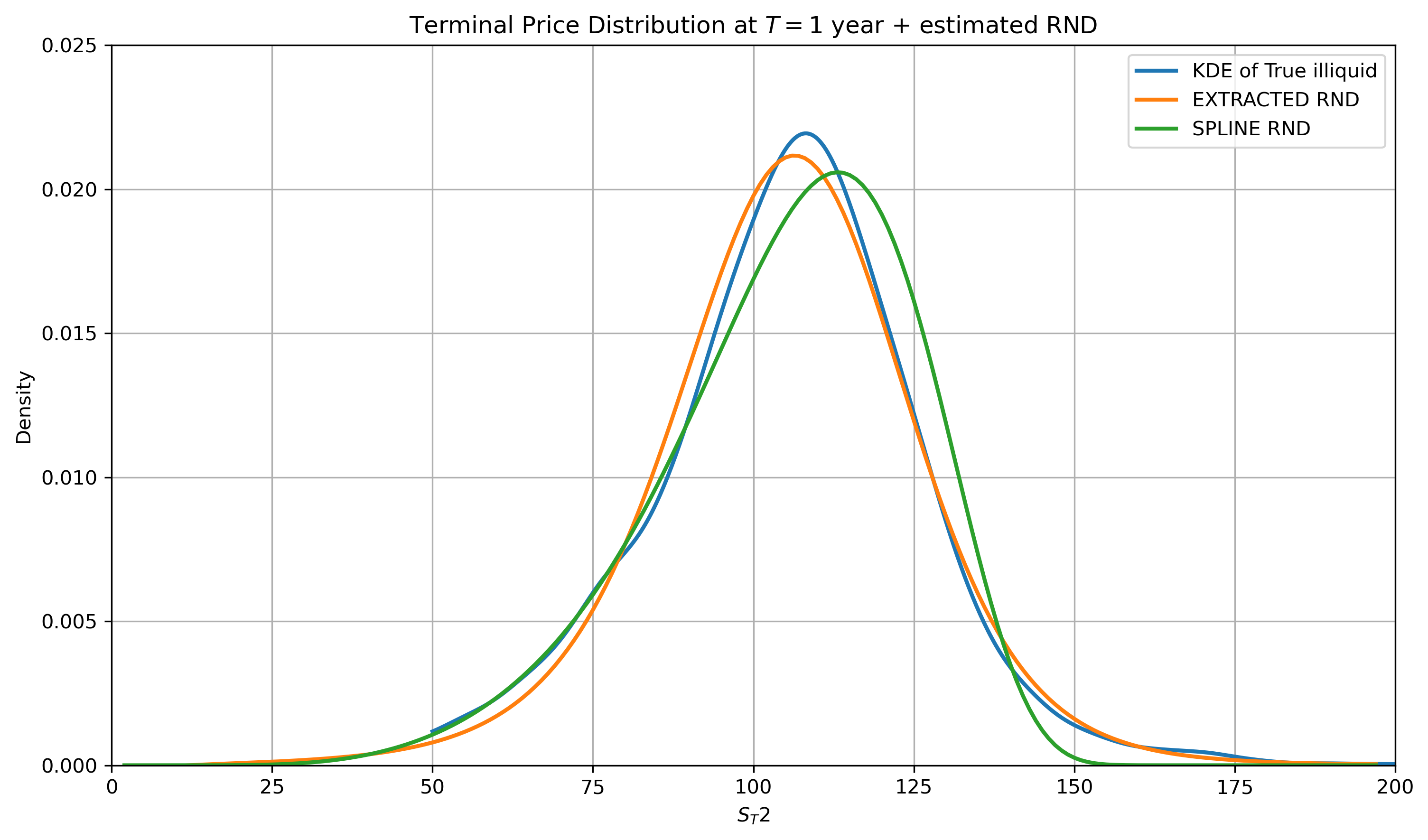} 
    \end{subfigure}
    \caption{Scenario 1 - Illiquid RND recovery of Deep-LSE (orange curve) and quadratic splines (green curve) in comparison with the illiquid target ground truth simulated RND (blue curve).}
    \label{fig:kh_final}
\end{figure}

We also test the Deep-LSE model on out-of-the-money call options and illiquid strikes (Scenario 2), and Fig. \ref{fig:kh_setup2} illustrates this case.

\begin{figure}[H]
\centering
\begin{minipage}{\textwidth}
\begin{subfigure}{0.55\textwidth}
  \includegraphics[width=\linewidth]{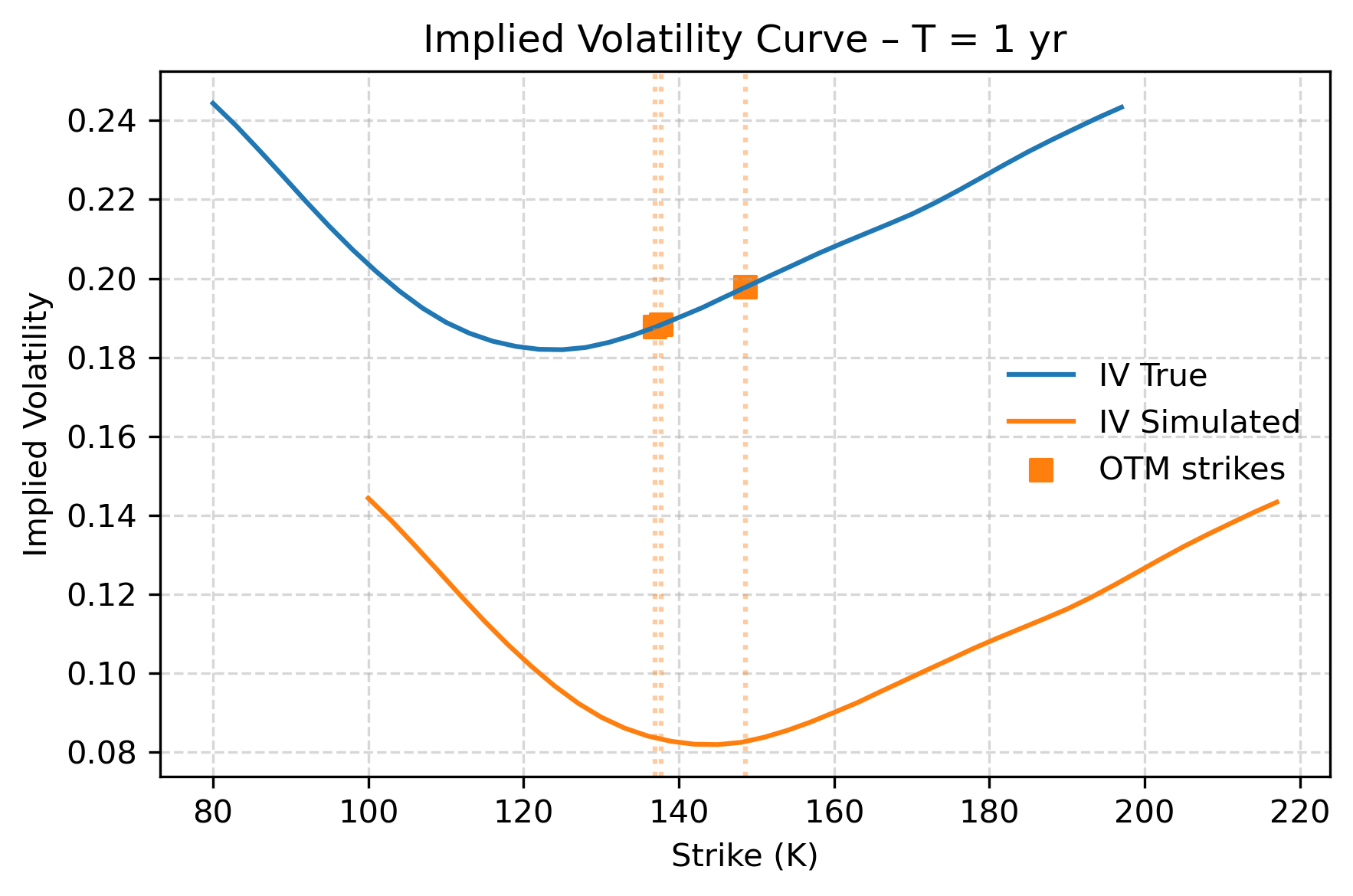}
\end{subfigure}\hfill
\begin{subfigure}{0.55\textwidth}
  \includegraphics[width=\linewidth]{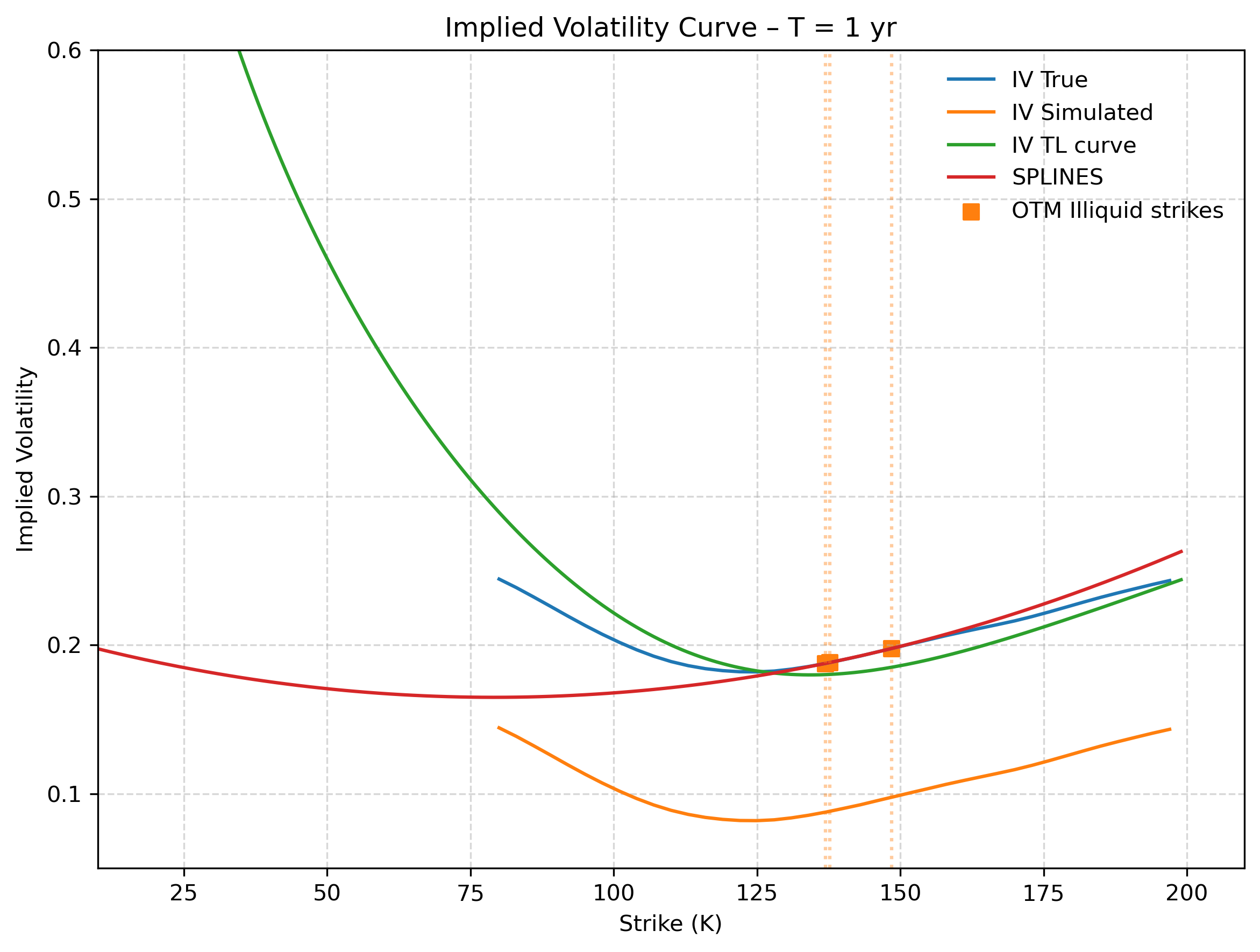}
\end{subfigure}
\caption{Scenario 2 - On the left panel, the orange dots are the OTM strikes selected on the target illiquid implied volatility curve (blue solid line), and the orange solid line is the liquid source (proxy) implied volatility curve. On the right panel, the interpolation of the illiquid implied volatility curve of the Deep-LSE model (green solid line) and quadratic splines (red solid line).}
\label{fig:kh_setup2}
\end{minipage}

\end{figure}

\begin{figure}[H]
    \centering
    \begin{subfigure}{0.75\textwidth}
        \includegraphics[width=\linewidth]{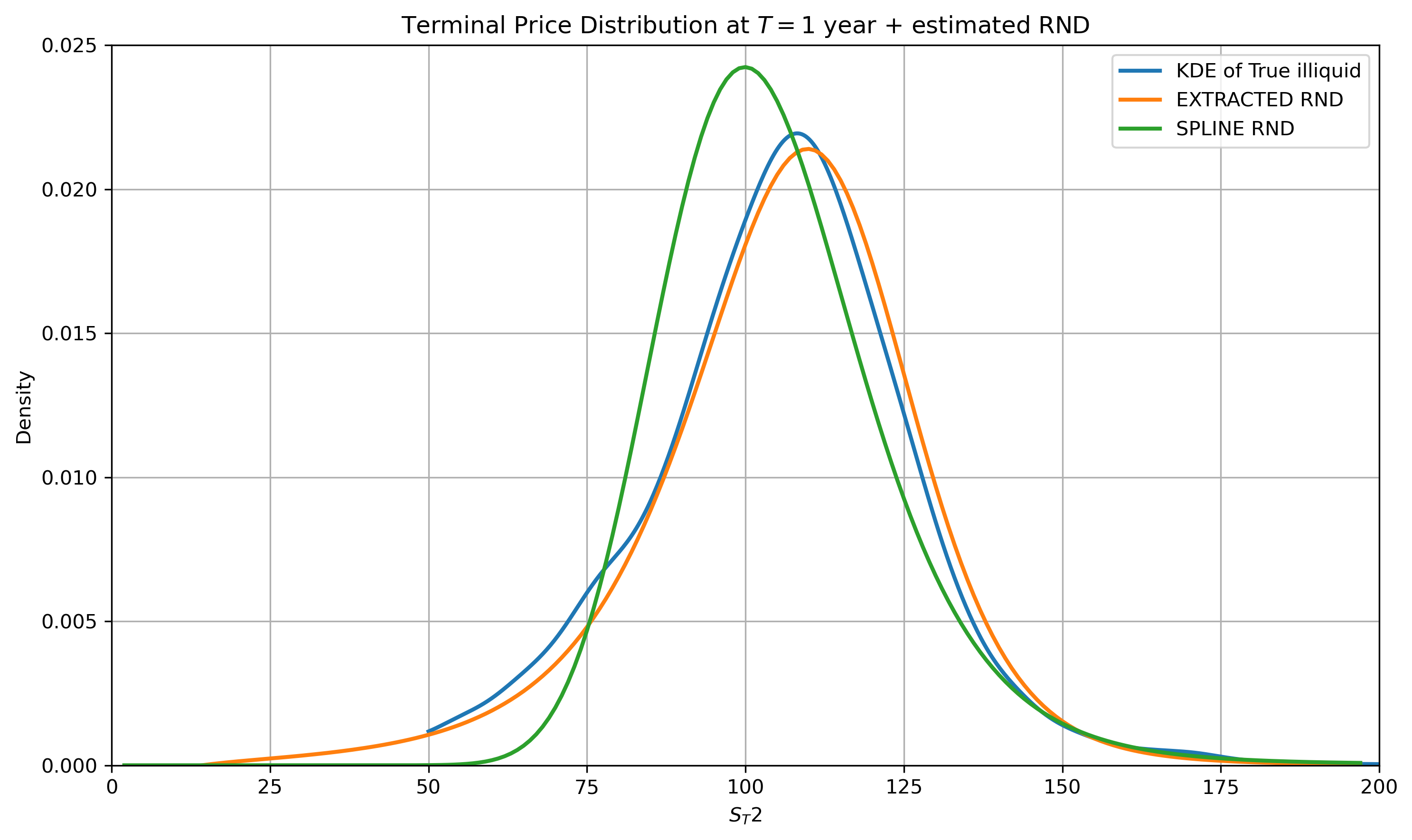} 
    \end{subfigure}
    \caption{Scenario 2 - Illiquid RND recovery of Deep-LSE (orange curve) and quadratic splines (green curve) in comparison with the illiquid target ground truth simulated RND (blue curve).}
    \label{fig:kh_final2}
\end{figure}

In Fig. \ref{fig:kh_final2}, we observe that, also in Scenario 2, the Deep-LSE model recovers the illiquid RND better with respect to the quadratic splines.

\subsection{Andersen-Benzoni-Lund Multifactor Model}\label{abl_appendix}

The \textcite{andersen2002empirical} model follows 
$$
d S_t=S_t\left[\left(\mu-\frac{1}{2} \sum_{i=1}^n V_t^{(i)}\right) d t+\sqrt{V_t} d Z_t+d J_t\right]
$$

Where $V_t=\sum_{i=1}^n V_t^{(i)}$, $d Z_t$ is a Brownian motion correlated with the volatility factors, and $d J_t \sim \sum_{k=1}^{dN_t} Y_k$, where $dN_t \sim \operatorname{Poisson}\left(\lambda_j d t\right)$, and $Y_k \sim \mathcal{N}\left(\mu_j, \sigma_j^2\right)$ is the jump size. Each volatility factor $V_t^{(i)}$ follows a square-root process)
$$
d V_t^{(i)}=\kappa_i\left(\theta_i-V_t^{(i)}\right) d t+\sigma_i \sqrt{V_t^{(i)}} d W_t^{(i)}
$$

Where $d W_t^{(i)}$ are independent Brownian motions that are correlated with $d Z_t$.
The correlation between $d Z_t$ and $d W_t^{(i)}$ is handled via
$$
d Z_t=\sum_{i=1}^n \rho_i  d W_t^{(i)}+\sqrt{1-\sum_{i=1}^n \rho_i^2} \cdot d W_t^{(0)}
$$

To simulate under the risk-neutral probability measure, one can use
\[
\mu_{\mathbb{Q}} \;=\; r \;-\; \lambda_j\!\left( e^{\,\mu_j \;+\; \frac12\,\sigma_j^{2}} - 1 \right).
\]

In Table \ref{tab:ABL_param} we report the parameters we use for the simulation. To construct the target implied volatility curve, we define two different set of parameters and obtain two distinct implied volatility curves. Regarding the first, we assume it is the liquid proxy, whereas the second is the illiquid target.

\begin{table}[h]
\centering
\resizebox{\textwidth}{!}{%
\begin{tabular}{l c c c c c c c c c c c c c c c c c c}
\hline
Set 
& $S_0$ & $r$ & $\rho$ 
& $\kappa_1$ & $\theta_1$ & $\sigma_1$ & $v_{0,1}$
& $\kappa_2$ & $\theta_2$ & $\sigma_2$ & $v_{0,2}$
& $\kappa_3$ & $\theta_3$ & $\sigma_3$ & $v_{0,3}$
& $\lambda_j$ & $\mu_j$ & $\sigma_j$ \\
\hline
1 
& 100 & 0.05 & $[-0.3,\ 0.0,\ 0.3]$
& 3.0 & 0.02 & 0.2 & 0.02
& 1.5 & 0.04 & 0.3 & 0.04
& 0.5 & 0.06 & 0.4 & 0.06
& 0.20 & 0.00 & 0.55 \\
2 
& 100 & 0.05 & $[-0.3,\ 0.0,\ 0.3]$
& 3.0 & 0.02 & 0.2 & 0.02
& 1.5 & 0.04 & 0.3 & 0.04
& 0.5 & 0.06 & 0.4 & 0.06
& 0.25 & 0.18 & 0.60 \\
\hline
\end{tabular}%
}
\caption{Simulated parameters for Andersen-Benzoni-Lund model. Set 1 for the liquid proxy and set 2 for the illiquid target.}
\label{tab:ABL_param}
\end{table}

We examine two instances of severe market illiquidity by randomly selecting three in-the-money call option quotes for Scenario 1 and three out-of-the-money call option quotes for Scenario 2. We emphasize that these three option quotes constitute the only information on the terminal RND available to the models. In Fig. \ref{fig:abl_setup}, we represent the first case under consideration (Scenario 1), which consists of three illiquid observations of in-the-money (ITM) call options.

\begin{figure}[H]
\centering
\begin{minipage}{\textwidth}
\begin{subfigure}{0.55\textwidth}
  \includegraphics[width=\linewidth]{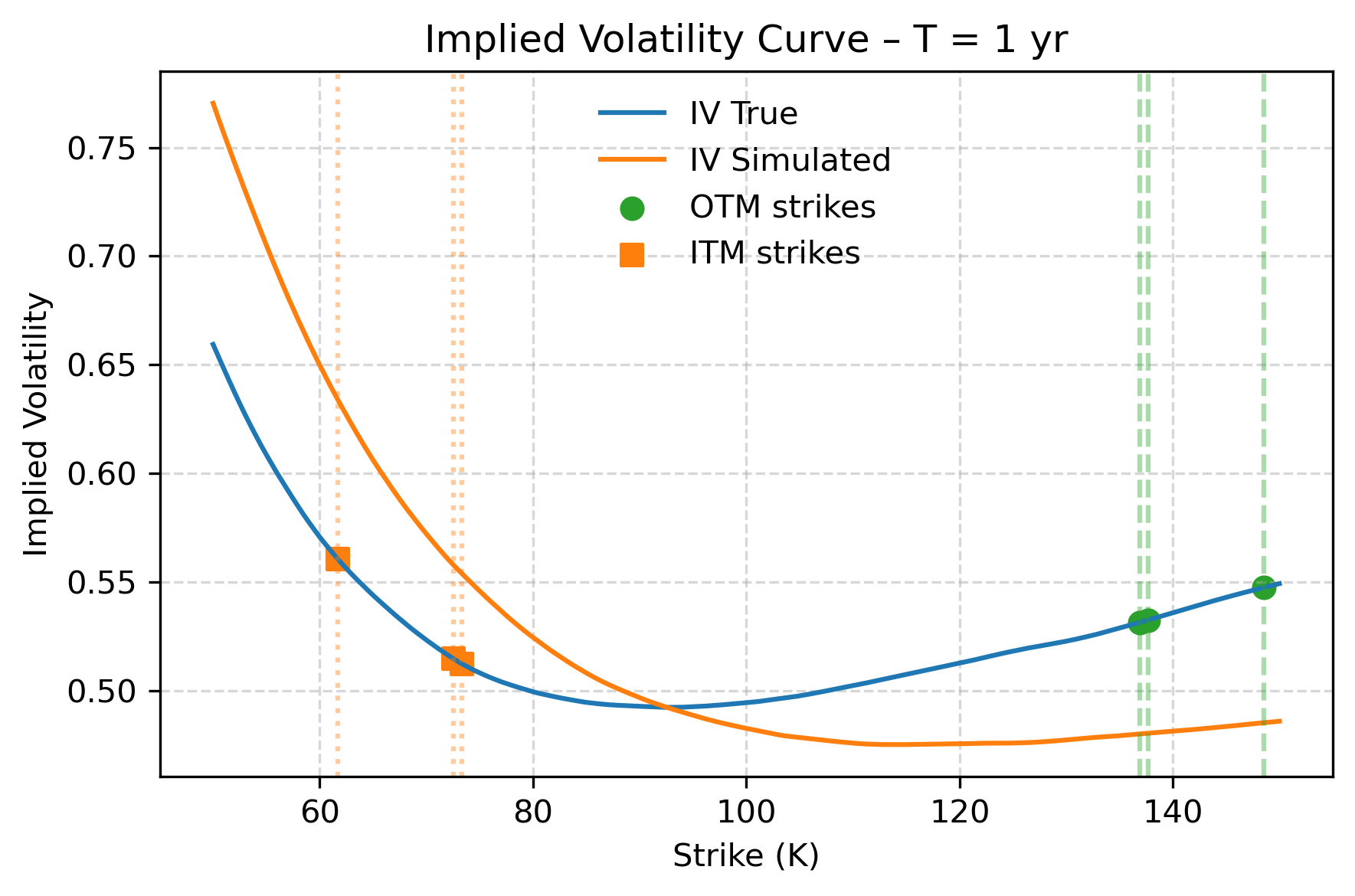}
\end{subfigure}\hfill
\begin{subfigure}{0.55\textwidth}
  \includegraphics[width=\linewidth]{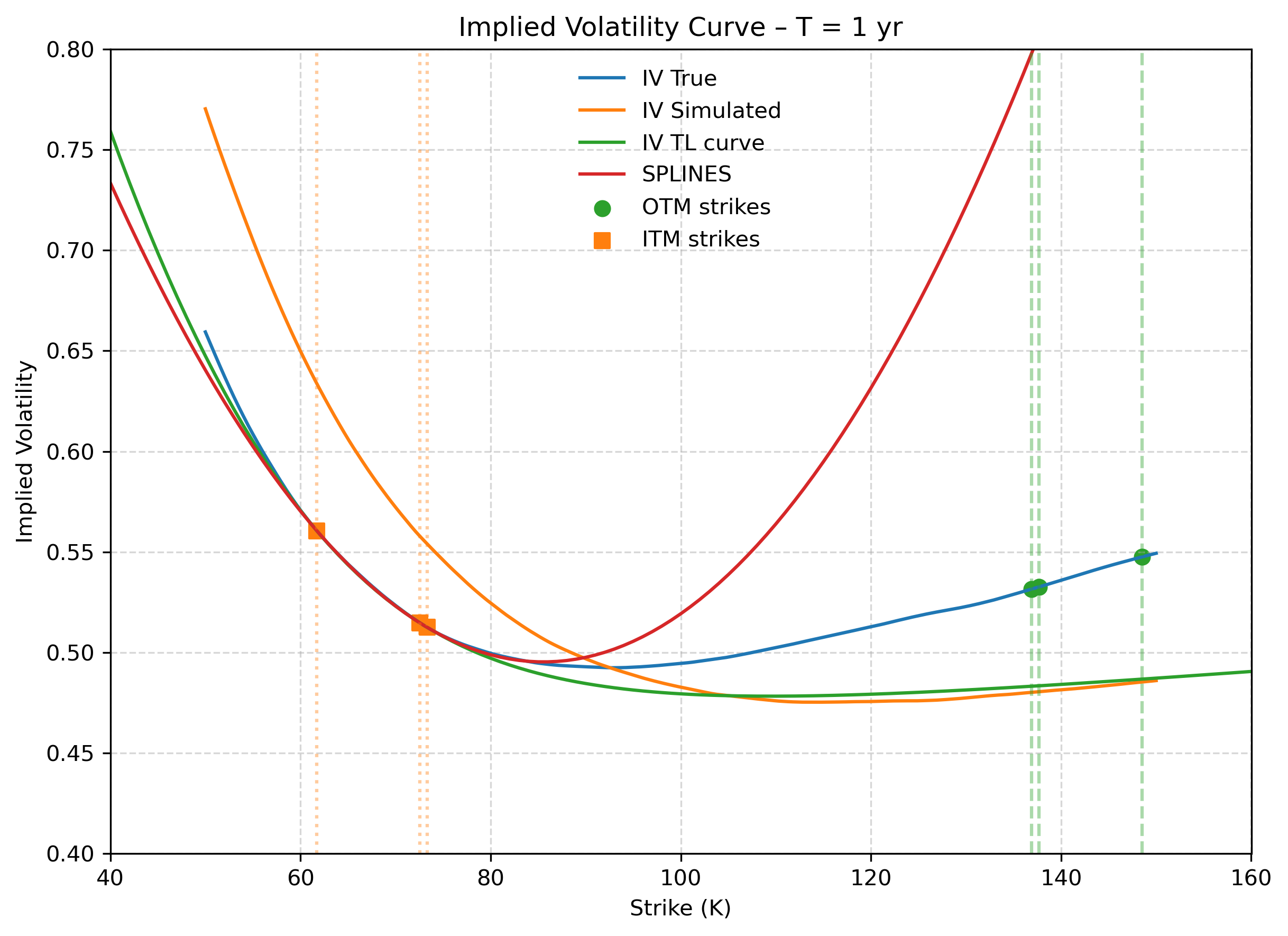}
\end{subfigure}
\caption{Scenario 1 - On the left panel, the orange dots are the ITM strikes selected on the target illiquid implied volatility curve (blue solid line), and the orange solid line is the liquid source (proxy) implied volatility curve. On the right panel, the interpolation of the illiquid implied volatility curve of the Deep-LSE model (green solid line) and quadratic splines (red solid line).}
\label{fig:abl_setup}
\end{minipage}

\end{figure}

We observe in Fig. \ref{fig:abl_final} the estimates of Deep-LSE and quadratic splines compared to the ground truth illiquid RND. It emerges that the Deep-LSE accurately recovers the illiquid RND, while the estimate of the quadratic spline is erratic and does not approximate the target RND.

\begin{figure}[H]
    \centering
    \begin{subfigure}{0.75\textwidth}
        \includegraphics[width=\linewidth]{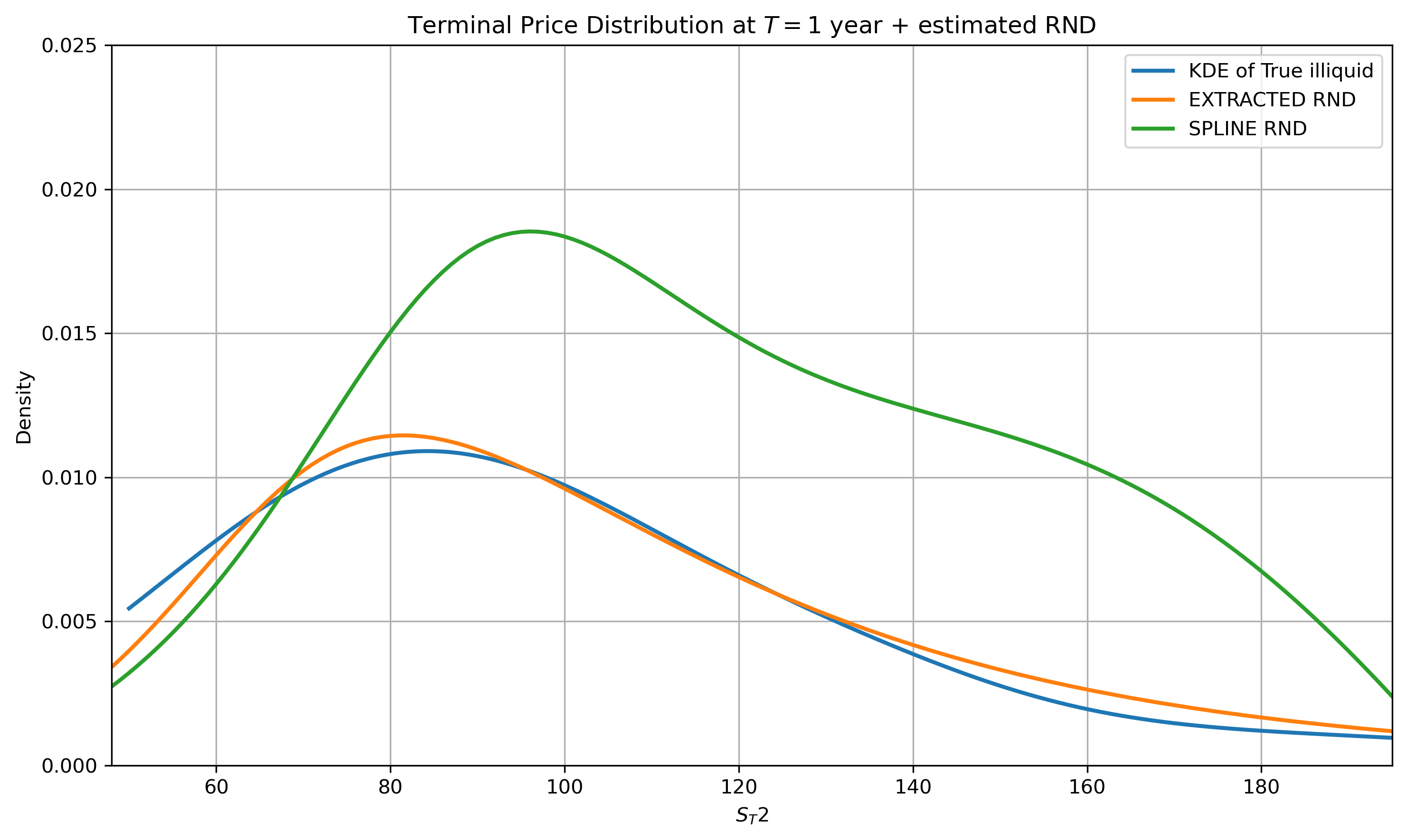} 
    \end{subfigure}
    \caption{Scenario 1 - Illiquid RND recovery of Deep-LSE (orange curve) and quadratic splines (green curve) in comparison with the illiquid target ground truth simulated RND (blue curve).}
    \label{fig:abl_final}
\end{figure}


We also test the Deep-LSE model on out-of-the-money call options and illiquid strikes (Scenario 2), and Fig. \ref{fig:abl_setup2} illustrates this case.

\begin{figure}[H]
\centering
\begin{minipage}{\textwidth}
\begin{subfigure}{0.55\textwidth}
  \includegraphics[width=\linewidth]{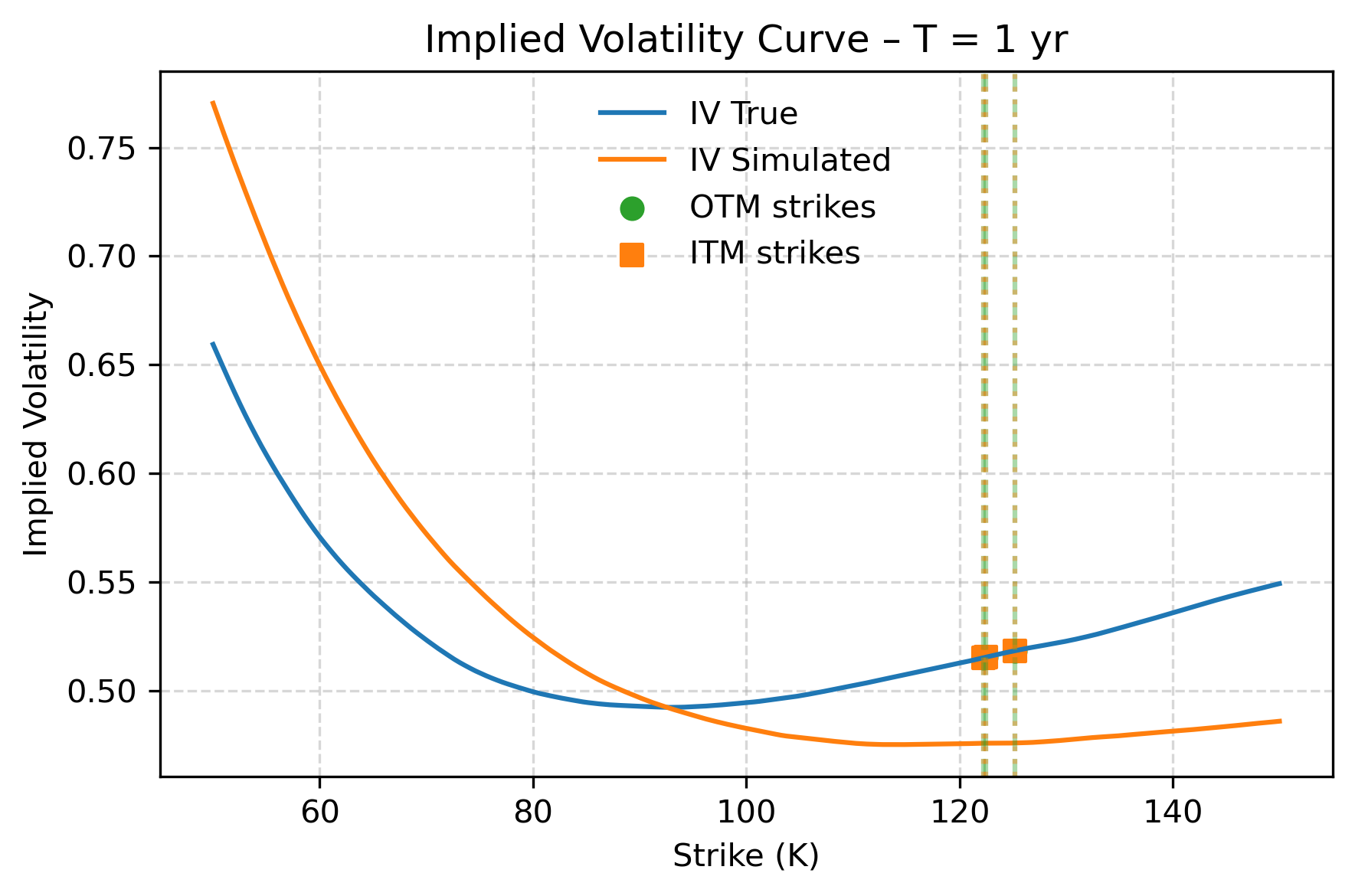}
\end{subfigure}\hfill
\begin{subfigure}{0.55\textwidth}
  \includegraphics[width=\linewidth]{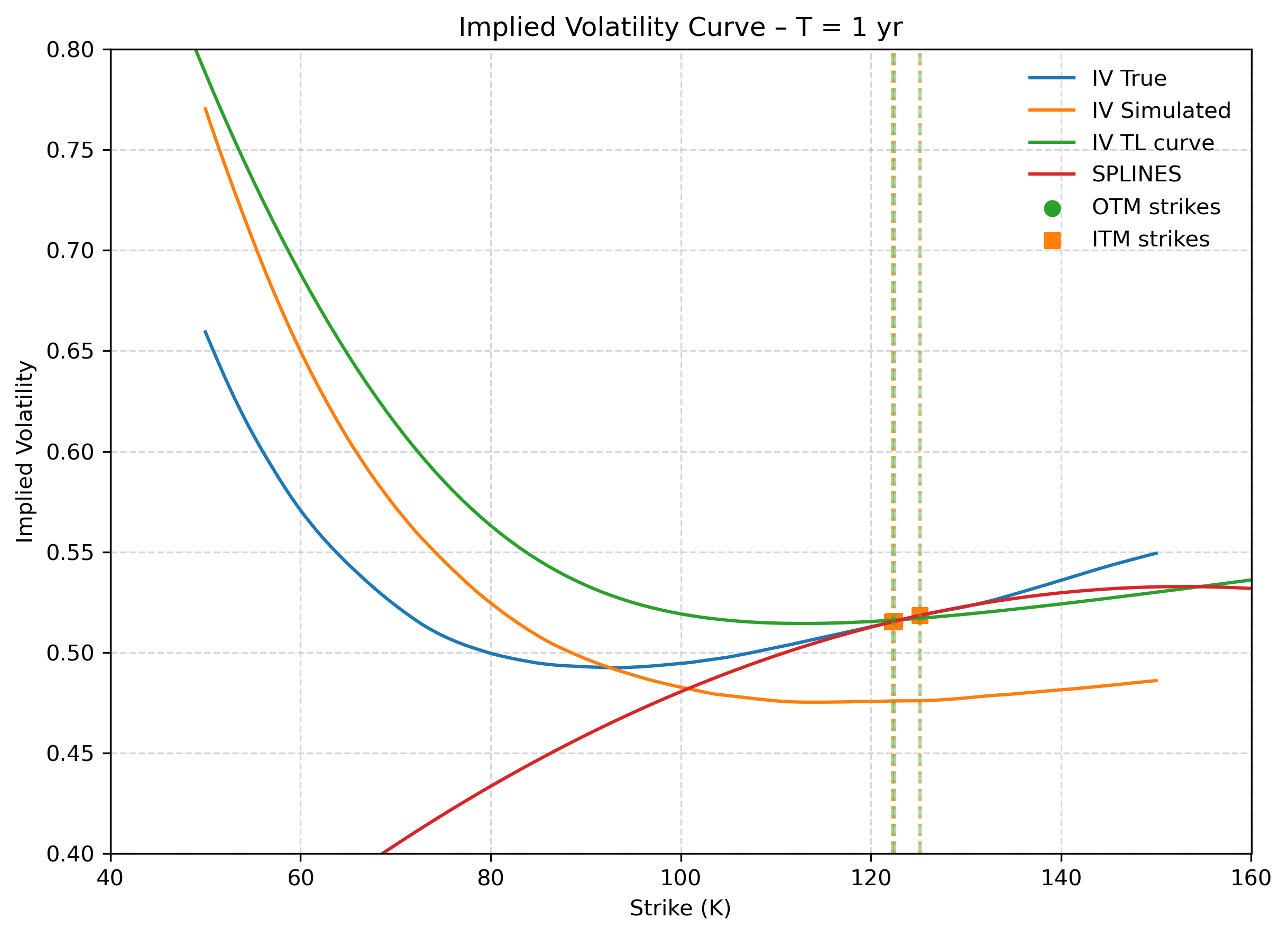}
\end{subfigure}
\caption{Scenario 2 - On the left panel, the orange dots are the OTM strikes selected on the target illiquid implied volatility curve (blue solid line), and the orange solid line is the liquid source (proxy) implied volatility curve. On the right panel, the interpolation of the illiquid implied volatility curve of the Deep-LSE model (green solid line) and quadratic splines (red solid line).}
\label{fig:abl_setup2}
\end{minipage}

\end{figure}

In Fig. \ref{fig:abl_final2}, we observe that also in Scenario 2, the Deep-LSE model is able to recover the illiquid RND, while quadratic splines yield an inaccurate estimate.

\begin{figure}[H]
    \centering
    \begin{subfigure}{0.75\textwidth}
        \includegraphics[width=\linewidth]{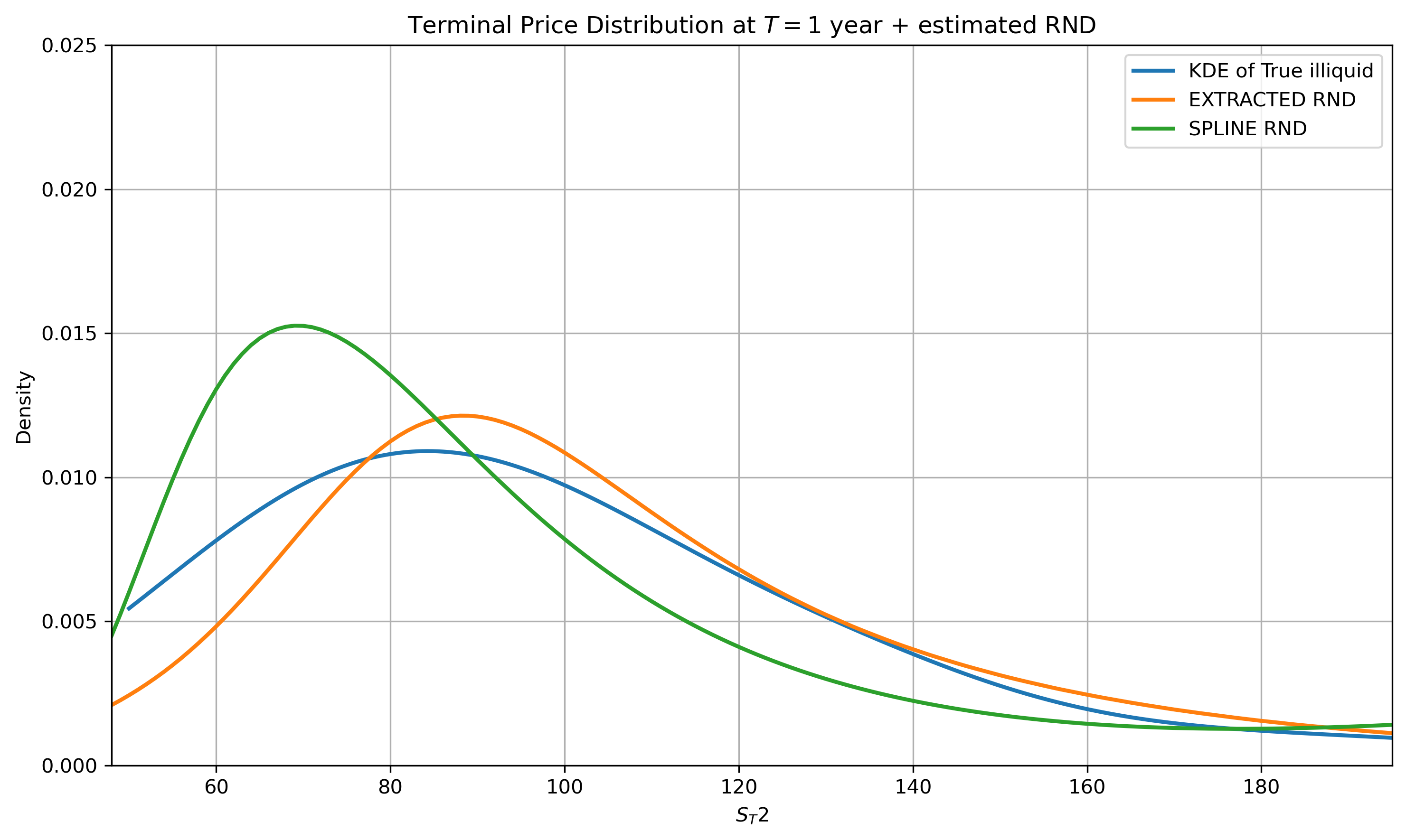} 
    \end{subfigure}
    \caption{Scenario 2 - Illiquid RND recovery of Deep-LSE (orange curve) and quadratic splines (green curve) in comparison with the illiquid target ground truth simulated RND (blue curve).}
    \label{fig:abl_final2}
\end{figure}

\subsection{Three-Factor Double Exponential Stochastic Volatility Model}\label{3fde_appendix}

The three-factor double exponential stochastic volatility model of \textcite{andersen2015risk} models the forward price $F_t$ in the risk-neutral measure as follows:
$$
\begin{aligned}
\frac{d F_t}{F_{t-}} & =\sqrt{V_{1, t}} d W_{1, t}^{\mathbb{Q}}+\sqrt{V_{2, t}} d W_{2, t}^{\mathbb{Q}}+\eta \sqrt{U_t} d W_{3, t}^{\mathbb{Q}}+\int_{\mathbb{R}^2}\left(e^x-1\right) \widetilde{\mu}^{\mathbb{Q}}(d t, d x, d y) \\
d V_{1, t} & =\kappa_1\left(\bar{v}_1-V_{1, t}\right) d t+\sigma_1 \sqrt{V_{1, t}} d B_{1, t}^{\mathbb{Q}}+\mu_v \int_{\mathbb{R}^2} x^2 1_{\{x<0\}} \mu(d t, d x, d y) \\
d V_{2, t} & =\kappa_2\left(\bar{v}_2-V_{2, t}\right) d t+\sigma_2 \sqrt{V_{2, t}} d B_{2, t}^{\mathbb{Q}} \\
d U_t & =-\kappa_u U_t d t+\mu_u \int_{\mathbb{R}^2}\left[\left(1-\rho_u\right) x^2 1_{\{x<0\}}+\rho_u y^2\right] \mu(d t, d x, d y)
\end{aligned}
$$
where $\left(W_{1, t}^{\mathbb{Q}}, W_{2, t}^{\mathbb{Q}}, W_{3, t}^{\mathbb{Q}}, B_{1, t}^{\mathbb{Q}}, B_{2, t}^{\mathbb{Q}}\right)$ is a Brownian motion in five dimension with $\operatorname{corr}\left(W_{1, t}^{\mathbb{Q}}, B_{1, t}^{\mathbb{Q}}\right)=\rho_1$, $\operatorname{corr}\left(W_{2, t}^{\mathbb{Q}}, B_{2, t}^{\mathbb{Q}}\right)=\rho_2$, while the other Brownian motions are independent.

Jumps in the forward price $F$ and in the state vector $(V_1,V_2,U)$ are modeled by an
integer-valued counting measure $\mu$. Under the risk-neutral measure $\mathbb{Q}$,
the jump intensity is
\[
dt \otimes v_t^{\mathbb{Q}}(dx,dy),
\]
and the martingale jump measure is
\[
\widetilde{\mu}^{\mathbb{Q}}(dt,dx,dy)
= \mu(dt,dx,dy) - dt\, v_t^{\mathbb{Q}}(dx,dy).
\]
The jump structure uses two components. The variable $x$ captures co-jumps in
$F_t$, $V_{1,t}$ and $U_t$ (if $\rho_u<1$), while $y$ represents shocks
specific to $U_t$, and may also affect return volatility when $\eta>0$. The density is
\[
\frac{v_t^{\mathbb{Q}}(dx,dy)}{dx\,dy}
=
\begin{cases}
c^{-}(t)\mathbf{1}_{\{x<0\}} \lambda_- e^{-\lambda_-|x|}
+
c^{+}(t)\mathbf{1}_{\{x>0\}} \lambda_+ e^{-\lambda_+ x},
& \text{if } y=0,\\[6pt]
c^{-}(t)\lambda_- e^{-\lambda_-|y|},
& \text{if } x=0 \text{ and } y<0.
\end{cases}
\]
Thus, $x\neq 0$ (with $y=0$) implies joint price--volatility jumps, whereas $x=0$ and $y<0$ yields independent jumps in $U$. Positive jumps in $U$ are either independent of $V_1$ when $\rho_u=1$, or proportional to the jumps in $V_1$ when $\rho_u=0$. Price jumps follow a double-exponential law with tail parameters $\lambda_-$ and $\lambda_+$ for negative and positive jumps. For parsimony, the independent $U$ shocks share the same distribution as negative price jumps. Time-varying jump intensities are affine in the state
\[
\begin{aligned}
c^{-}(t) &= c_0^{-}+c_1^{-}V_{1,t,-}+c_2^{-}V_{2,t,-}+c_u^{-}U_{t,-},\\
c^{+}(t) &= c_0^{+}+c_1^{+}V_{1,t,-}+c_2^{+}V_{2,t,-}+c_u^{+}U_{t,-}.
\end{aligned}
\]
Under the three-factor double-exponential stochastic volatility model, the
spot diffusive variance of the forward return is
\[
V_t = V_{1,t}+V_{2,t}+\eta^2 U_t.
\]

In Table \ref{tab:3fde} we report the parameters we use for the simulation. To construct the target implied volatility curve, we define two different sets of parameters and obtain two distinct implied volatility curves. Regarding the first one, we assume it is the liquid proxy, while the second is the illiquid target.

\begin{table}[H]
\centering
\footnotesize
\begin{tabular}{l c c}
\hline
Parameter & Liquid Proxy & Illiquid Target \\
\hline
$T$ & 1.0 & 1.0 \\
$S_0$ & 100 & 100 \\
$r_f$ & 0.05 & 0.05 \\
\hline
$V1_0$ & 0.01 & 0.01 \\
$V2_0$ & 0.04 & 0.04 \\
$U_0$  & 0.0  & 0.0  \\
\hline
$\kappa_1$ & 10.0 & 10.0 \\
$\bar v_1$ & 0.01 & 0.01 \\
$\sigma_1$ & 0.4  & 0.4  \\
$\rho_1$   & -0.9 & -0.9 \\
\hline
$\kappa_2$ & 0.2  & 0.2  \\
$\bar v_2$ & 0.04 & 0.03 \\
$\sigma_2$ & 0.12 & 0.06 \\
$\rho_2$   & -0.8 & -0.6 \\
\hline
$\kappa_u$ & 0.6 & 0.6 \\
$\eta$     & 0.0 & 0.0 \\
\hline
$\mu_v$  & 0.7   & 0.7   \\
$\mu_u$  & 10.0  & 10.0  \\
$\rho_u$ & 0.001 & 0.001 \\
\hline
$c_-$ & $(0.0,\,6.0,\,0.22,\,10.0)$ & $(0.0,\,1.0,\,0.1,\,7.0)$ \\
$c_+$ & $(0.3,\,20.0,\,18.0,\,0.0)$ & $(0.05,\,15.0,\,18.0,\,0.0)$ \\
\hline
$\lambda_-$ & 8.0  & 10.0 \\
$\lambda_+$ & 6.0  & 5.7  \\
\hline
\end{tabular}
\caption{Simulated parameters for Three Factor Double Exponential model. Set 1 for the liquid proxy and set 2 for the illiquid target.}
\label{tab:3fde}
\end{table}

We study two situations of severe market illiquidity by randomly selecting three in-the-money call option quotes for Scenario 1 and three out-of-the-money call option quotes for Scenario 2. We emphasize that these three option quotes constitute the only information on the terminal RND available to the models. In Fig. \ref{fig:3fde_setup}, we represent the first case under consideration (Scenario 1), which consists of three illiquid observations of in-the-money (ITM) call options.

\begin{figure}[H]
\centering
\begin{minipage}{\textwidth}
\begin{subfigure}{0.55\textwidth}
  \includegraphics[width=\linewidth]{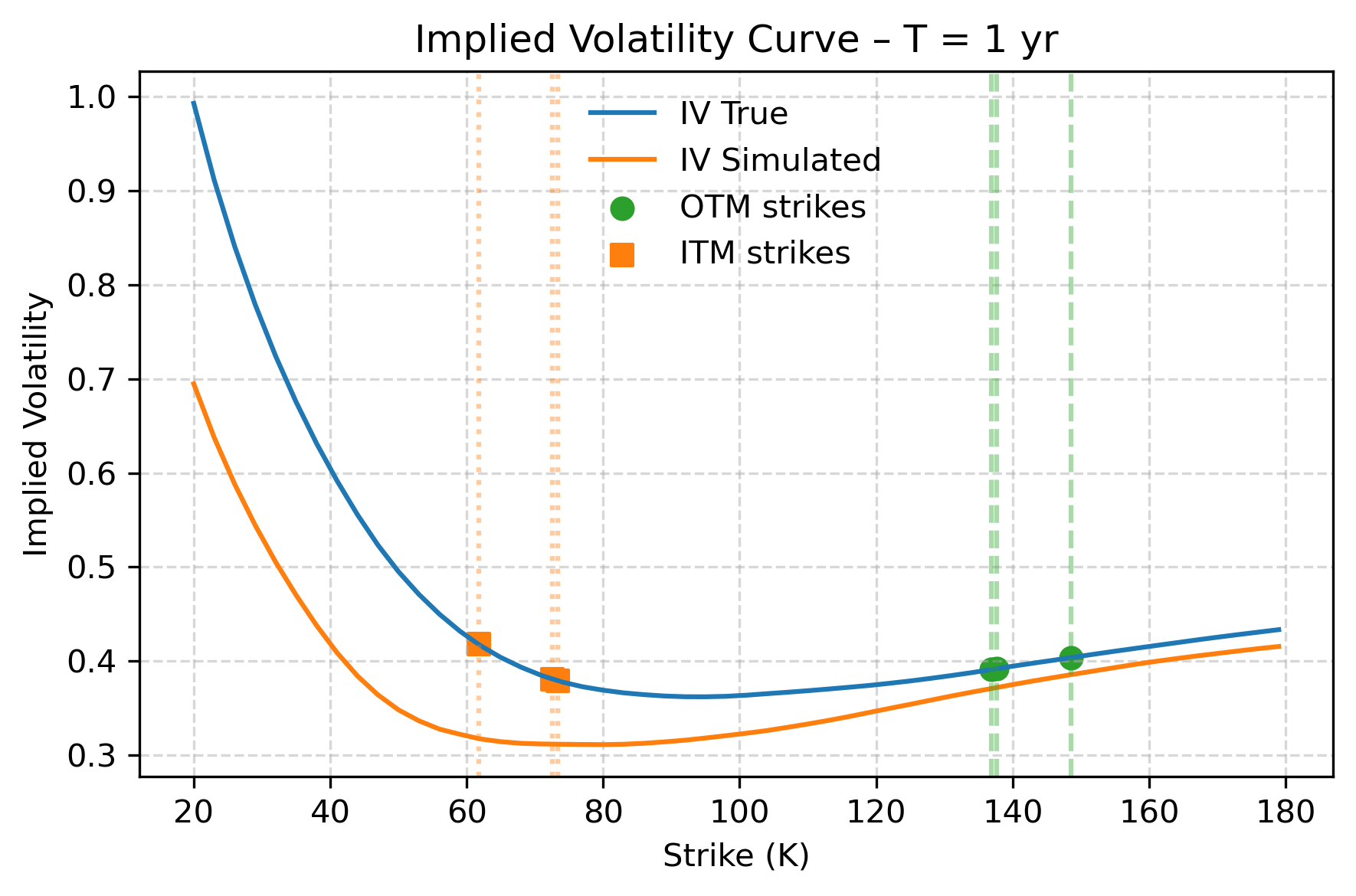}
\end{subfigure}\hfill
\begin{subfigure}{0.55\textwidth}
  \includegraphics[width=\linewidth]{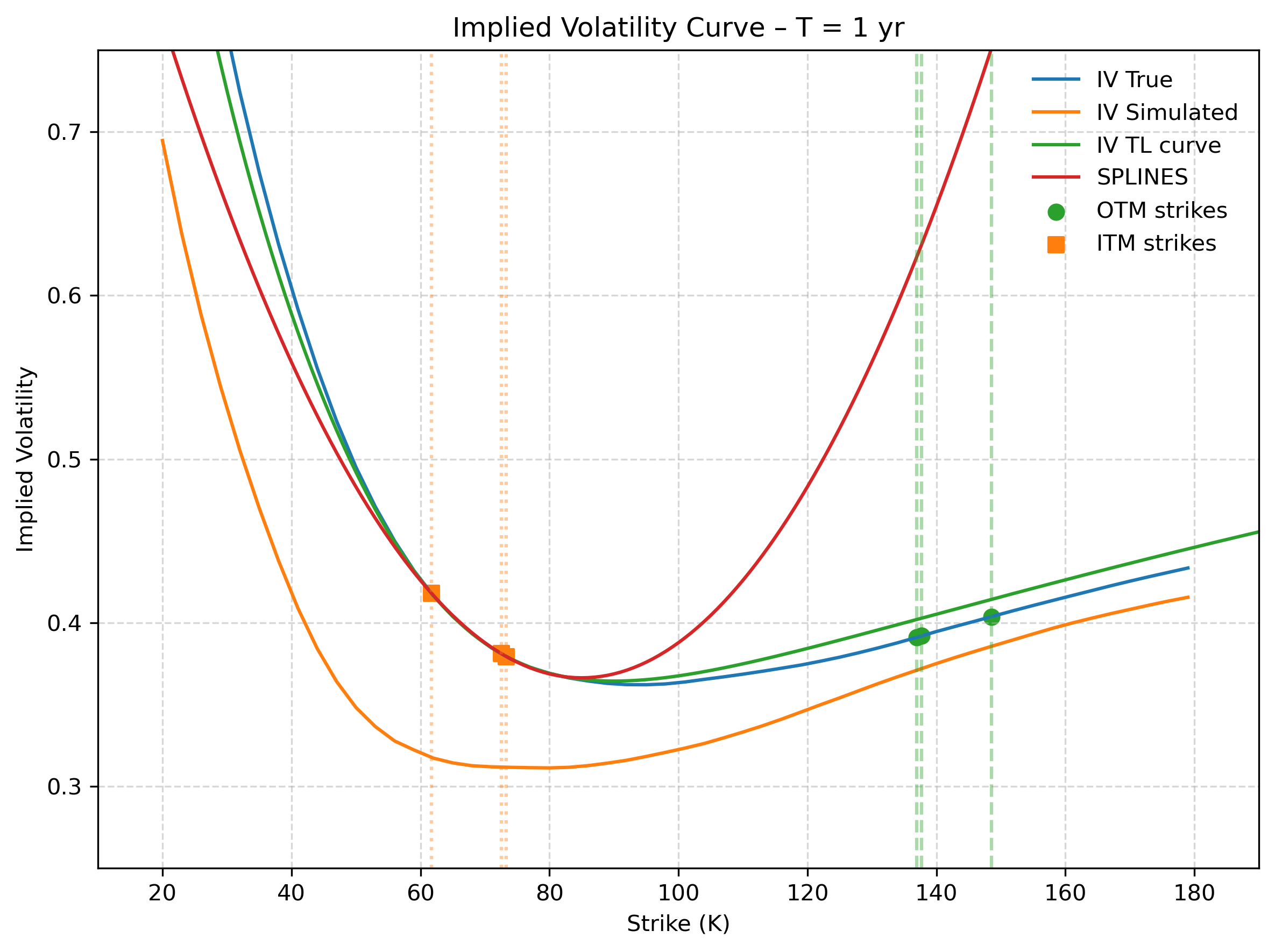}
\end{subfigure}
\caption{Scenario 1 - On the left panel, the orange dots are the ITM strikes selected on the target illiquid implied volatility curve (blue solid line), and the orange solid line is the liquid source (proxy) implied volatility curve. On the right panel, the interpolation of the illiquid implied volatility curve of the Deep-LSE model (green solid line) and quadratic splines (red solid line).}
\label{fig:3fde_setup}
\end{minipage}

\end{figure}

We observe in Fig. \ref{fig:3fde_final} the estimates of Deep-LSE and quadratic splines compared to the ground truth illiquid RND. The Deep-LSE accurately recovers the illiquid RND, while the quadratic spline recovers an unstable estimate of the illiquid RND.

\begin{figure}[H]
    \centering
    \begin{subfigure}{0.75\textwidth}
        \includegraphics[width=\linewidth]{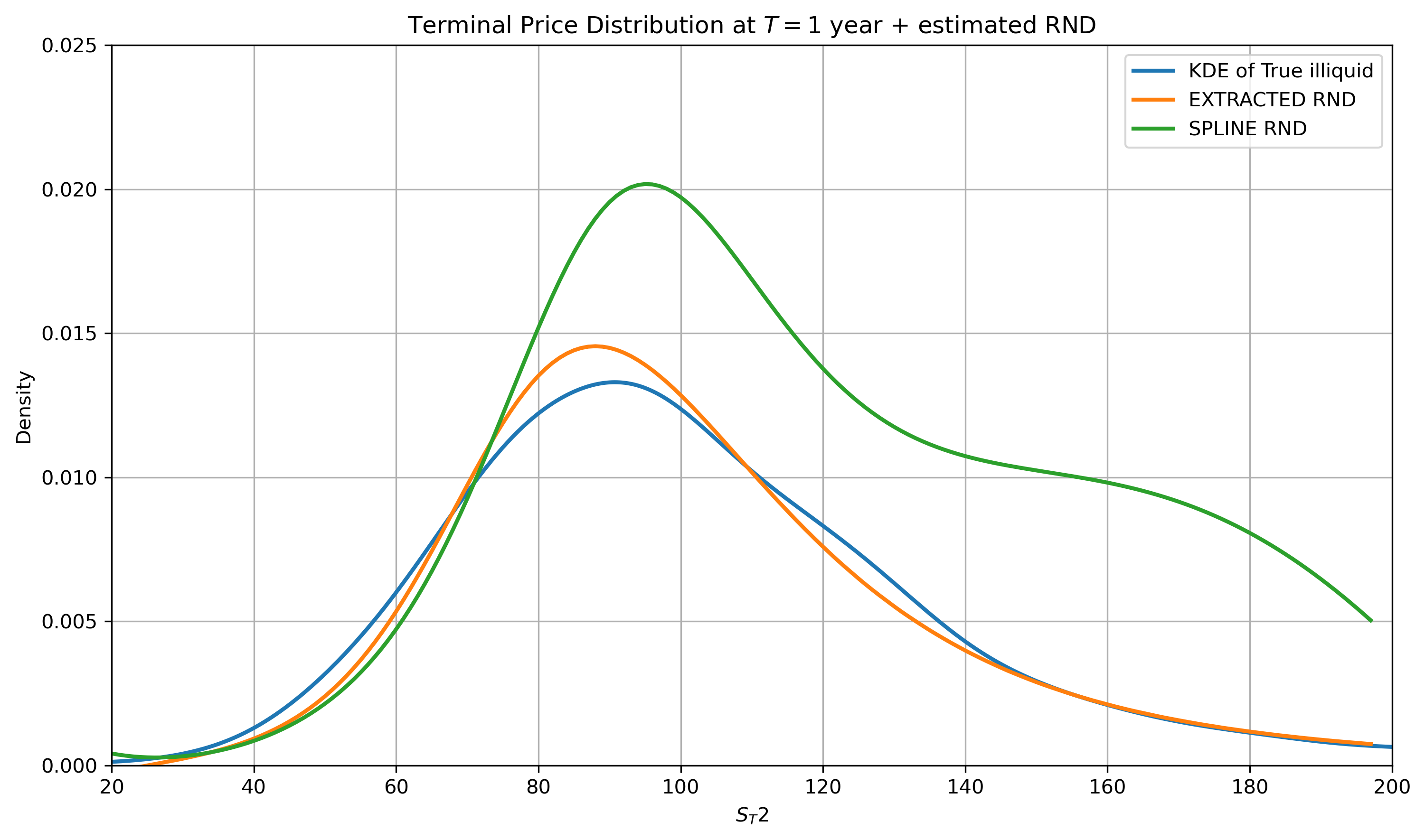} 
    \end{subfigure}
    \caption{Scenario 1 - Illiquid RND recovery of Deep-LSE (orange curve) and quadratic splines (green curve) in comparison with the illiquid target ground truth simulated RND (blue curve).}
    \label{fig:3fde_final}
\end{figure}


We also test the Deep-LSE model on out-of-the-money call options and illiquid strikes (Scenario 2), and Fig. \ref{fig:3fde_setup2} illustrates this case. We illustrate in Fig. \ref{fig:3fde_final2} the estimates of the Deep-LSE and quadratic splines versus the ground truth illiquid RND. We conclude that, also in this case, the Deep-LSE recovers the RND better than quadratic splines.

\begin{figure}[H]
\centering
\begin{minipage}{\textwidth}
\begin{subfigure}{0.55\textwidth}
  \includegraphics[width=\linewidth]{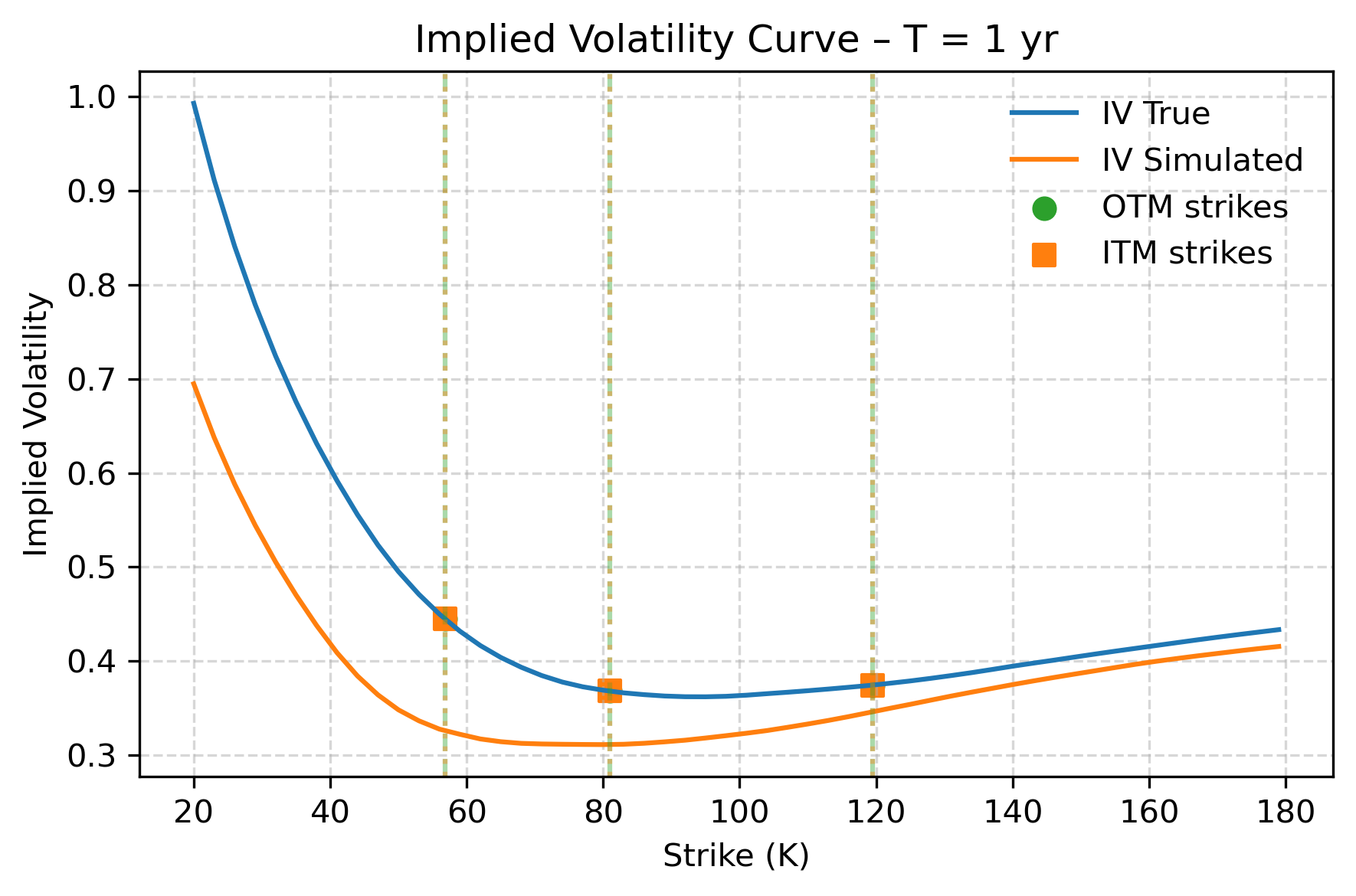}
\end{subfigure}\hfill
\begin{subfigure}{0.55\textwidth}
  \includegraphics[width=\linewidth]{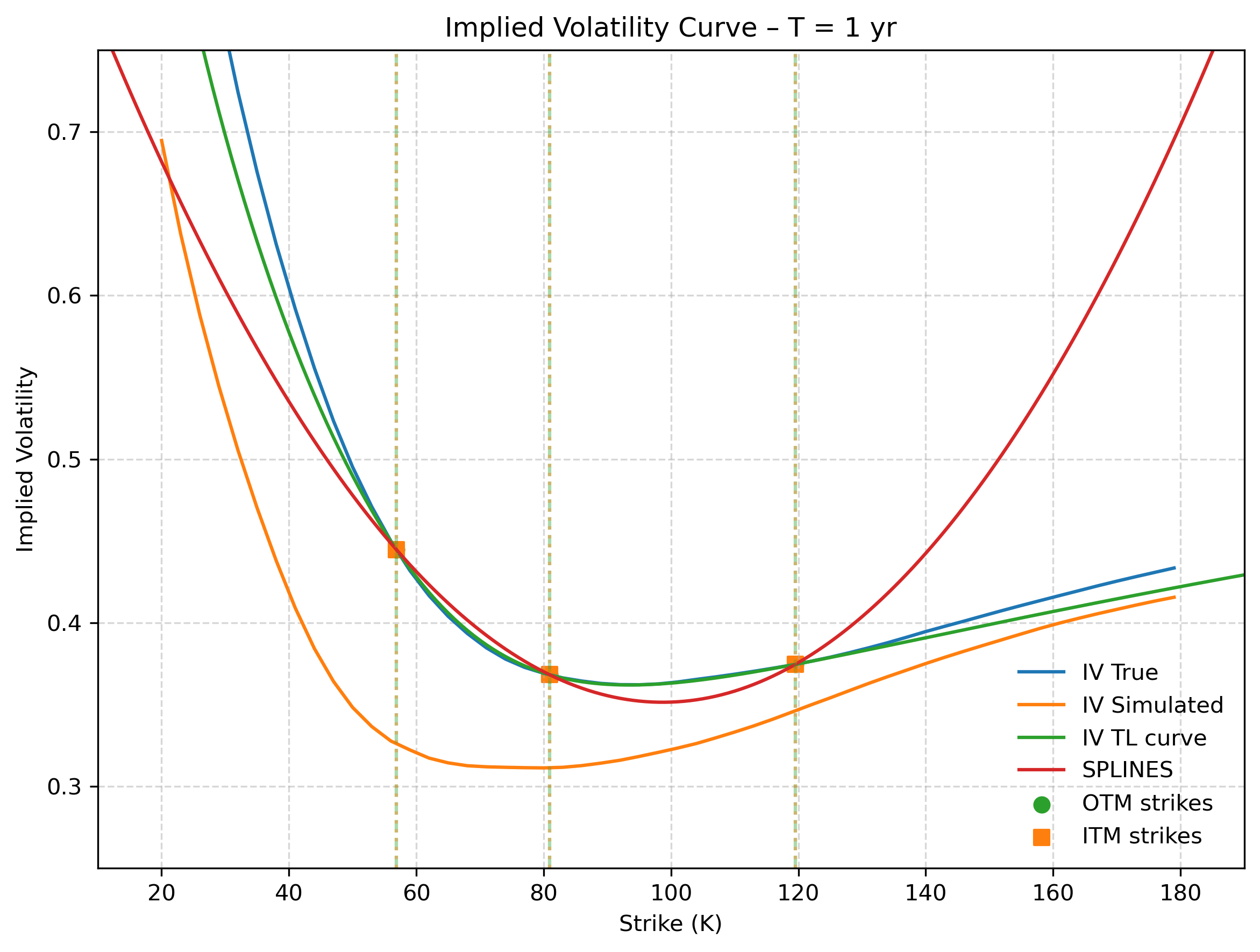}
\end{subfigure}
\caption{Scenario 2 - On the left panel, the orange dots are the OTM strikes selected on the target illiquid implied volatility curve (blue solid line), and the orange solid line is the liquid source (proxy) implied volatility curve. On the right panel, the interpolation of the illiquid implied volatility curve of the Deep-LSE model (green solid line) and quadratic splines (red solid line).}
\label{fig:3fde_setup2}
\end{minipage}

\end{figure}

\begin{figure}[H]
    \centering
    \begin{subfigure}{0.75\textwidth}
        \includegraphics[width=\linewidth]{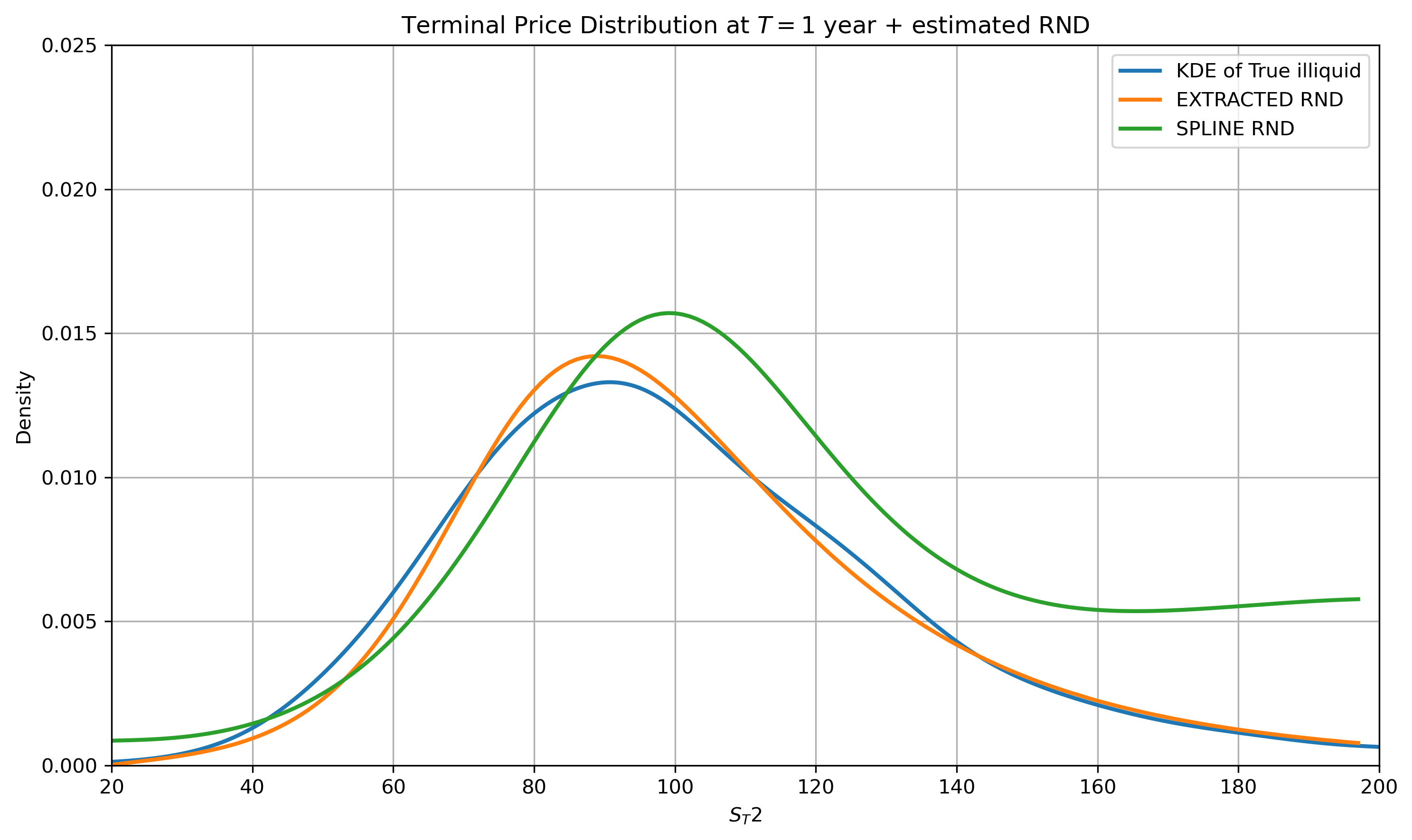} 
    \end{subfigure}
    \caption{Case 2 - Illiquid RND recovery of Deep-LSE (orange curve) and quadratic splines (green curve) in comparison with the illiquid target ground truth simulated RND (blue curve).}
    \label{fig:3fde_final2}
\end{figure}

\newpage

\section{Additional Empirical Analysis}\label{empa_appendix}

Regarding Scenario 1 of the empirical analysis on the SPX, Fig. \ref{fig:train_source_empa} and Fig. \ref{fig:train_target_empa} depict the learning process of the Deep-LSE from the source (proxy) data and from the illiquid target data respectively. Fig. \ref{fig:train_source_empa} is the first step of the estimation procedure. The model receives as input the liquid proxy data (blue points), which consists of the 2015 SPX implied volatility recovered from market data. The data points of the implied volatility curve are heavily convex in the strike range of $2100-2200$, making the learning process more challenging. In addition, the scale of the data (extremely small) and the right tail, which consists only of 4 observations, are additional challenges. Nonetheless, at iteration number 100, the approximation of the implied volatility curve is well calibrated.

\begin{figure}[]
\centering
\begin{minipage}{\textwidth}
\begin{subfigure}{0.45\textwidth}
  \includegraphics[width=\linewidth]{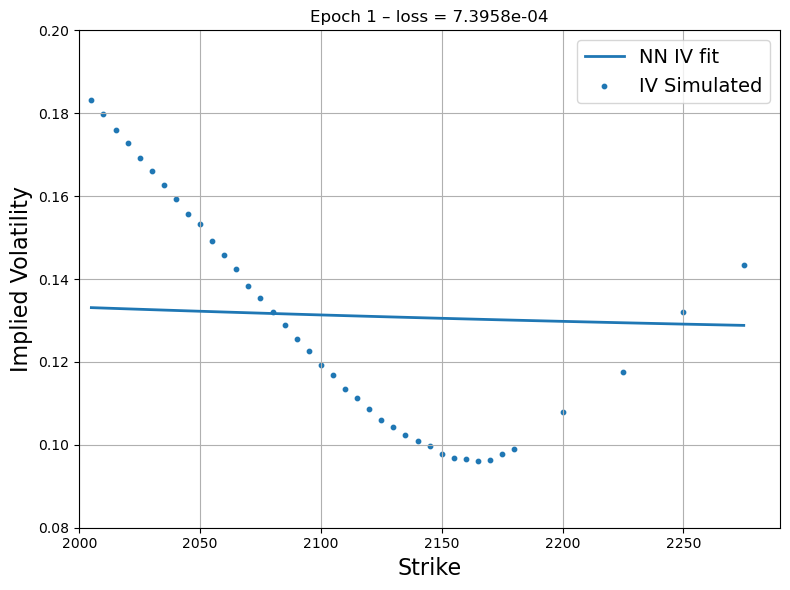}
\end{subfigure}\hspace{1cm}
\begin{subfigure}{0.45\textwidth}
  \includegraphics[width=\linewidth]{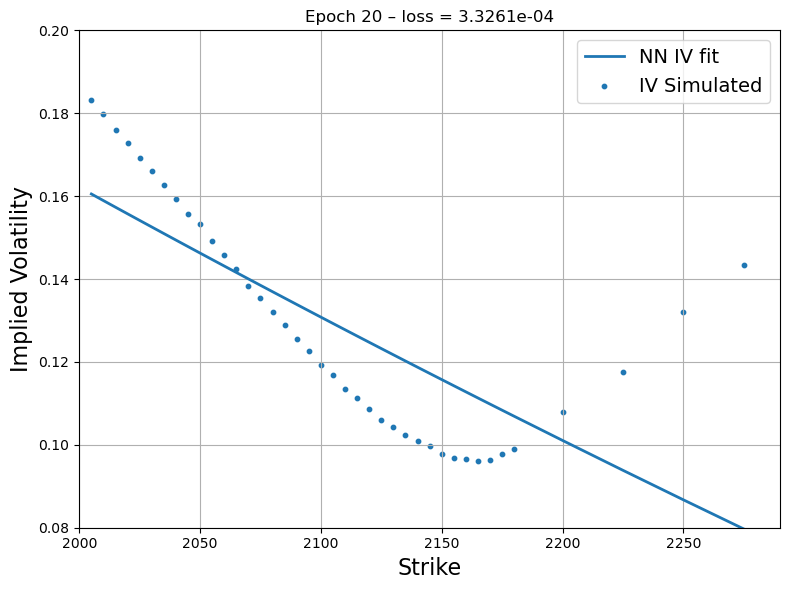}
\end{subfigure}

\medskip

\begin{subfigure}{0.45\textwidth}
  \includegraphics[width=\linewidth]{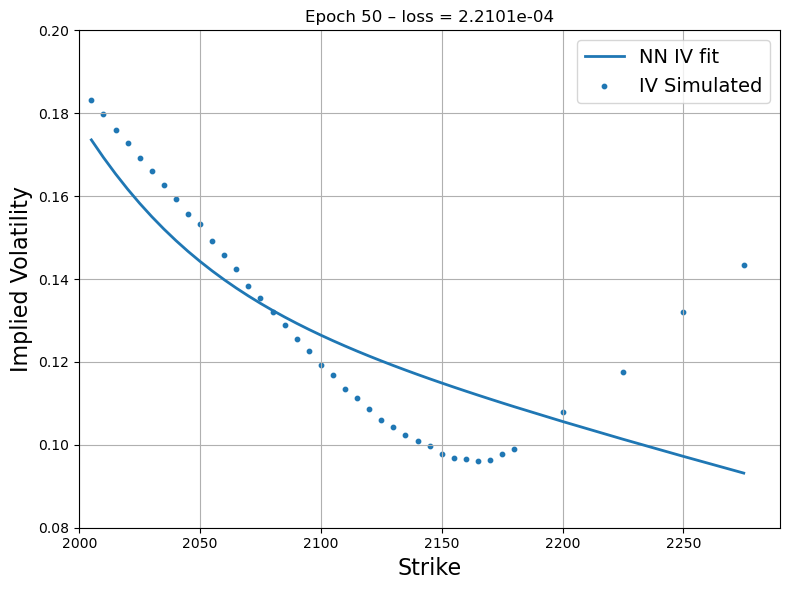}
\end{subfigure}\hspace{1cm}
\begin{subfigure}{0.45\textwidth}
  \includegraphics[width=\linewidth]{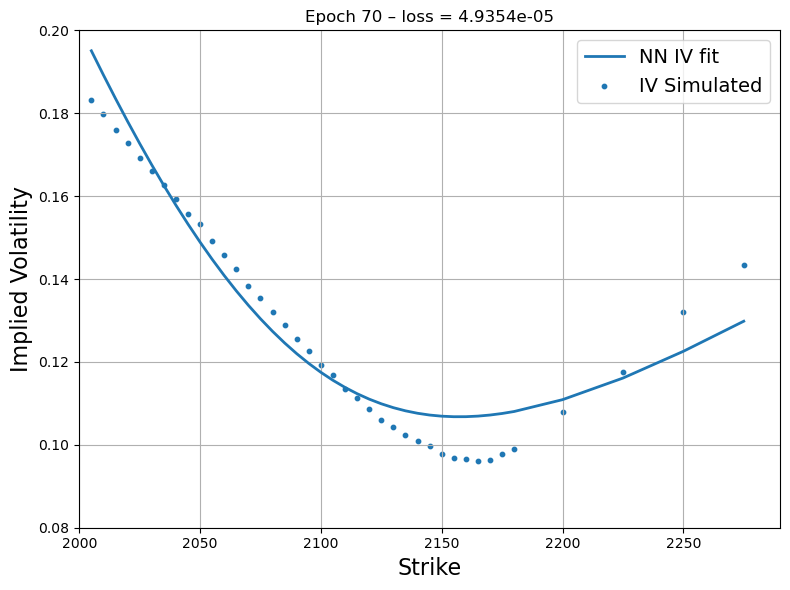}
\end{subfigure}

\medskip

\begin{subfigure}{0.45\textwidth}
  \includegraphics[width=\linewidth]{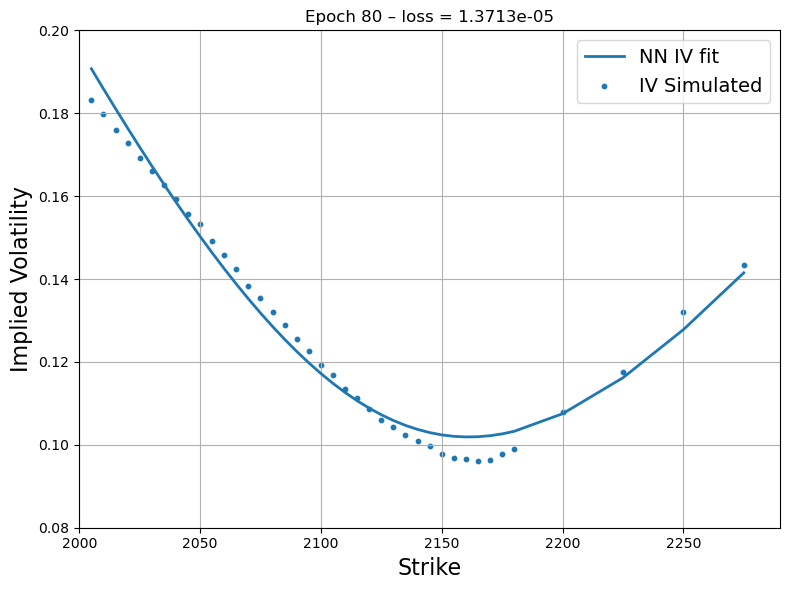}
\end{subfigure}\hspace{1cm}
\begin{subfigure}{0.45\textwidth}
  \includegraphics[width=\linewidth]{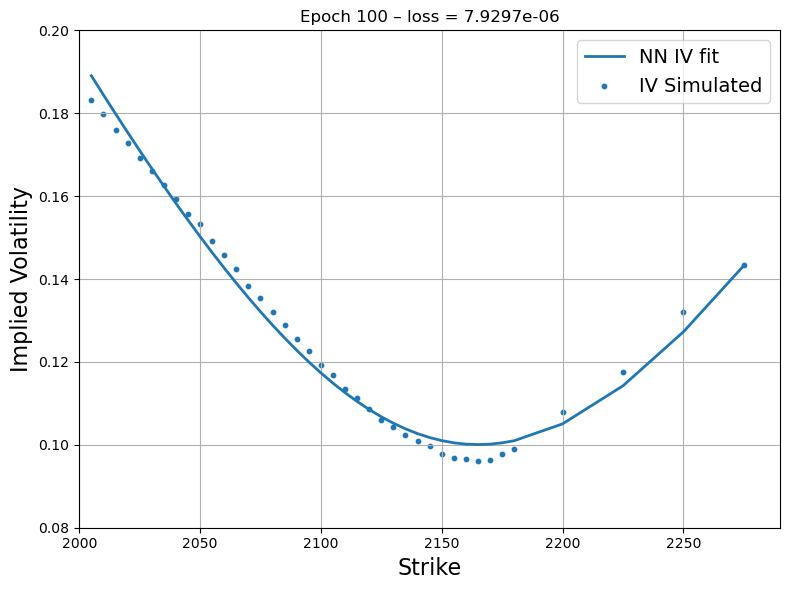}
\end{subfigure}

\caption{First step recovery (Scenario 1) - Source Deep-LSE Fit. The blue dots represent the implied volatility curve of option quotes of the liquid proxy asset while the blue solid line represents the fit of the interpolating function of the Deep-LSE model.}
\label{fig:train_source_empa}
\end{minipage}
\end{figure}

Fig. \ref{fig:train_target_empa} represents the second step of the estimation process to recover the illiquid RND. In this phase, we perform transfer learning, fine-tuning the model pre-trained on the liquid proxy, using the illiquid market data. The orange data points consist are the implied volatility of the liquid proxy (2015 SPX) data, and the orange curve is the implied volatility curve that the Deep-LSE estimates during the first phase.

\begin{figure}[]
\centering
\begin{minipage}{\textwidth}
\begin{subfigure}{0.45\textwidth}
  \includegraphics[width=\linewidth]{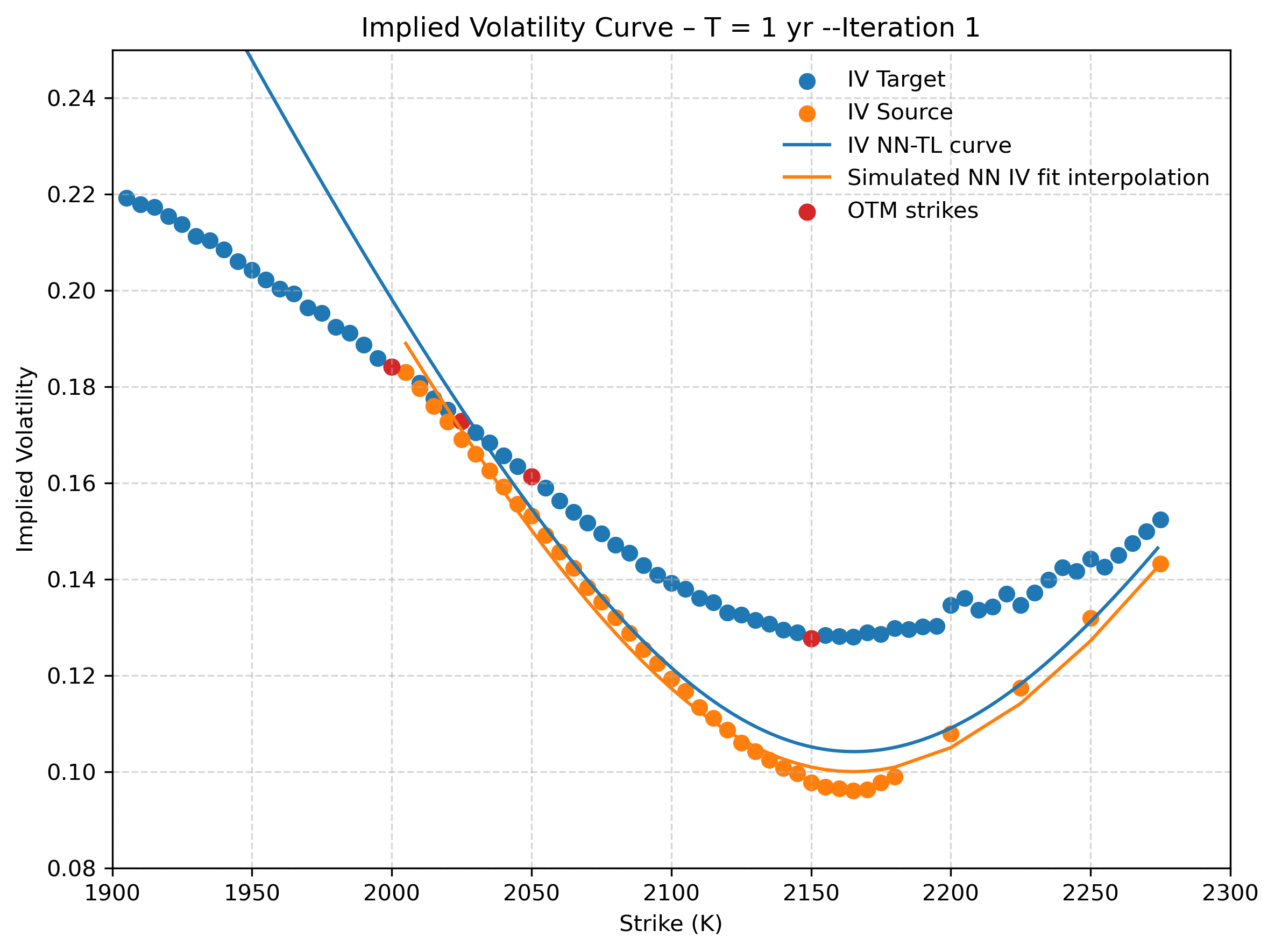}
\end{subfigure}\hspace{1cm}
\begin{subfigure}{0.45\textwidth}
  \includegraphics[width=\linewidth]{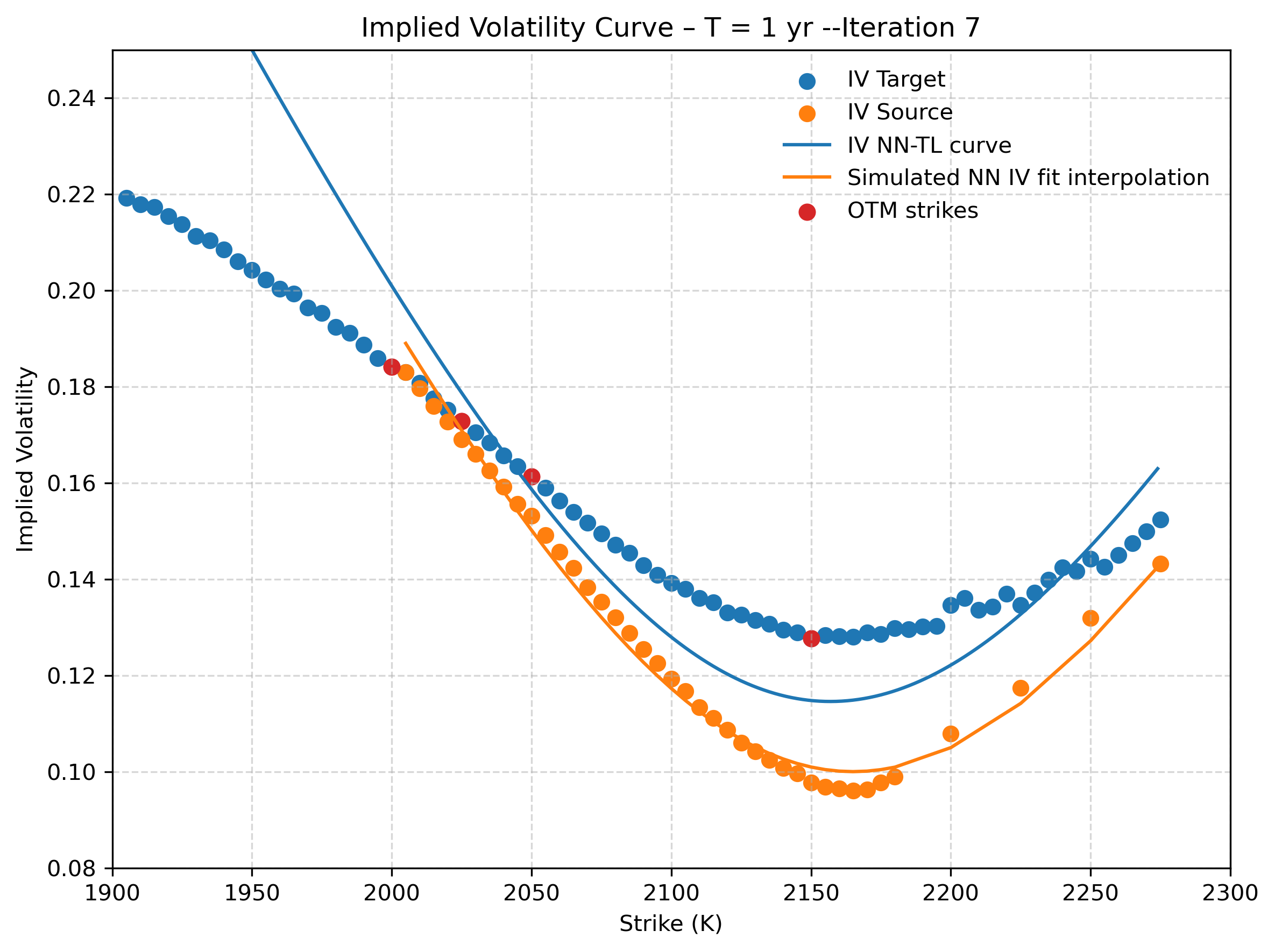}
\end{subfigure}

\medskip

\begin{subfigure}{0.45\textwidth}
  \includegraphics[width=\linewidth]{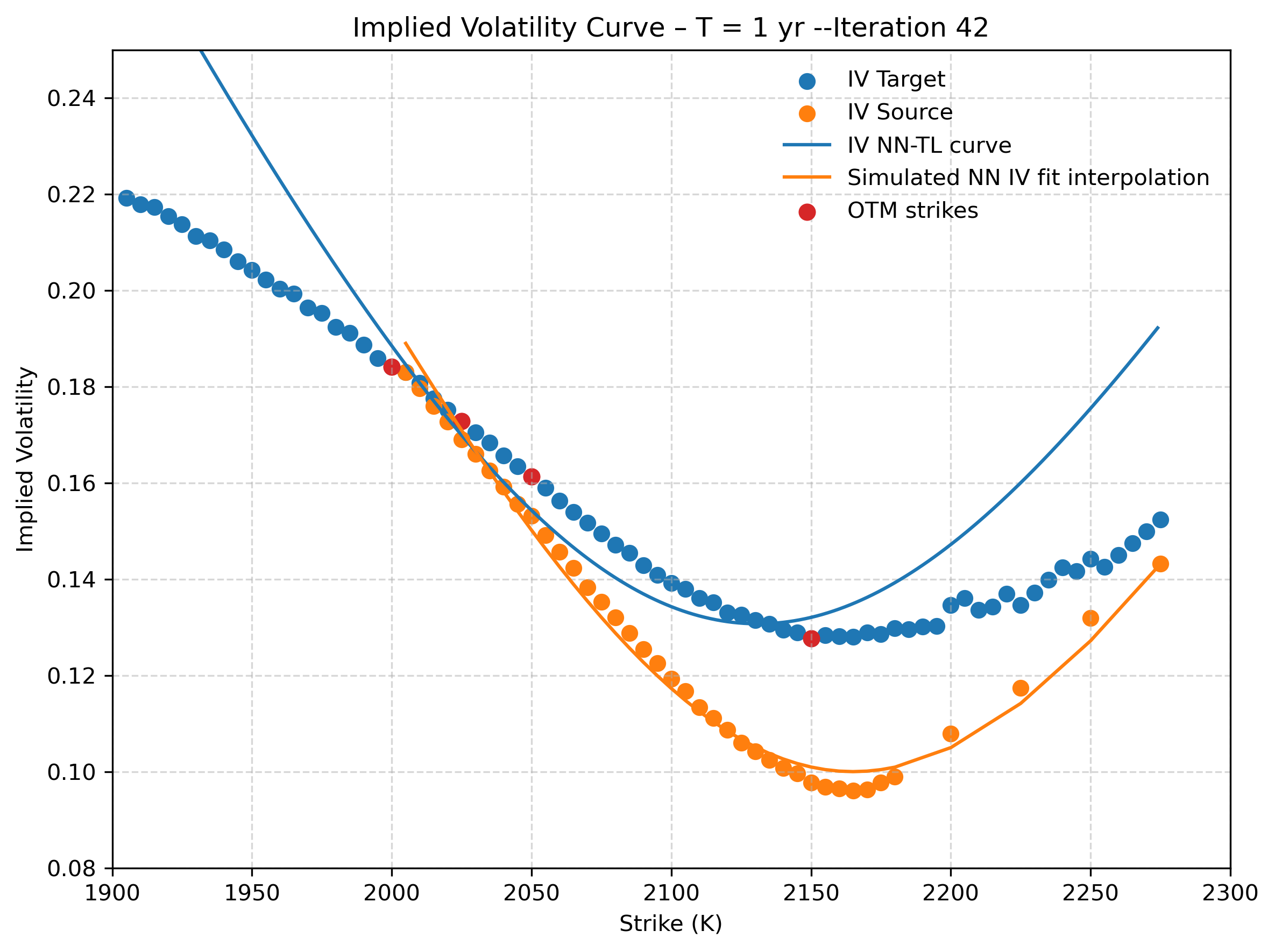}
\end{subfigure}\hspace{1cm}
\begin{subfigure}{0.45\textwidth}
  \includegraphics[width=\linewidth]{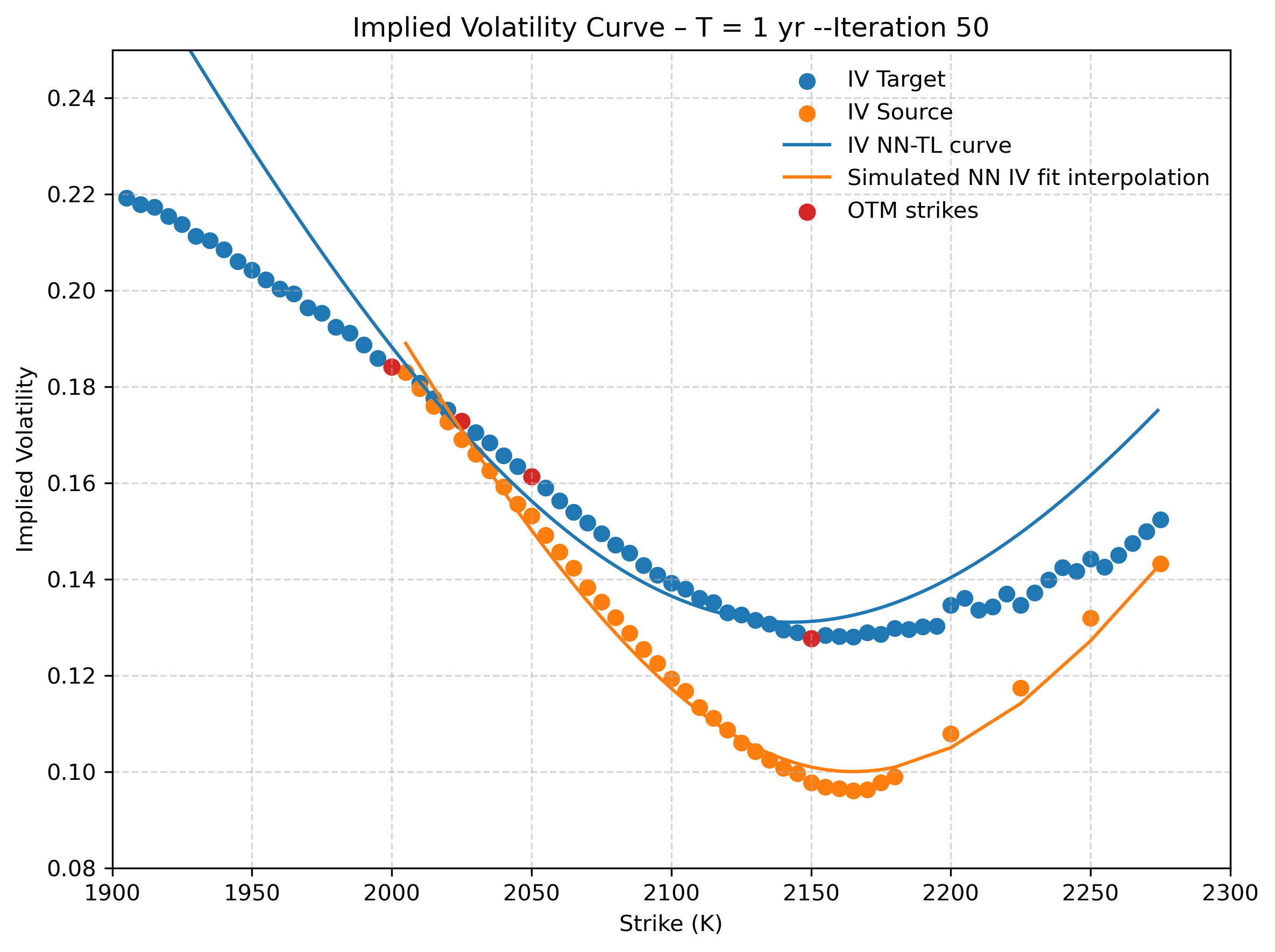}
\end{subfigure}

\medskip

\begin{subfigure}{0.45\textwidth}
  \includegraphics[width=\linewidth]{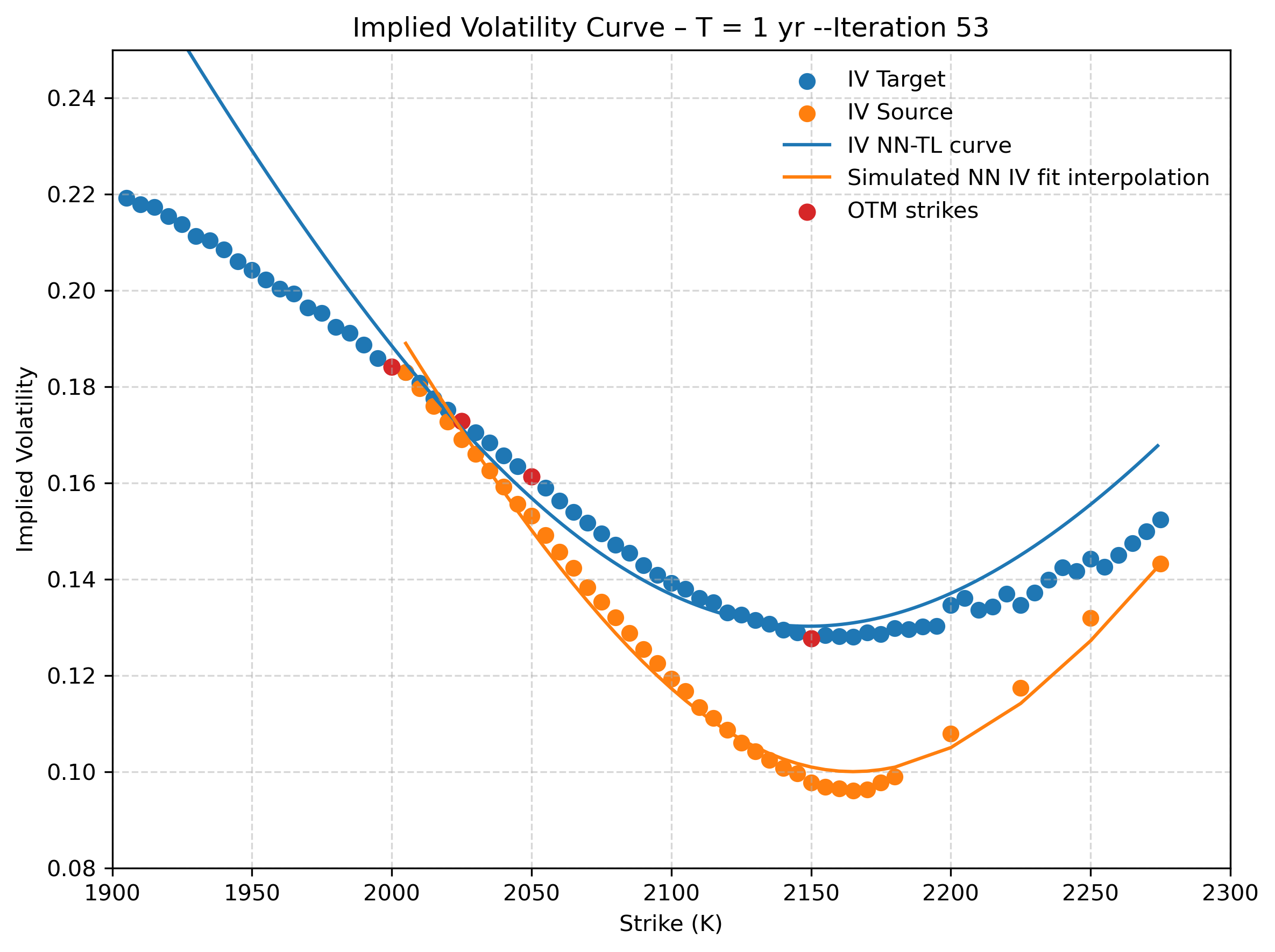}
\end{subfigure}\hspace{1cm}
\begin{subfigure}{0.45\textwidth}
  \includegraphics[width=\linewidth]{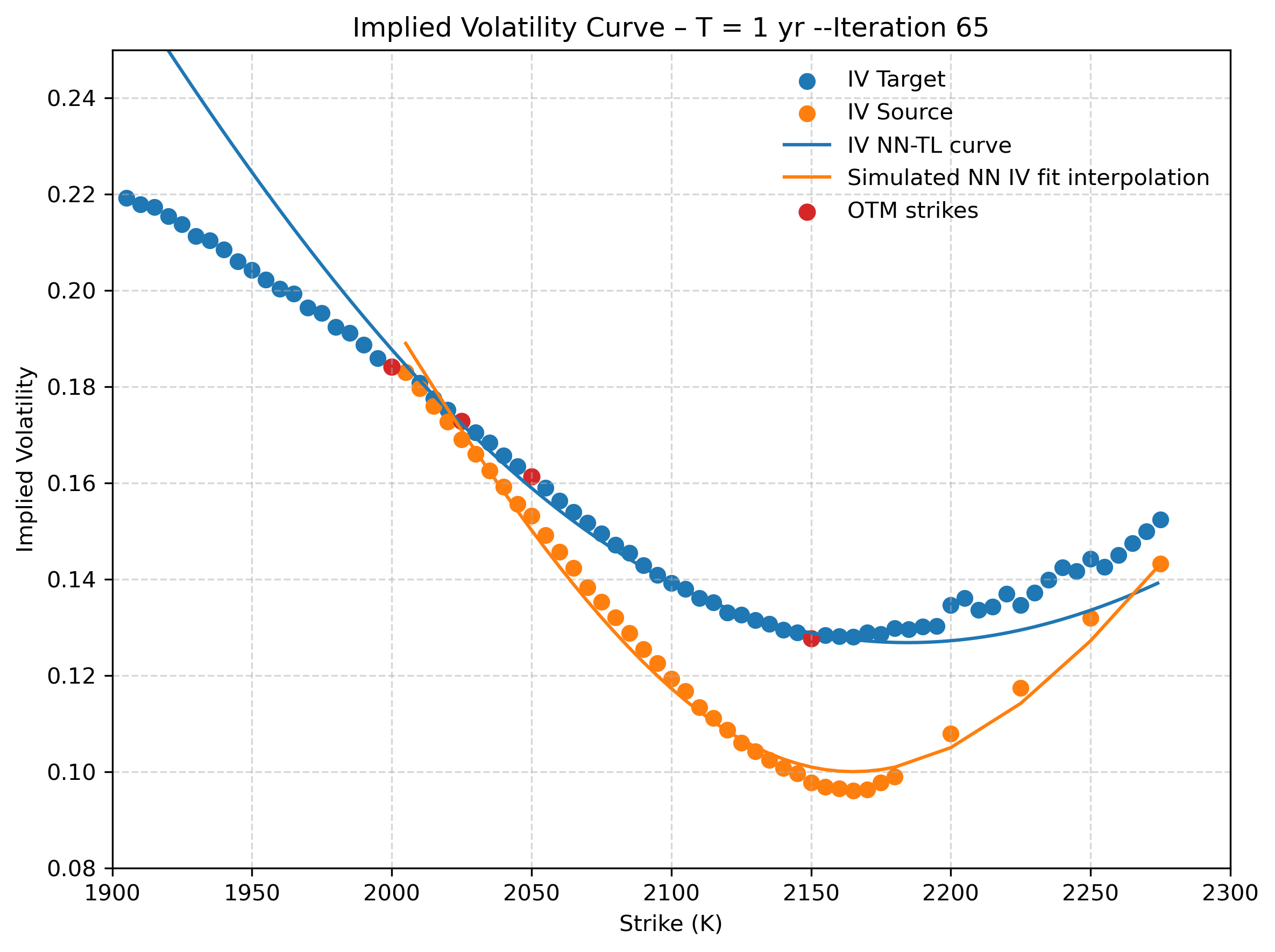}
\end{subfigure}

\caption{Second step recovery (Scenario 1) - Target Deep-LSE Fit. The model only sees the illiquid (red) quotes. The blue dots are the true implied volatility quotes that the Deep-LSE recovers (blue solid line). The solid orange and red curves represent the estimated IV curve of the first step.}
\label{fig:train_target_empa}
\end{minipage}
\end{figure}

Now, the Deep-LSE model receives as input the illiquid option quotes (the three red points) from the observed SPX 2016 data. The three red points mimic the sparse quotes of an illiquid market, whereas in reality, one observes the full SPX implied volatility curve for 2016 (blue points). The blue points, therefore, represent the underlying true implied volatility surface, which we deliberately sparsify to emulate illiquid conditions. We observe how the model adapts the knowledge learned during the first phase to the target data points, effectively recovering the illiquid implied volatility curve and producing the blue solid curve.


\end{document}